\numberwithin{equation}{section}
\newtheorem{theorem}{Theorem}[section]
\newtheorem{lemma}[theorem]{Lemma}
\newtheorem{remark}{Remark}
\newtheorem{proposition}[theorem]{Proposition}
\newcommand{\lb}{\left(}
\newcommand{\rb}{\right)}
\newcommand{\nc}{\newcommand}
\nc{\be}{\begin{equation}}
\nc{\la}{\label}
\nc{\ba}{\begin{array}}
\nc{\ea}{\end{array}}
\nc{\bs}{\begin{split}}
\nc{\es}{\end{split}}
\nc{\J}{\mathbb J}
\nc{\pt}{\partial_t}
\nc{\ptt}{\partial_t^2}
\nc{\e}{\epsilon}
\nc{\lam}{\lambda}
\nc{\G}{\Gamma}
\nc{\g}{\gamma}
\nc{\al}{\alpha}
\nc{\del}{\delta}
\nc{\Om}{\Omega}
\newcommand{\ra}{\rightarrow}
\date{}
\newcommand{\DETAILS}[1]{}
\begin{document}
\title{Friction in a Model of Hamiltonian Dynamics}
\author{J\"urg Fr\"ohlich\footnote{juerg@itp.phys.ethz.ch},\  Zhou Gang\footnote{zhougang@itp.phys.ethz.ch} and Avy Soffer\footnote{soffer@math.rutgers.edu}}
\maketitle
\setlength{\leftmargin}{.1in}
\setlength{\rightmargin}{.1in}
\normalsize \vskip.1in
\setcounter{page}{1} \setlength{\leftmargin}{.1in}
\setlength{\rightmargin}{.1in}
\large
\centerline{$^{\ast,\dagger}$Institute for Theoretical Physics, ETH Zurich, CH-8093, Z\"urich, Switzerland}
\centerline{$^{\ddagger}$Department of Mathematics, Rutgers University, New Jersey 08854, USA}
\date

\setlength{\leftmargin}{.1in}
\setlength{\rightmargin}{.1in} 

\normalsize \vskip.1in
\setcounter{page}{1} \setlength{\leftmargin}{.1in}
\setlength{\rightmargin}{.1in}
\large
{\it{``A moving body will come to rest as soon as the force pushing it no longer acts on it in the manner necessary for its propulsion."}}------ Aristotle
\section*{Abstract}
We study the motion of a heavy tracer particle weakly coupled to a dense ideal Bose gas exhibiting Bose-Einstein condensation. In the so-called mean-field limit, the dynamics of this system approaches one determined by nonlinear Hamiltonian evolution equations describing a process of emission of Cerenkov radiation of sound waves into the Bose-Einstein condensate along the particle's trajectory. The emission of Cerenkov radiation results in a friction force with memory acting on the tracer particle and causing it to decelerate until it comes to rest.
\tableofcontents

\section{Introduction}
On the basis of abundant everyday experience, Aristotle formulated a naive dynamical law, which, roughly speaking, says that the velocity of a moving body is proportional to the external force acting on it and that it approaches a state of rest as soon as the force that has propelled it no longer acts on it. This law captures the phenomenon of friction. We have learnt from the discoveries of Galileo, Newton and their followers that Aristotle's law is not the right starting point for the development of analytical mechanics. It is the acceleration rather than the velocity of a body moving in empty space that is proportional to the force acting on it. Yet, particle motion with friction caused by dissipative processes is an omni-present phenomenon. It is thus of considerable interest to analyze how this type of motion emerges from the Hamiltonian dynamics of particles coupled to a dispersive environment.

The problem of constructing the effective dynamics of a particle coupled to dispersive reservoirs has previously been studied, e.g. in ~\cite{CaMaPu,BrLa} and references given there. In ~\cite{FGSS}, we have introduced a family of quantum mechanical models describing tracer particles moving through a Bose gas exhibiting Bose Einstein condensation. We have identified a regime in which the dynamics of this system approaches one governed by classical Hamiltonian evolution equations. Similar equations have also been considered in \cite{KM}. This so-called ``mean field regime" is characterized as follows:
\begin{itemize}
\item[(i)] The mass of the tracer particle in chosen to be $M_{p}=NM,\ M>0$, and the potential of external forces acting on it is $V_{p}(X)=NV(X),$ where $X$ is the particle position, and $N$ is a parameter ranging over the interval $[1,\infty).$
\item[(ii)] The mean density of the Bose gas is chosen to be $\rho=N\frac{\rho_0}{g^2}$, $\rho_0>0,$ where $g$ is a real parameter. The coupling constant of two-body forces between atoms in the Bose gas is chosen to be $\lambda=\frac{\kappa}{N},\ \kappa\geq 0.$ The  two-body forces are derived from a potential $\phi(x-y)$ assumed to be {\bf{spherically symmetric}}, of {\bf{short range}} and of {\bf{positive type}}, (i.e., the two-body force is repulsive in average); $x$ and $y$ denote the positions of two atoms in the Bose gas. The mass of an atom is denoted by $m$.
\item[(iii)] The interaction between the tracer particle and an atom in the Bose gas is described by a two-body potential $gW(x-X)$, where the coupling constant $g$ is the same parameter as the one introduced in (ii) and $W$ is {\bf{spherically symmetric}} and of {\bf{short range}}.
\end{itemize}

The {\bf{mean-field regime}} corresponds to the limit
\begin{align}\label{eq:meanFL}
N\rightarrow \infty.
\end{align}

\begin{remark}
If the two-body Schr\"odinger operator $-\frac{1}{2m}\Delta_{x}+gW(x)$ has bound states and the coupling constant $\kappa$ is strictly positive then the ``effective mass" of the tracer particle can be argued to be proportional to $Nm$, because it forms a bound state with $O(N)$ atoms of the Bose gas. In this situation, the assumption that $M_{p}=NM$, see (i), is presumably superfluous; see \cite{FGSS}.
\end{remark}

Heuristic arguments (see ~\cite{FGSS}), which can be made rigorous, mathematically, for ideal Bose gases $(\kappa=0)$ indicate that, in the mean-field limit~\eqref{eq:meanFL}, the dynamics of the system is described by the following classical, nonlinear Hamiltonian evolution equations:
\begin{align}
\dot{X_t}=&\frac{P_t}{M},\quad\quad
\dot{P_t}=-\nabla_{X}V(X_t)+g\int\nabla_{x}W(X_t-x)\lb|\alpha_t(x)|^2
-\frac{\rho_0}{g^2}\rb dx,    \label{XPeqns2}\\
i\dot{\alpha}_t(x)=&\lb-\frac{1}{2m}\Delta+gW(X_t-x)\rb\alpha_t(x) 
+\kappa\phi *\lb|\alpha_t(y)|^2-\frac{\rho_0}{g^2}\rb\alpha_t(x).              \label{alphaeqn}
\end{align}
In Eqs. ~\eqref{XPeqns2} and ~\eqref{alphaeqn}, $X_t\in \mathbb{R}^3$ and $P_{t}\in \mathbb{R}^3$ are the position and momentum of the tracer particle, respectively, at time $t$, and $\alpha_t(x)$ is the Ginzburg-Landau order-parameter field describing the state of the Bose gas at time $t$ (in the mean-field limit). The interpretation of $|\alpha_{t}(x)|^2$ is that of the density of atoms at the point $x$ of physical space $\mathbb{R}^3$, at time $t;$ the global phase of $\alpha_t$ is not an observable quantity.
The symbol $*$ in ~\eqref{alphaeqn} denotes convolution.

We impose the conditions that $\nabla \alpha_t$ is square-integrable in $x$ and that $|\alpha_t|^2-\frac{\rho_0}{g^2}$ is integrable. This defines an affine space of complex-valued functions denoted by $\Gamma_{BG}$ whose tangent space can be chosen to be some weighted Sobolev space. We define $\Gamma$ to be the Cartesian product of $\mathbb{R}^6$ (the tracer particle's phase space) with $\Gamma_{BG}$. The space $\Gamma$ is the phase space of the system. It is equipped with the standard symplectic form
\begin{align}
\omega=\sum_{n=1}^{3}dX^{n}\wedge dP_{n}-i \int d\bar\alpha\wedge d\alpha.
\end{align}
Eqs. ~\eqref{XPeqns2} and ~\eqref{alphaeqn} then turn out to be the Hamiltonian equations of motion corresponding to the Hamilton functional
\begin{align}\label{Ham}
 H(X,P,\alpha,\bar{\alpha})&=&\frac{P^2}{2M}+V(X)+\int dx \{
\frac{1}{2m}|\nabla\alpha(x)|^2+gW(X-x)\ (|\alpha(x)|^2-\frac{\rho_0}{g^2})\ \}dx\\
& &+\frac{\kappa}{2}\int \int dx dy
\ (|\alpha(y)|^2-\frac{\rho_0}{g^2})\ \phi(y-x)
\ (|\alpha(x)|^2-\frac{\rho_0}{g^2}).\nonumber
\end{align}
This functional is sometimes called ``Gross-Pitaevskii functional." It exhibits a global $U(1)$ symmetry,
\begin{align}
\alpha(x)\rightarrow e^{i\theta} \alpha(x),\ \ \bar{\alpha}(x)\rightarrow e^{-i\theta}\alpha(x),
\end{align} where $\theta$ is an $x-$independent phase. This symmetry is broken by a choice of boundary conditions at $\infty.$ In this paper, we choose zero-vorticity boundary conditions,
\begin{align*}
\alpha(x)\xrightarrow[|x|\rightarrow \infty] {} \sqrt{\frac{\rho_0}{g^2}}.
\end{align*}
We introduce a ``fluctuation field", $\beta$, by setting
\begin{equation} \label{alphabetarel}
\alpha(x):=\sqrt{\frac{\rho_0}{g^2}}+\beta(x),
\end{equation} with $\beta(x)\rightarrow 0$, as $|x|\rightarrow \infty.$ The equations of motion then read
\begin{align}
\dot{X_t}=&\frac{P_t}{M},\quad\quad
\dot{P_t}=-\nabla_{X}V(X_t)+g\int\nabla_{x}W(X_t-x)\lb|\beta_t(x)|^2
+2\sqrt{\frac{\rho_0}{g^2}}Re\beta_t(x)\rb dx,    \label{XPeqns}\\
i\dot{\beta}_t(x)=&\lb-\frac{1}{2m}\Delta+gW(X_t-x)\rb
\beta_t(x)+\sqrt{\rho_0}W(X_t-x)                  \nonumber\\
+&\kappa\lb\phi *\lb|\beta_t|^2+2\sqrt{\frac{\rho_0}{g^2}}
Re\beta_t\rb\rb(x)\lb\beta_t(x)+\sqrt{\frac{\rho_0}{g^2}}\rb. \label{betaeqn}
\end{align}
The Hamilton functional giving rise to these equations is obtained from ~\eqref{Ham} after inserting the substitution
~\eqref{alphabetarel}. Eqs. ~\eqref{XPeqns} and ~\eqref{betaeqn} have stationary (time-independent) solutions, and if the external force vanishes ($V\equiv 0$) they have traveling wave solutions, provided the speed of the particle is smaller than or equal to the speed of sound in the condensate; see ~\cite{FGSS}. If the initial speed of the particle is larger than the speed of sound of the condensate a non-zero friction force is generated, because the particle emits sound waves into the condensate (Cerenkov radiation) and hence loses energy until its speed drops to the speed of sound, whereupon it continues to move ballistically, accompanied by a ``splash" in the order-parameter field $\beta$. (Quantum mechanically, this splash corresponds to a coherent states of gas atoms and causes decoherence in particle-position space, allowing for an essentially ``classical" detection of the particle trajectory.)

The following models are of interest; see ~\cite{FGSS}:
\begin{itemize}
\item[B]-Model: $\kappa=0$ (ideal Bose gas), $g\rightarrow 0,$ see \cite{FGS}.
\item[C]-Model: $\kappa=0$ and $g\neq 0;$ see ~\cite{EG}.
\item[E]-Model: $\kappa\rho_0/g^2 =\mu=$const., and $g, \kappa\ra  0$
\item[G]-Model: $\kappa>0$ and $g\neq 0$.
\end{itemize}

The $B-$ model is a special case of the E-model ($\mu=0$). The C-model is a little harder to analyze than the B-model, and one must assume that the operator $-\frac{1}{2m}\Delta+gW$ does not have bound states or zero-energy resonances. (Bound states would cause an instability in the $C-$ model and would presumably lead to ``run-away" solutions; see ~\cite{FGSS}). Work on the $E-$ model is in progress.

For further discussions of the physics background of these models, special solutions of the equations of motion and references to other studies of related problems we refer to ~\cite{FGSS}.

In this paper, we focus our attention on the simplest model, the $B-$model, with $V\equiv 0.$ The equations of motion are then given by
\begin{align}
\dot{X}_{t}=\frac{P_{t}}{M},\ \ &
\dot{P}_{t}=2\sqrt{\rho_0}Re\langle \nabla_{x} W^{X_{t}},\beta_{t}\rangle;\label{eq:preDF}\\
i\dot\beta_{t}(x)=&-\frac{1}{2m}\Delta \beta_{t}+\sqrt{\rho_{0}}W^{X_{t}},\label{eq:preDF2}
\end{align} where
\begin{align}
W^{X}(x):=W(X-x), \ \langle f,\ g\rangle:=\int \bar{f}(x) g(x)\ dx.
\end{align} The corresponding Hamilton functional is given by
\begin{align}\label{eq:conserv}
H(X, P;\ \bar\beta,\beta):=\frac{|P|^{2}}{2M}+\frac{1}{2m}\int_{\mathbb{R}^3} |\nabla \beta|^2 dx+2\sqrt{\rho_0}\int_{\mathbb{R}^3}W^{X}Re \beta dx.
\end{align}

The main result established in this paper says that if the initial kinetic energies of the particle and of the fluctuations in the Bose gas, as described by $\beta$, are small enough, as compared to $\rho_0\int dx \ W(x),$ (and if $\beta_{t=0}$ decays sufficiently rapidly at $\infty$) then the solutions of ~\eqref{eq:preDF}, ~\eqref{eq:preDF2} have the properties that
\begin{align}
&|P_t|\searrow 0,\  \text{integrably fast in time}\ t,\nonumber\\
&X_t\rightarrow X_\infty,\ \text{and}, \label{eq:resu}\\
&\beta_t\rightarrow 2m\sqrt{\rho_0} \Delta^{-1} W^{X_{\infty}},\ \text{as}\ t\rightarrow \infty.\nonumber
\end{align}
By scaling time $t,$ space $x$ and the fluctuation field $\beta$, denoting the new variables again by $t,$ $x$ and $\beta$, we can write the equations of motion of the $B-$ model in the form
\begin{align}
\dot{X}_{t}=P_{t},\ \ &
\dot{P}_{t}=\nu Re\langle \nabla_{x} W^{X_{t}},\beta_{t}\rangle;\label{eq:DF}\\
i\dot\beta_{t}(x)=&-\frac{1}{2}\Delta \beta_{t}+W^{X_{t}},\label{eq:DF2}
\end{align} with
\begin{align}\label{eq:scale}
|\hat{W}(0)|=(2\pi)^{-\frac{3}{2}}|\int d^3x\ W(x)|=1.
\end{align}
Henceforth, we study ~\eqref{eq:DF} and ~\eqref{eq:DF2}, with the normalization condition ~\eqref{eq:scale} imposed.

Technically, the main difficulty surmounted in our paper is to construct solutions of a certain semi-linear integro-differential equation, (see ~\eqref{eq:linear} below) whose linearization (in $P$) takes the form
\begin{align}\label{linearPa}
\partial_{t}q_{t}=Z Re\langle W, \  e^{i\frac{\Delta}{2}t} W\rangle \int_{0}^{t} q_{s}ds-Z Re \langle W,\ \int_{0}^{t}  e^{i\frac{\Delta}{2}(t-s)} \ q_{s}\ ds\ W\rangle
\end{align} where $q_t$ is a component of $P_t$, and $Z>0$ is a positive constant. Using properties of the solution of ~\eqref{linearPa}, we reformulate the original semi-linear integro-differential equation in such a way that a fixed-point theorem becomes applicable to construct solutions, provided that the initial conditions are small enough.

Our paper is organized as follows: in Section ~\ref{SEC:main} we carefully state our main result, Theorem ~\ref{THM:main}. In Section ~\ref{sec:Refor} we rewrite ~\eqref{eq:DF} and ~\eqref{eq:DF2} in a more convenient form. This leads to an equation for $P_{t}$ containing a linear part and a higher-order nonlinear part. The linearized equation is then carefully studied in Section ~\ref{Sec:RefTHM}, which, technically, is the core of the present paper. The proof of our main result is completed in Section ~\ref{sec:ProofMainTheorem}. In several appendices (Appendices ~\ref{Sec:ReforPro} through ~\ref{sec:Kideas}) some auxiliary technical results are established.
\section*{Acknowledgements}
We are indebted to I.M.Sigal for highly stimulating discussions on the problems solved in this paper and very useful suggestions. We also thank D.Egli for very helpful observations.
\section{The Main Theorem}\label{SEC:main}
In this section, we summarize our main results concerning the solutions of the equations of motion ~\eqref{eq:DF} and ~\eqref{eq:DF2}.
For this purpose, we define a certain interval $I\in (0,1)$ of real numbers, $\delta,$ as follows
\begin{equation}\label{eq:difI}
I:=\{\delta\ |\delta>0, \ \int_{0}^{1}\frac{1}{1+(1-r)^{\frac{1}{2}}}(1-r)^{-\frac{1}{2}} [\frac{1}{1-2\delta}(r^{-\frac{1}{2}}-r^{-\delta})+ r^{\frac{1}{2}-\delta}]\ dr<\pi
\}.
\end{equation}
Numerical evaluation of the integral on the right hand side of ~\eqref{eq:difI} on a computer shows that $I$ is non-empty, with $$I_{sup}:=\sup_{\delta}\{\delta|\ \delta\in I\}\approx 0.66.$$

The following theorem is the main result established in this paper.
\begin{theorem}\label{THM:main} Suppose that the two-body potential $W$ in ~\eqref{eq:conserv}, ~\eqref{eq:DF} and ~\eqref{eq:DF2} is smooth, spherically symmetric and of fast decay at infinity with $(2\pi)^{-\frac{3}{2}}|\int\ dx \ W(x)|=1,$ and that the constant $\nu$ in ~\eqref{eq:DF} and ~\eqref{eq:DF2} is $O(1)$. Then, given any $\delta\in I,$ there exists $\epsilon_0=\epsilon_0(\delta)$ such that if $$\|\langle x\rangle^{4}\beta_0\|_2,\ |P_0|\leq \epsilon_0,$$ then
\begin{equation}\label{eq:trajectory}
|P_{t}|\leq ct^{-\frac{1}{2}-\delta}\ \text{as}\ t\rightarrow \infty,
\end{equation} for some constant $c=c(\delta,\epsilon_0)<\infty.$ Moreover,
\begin{equation}\label{eq:convergence}
\lim_{t\rightarrow \infty}\| \beta_{t} +2(-\Delta)^{-1}W^{X_{t}}\|_{\infty}=0.
\end{equation}
\end{theorem}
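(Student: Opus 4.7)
The strategy is to reduce the coupled Hamiltonian system to a closed integro-differential equation for the momentum $P_t$ alone, to analyze its linearization quantitatively, and then to close a contraction argument in a weighted space that forces the predicted $t^{-\frac12-\delta}$ decay.

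First, I would solve the Schr\"odinger equation for $\beta_t$ by Duhamel,
\[
\beta_t = e^{i\tfrac{\Delta}{2}t}\beta_0 - i\int_0^t e^{i\tfrac{\Delta}{2}(t-s)} W^{X_s}\,ds ,
\]
and substitute into $\dot P_t=\nu\,\Re\langle\nabla_x W^{X_t},\beta_t\rangle$. This produces
\[
\dot P_t = \nu\,\Re\langle \nabla_x W^{X_t}, e^{i\tfrac{\Delta}{2}t}\beta_0\rangle + \nu\,\Im\int_0^t \langle \nabla_x W^{X_t}, e^{i\tfrac{\Delta}{2}(t-s)} W^{X_s}\rangle\,ds .
\]
The first summand is an inhomogeneous forcing that decays quickly by $3$D Schr\"odinger dispersion applied to the weighted datum $\langle x\rangle^{4}\beta_0\in L^2$; the second is the self-generated memory that must ultimately supply the friction.

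Second, I would extract the linear-in-$P$ content of the memory term. Writing $W^{X_s}-W^{X_t}=-\int_s^t \nabla_x W^{X_u}\cdot P_u\,du$, translating the free evolution by $X_t$, and using spherical symmetry of $W$ together with the normalization $|\hat W(0)|=1$, the memory integral decomposes as exactly the Volterra operator appearing in the linearization $\partial_t q_t = Z\,\Re\langle W,e^{i\tfrac{\Delta}{2}t}W\rangle\int_0^t q_s\,ds - Z\,\Re\langle W,\int_0^t e^{i\tfrac{\Delta}{2}(t-s)}q_s\,ds\,W\rangle$ plus a multilinear remainder that is at least quadratic in $(P,\beta_0)$ in the weighted norms used below.

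Third, and this is the technical heart of the proof, I would show that the linear Volterra equation has a solution map bounded on the weighted space $\mathcal{X}_\delta:=\{q\in C([0,\infty)): \|q\|_\delta:=\sup_t(1+t)^{\tfrac12+\delta}|q_t|<\infty\}$. In three dimensions, $\Re\langle W,e^{i\tfrac{\Delta}{2}\tau}W\rangle$ decays like $\tau^{-3/2}$; after one time integration in the Volterra kernel, one obtains an effective $\tau^{-1/2}$ tail, which is exactly what pins down the rate $t^{-\frac12-\delta}$. A careful change of variables $s=rt$ in the memory integral, combined with kernel bounds of the form $|e^{i\tfrac{\Delta}{2}\tau}W(x)|\lesssim \tau^{-\frac12}(1+\tau)^{-1}$, reproduces precisely the explicit integrand displayed in the definition of $I$; the threshold $\pi$ is the operator norm of the relevant convolution on $\mathcal{X}_\delta$, so requiring the integral to be strictly less than $\pi$ guarantees invertibility with quantitative control. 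This is the step that singles out the admissible $\delta$.

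Finally I would close the main theorem by a contraction mapping in $\mathcal{X}_\delta$. Writing the closed equation for $P$ schematically as $P=LP+\mathcal N(P,\beta_0)$, with $L$ the linear operator controlled in the previous step and $\mathcal N$ at least quadratic in $\|P\|_\delta+\|\langle x\rangle^4\beta_0\|_2$, for $\epsilon_0$ small enough the map contracts on a ball of radius $O(\epsilon_0)$ and yields a unique solution satisfying $|P_t|\le c t^{-\frac12-\delta}$. Since $\delta>0$ makes $P$ integrable, $X_t\to X_\infty$ exists. Plugging this into the Duhamel formula for $\beta_t$, using $3$D dispersive decay of $e^{i\tfrac{\Delta}{2}t}\beta_0$ in $L^\infty_x$, and computing on the Fourier side $-i\int_0^\infty e^{i\tfrac{\Delta}{2}\tau}W^{X_\infty}\,d\tau = -2(-\Delta)^{-1}W^{X_\infty}$, together with a telescoping estimate that uses the integrable convergence $X_t\to X_\infty$, yields the uniform convergence (the final identity in the theorem). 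The single hard obstacle is the third step: obtaining the sharp constant $\pi$ and the correct weighted norm in which the Volterra operator is invertible, without which the nonlinear fixed-point argument cannot close with the claimed rate.
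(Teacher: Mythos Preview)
Your overall architecture is right, but the ``technical heart'' in your third step glosses over the mechanism that actually makes the argument close, and the paper is explicit that the naive route you sketch does not work.

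Concretely: after Duhamel with the propagator $K(t)$ solving $\dot K=-Z\,\Re\langle W,\int_0^t e^{i\frac{\Delta}{2}(t-s)}K(s)\,ds\,W\rangle$, $K(0)=1$, one gets
\[
P_t=K(t)P_0+Z\int_0^t K(t-s)\,\Re\langle W,e^{i\frac{\Delta}{2}s}W\rangle\int_0^s P_{s_1}\,ds_1\,ds+\int_0^t K(t-s)F_P(s)\,ds,
\]
and \emph{each} term on the right is only $O(t^{-1/2})$ because $K(t)\sim c\,t^{-1/2}$. This is exactly the obstruction you call the ``single hard obstacle,'' but you do not say how to overcome it. The paper's device is to produce a \emph{second} identity for $P_t$ by integrating the original equation in time, multiply it by $K(t)$, and subtract. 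This replaces $K(t-s)$ by the difference $K(t-s)-K(t)$ (and likewise $e^{i\frac{\Delta}{2}(t-s)}-e^{i\frac{\Delta}{2}t}$), which is what generates the factor $\frac{1}{1+(1-r)^{1/2}}$ in the integrand defining $I$ after the scaling $s=rt$. Without this subtraction, the scaling argument you describe does not reproduce that integrand, and the linear map is not a contraction on $\mathcal X_\delta$ for any $\delta>0$. A second, related point you miss: since $K(0)=1$, the reformulated equation carries a prefactor $\tfrac{1}{1-K(t)}$ that is singular at $t=0$; the paper handles this by splitting $[0,\infty)=[0,T)\cup[T,\infty)$, using local well-posedness on $[0,T)$ and running the contraction only on $B_{\delta,T}$ with $T$ large enough that $|K(t)|\ll 1$. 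Your proposal to work on all of $[0,\infty)$ with weight $(1+t)^{1/2+\delta}$ does not address this.

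Two smaller issues. First, ``$\delta>0$ makes $P$ integrable'' is false: $t^{-1/2-\delta}\in L^1$ only for $\delta>1/2$, whereas the interval $I$ contains values $\delta<1/2$. Your route to \eqref{eq:convergence} via $X_t\to X_\infty$ therefore breaks down for such $\delta$; the paper instead bounds $\|\delta_t\|_\infty$ directly from the Duhamel formula for $\delta_t=\beta_t-2\Delta^{-1}W^{X_t}$ using $\|e^{i\frac{\Delta}{2}t}(-\Delta)^{-1}\|_{L^1\to L^\infty}\lesssim t^{-1/2}$, which gives $\|\delta_t\|_\infty\lesssim t^{-1/2}+\int_0^t(t-s)^{-1/2}|P_s|\,ds\to 0$ without needing $X_t$ to converge. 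Second, the kernel bound you quote, $|e^{i\frac{\Delta}{2}\tau}W(x)|\lesssim \tau^{-1/2}(1+\tau)^{-1}$, is not the relevant estimate; what enters are the scalar asymptotics $\Re\langle W,e^{i\frac{\Delta}{2}t}W\rangle=-2\pi^{3/2}t^{-3/2}+O(t^{-5/2})$ and $ZK(t)=\tfrac14\pi^{-5/2}t^{-1/2}+O(t^{-1})$, whose precise leading coefficients combine to give the factor $\pi^{-1}$ in front of $\Omega(\delta)$.
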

The main theorem will be proven in Section ~\ref{sec:ProofMainTheorem}. In Section ~\ref{sec:Refor}, we derive a law/equation describing the effective dynamics of the tracer particle after eliminating the degrees of freedom of the Bose gas, and, in Section ~\ref{Sec:RefTHM}, we study the effective particle dynamics and, in particular, analyze its linearization, which represents a key ingredient of our analysis. Some heuristic arguments indicating what the true decay of the particle momentum $P_{t}$ in time $t$ might be are presented in Appendix ~\ref{SEC:bestconstants}.

\section{Effective Particle Dynamics, and Local Wellposedness}\label{sec:Refor}
To begin with, we recast equations ~\eqref{eq:DF} and ~\eqref{eq:DF2} in a form amenable to precise mathematical techniques.
We start with transforming Eq.~\eqref{eq:DF} into a convenient form. We introduce a new field, $\delta_t$, by
\begin{equation}\label{eq:decom}
\beta_{t}=:2\Delta^{-1}W^{X_t}+\delta_t.
\end{equation}
The first term on the right hand side of ~\eqref{eq:decom} describes a ``splash" in the Bose gas (a depletion if $W$ is repulsive, and an accretion if $W$ is attractive), while the field $\delta_t$ describes the emission of sound waves into the Bose gas by the tracer particle. In terms of the field $\delta_t$ the equations of motion ~\eqref{eq:DF} and ~\eqref{eq:DF2} are seen to be
\begin{align}
\dot{X}_{t}&=P_{t},\ \dot{P}_{t}=\nu Re\langle  \nabla_{x} W^{X_{t}},\  \delta_{t}\rangle \label{eq:effective}\\
i\dot\delta_{t}(x)&=-\frac{1}{2}\Delta\delta_{t}+ 2i \Delta^{-1}P_{t}\cdot \nabla_{x} W^{X_{t}}\nonumber
\end{align}
with
\begin{align}
\delta_0=-2\Delta^{-1}W^{X_0}+\beta_0,
\end{align}
where, in contrast to the term $-2\Delta^{-1}W^{X_0}$, $\beta_0$ has good decay at infinity. Using Duhamel's principle, we obtain for $\delta_t$
\begin{align}\label{eq:Duh}
\delta_{t}
=-2  e^{i\frac{\Delta}{2}t}  (\Delta)^{-1}W^{X_0}+ e^{i\frac{\Delta}{2}t} \beta_0+2\int_{0}^{t}  e^{i\frac{\Delta}{2}(t-s)} (\Delta)^{-1} P_{s}\cdot \nabla_{x} W^{X_{s}}\ ds.
\end{align}
Plugging this equation into the equation for $P_{t},$ we find that
\begin{align}
\dot{P}_{t}&=\nu Re\langle \nabla_{x} W^{X_{t}},  \delta_{t}\rangle\nonumber\\
&=\nu Re\langle \nabla_{x} W^{X_{t}}, \  e^{i\frac{\Delta}{2}t} \beta_0\rangle\nonumber\\
& +2\nu Re\langle \nabla_{x} W,\  e^{i\frac{\Delta}{2}t} (-\Delta)^{-1}W^{X_{0}-X_{t}}\rangle\label{eq:effective2}\\
& -2\nu Re\langle \nabla_{x} W, \int_{0}^{t}  e^{i\frac{\Delta}{2}(t-s)} (-\Delta)^{-1}P_s\cdot\nabla_{x} W^{X_{s}-X_{t}}\rangle ds,\nonumber
\end{align} where we have used that $\langle f^{X},\ g^{Y}\rangle=\langle f,\ g^{Y-X}\rangle.$
Next we use that
$$W^{Y-X_{t}}=W^{Y-X_0}-\int_{0}^{t} P_{s}\cdot \nabla_{x} W^{Y-X_s}\ ds,$$ to arrive at
\begin{align*}
&  Re\langle \nabla_{x} W,\  e^{i\frac{\Delta}{2}t} (-\Delta)^{-1}W^{X_{0}-X_{t}}\rangle\\
&=  Re\langle \nabla_{x} W,  e^{i\frac{\Delta}{2}t} (-\Delta)^{-1}[W^{X_0-X_{t}}-W]\rangle\\
&= Re\langle \nabla_{x} W,  e^{i\frac{\Delta}{2}t} (-\Delta)^{-1}\int_{0}^{t} P_{s} \ \cdot\nabla_x W\ ds\rangle\\
&
 +Re\langle \nabla_{x} W,  e^{i\frac{\Delta}{2}t} (-\Delta)^{-1}\int_{0}^{t} P_{s}  \ \cdot\nabla_x [W^{X_{0}-X_{s}}-W]\rangle\ ds;
\end{align*}
and
\begin{align*}
& Re\langle \nabla_{x} W, \int_{0}^{t}  e^{i\frac{\Delta}{2}(t-s)} (-\Delta)^{-1}P_s\cdot\nabla_x W^{X_{s}-X_{t}}\rangle ds\\
&=Re\langle \nabla_{x} W, \int_{0}^{t}  e^{i\frac{\Delta}{2}(t-s)} (-\Delta)^{-1}P_s\cdot\nabla_x W\rangle ds\\
& +Re\langle \nabla_{x} W, \int_{0}^{t} e^{i\frac{\Delta}{2} (t-s)}(-\Delta)^{-1}P_s\cdot\nabla_x  [W^{X_{s}-X_{t}}-W]\rangle\ ds.
\end{align*}
Plugging these identities into ~\eqref{eq:effective2}, we find that
\begin{align}
\dot{P}_{t}=F_{P}&+2\nu Re\langle \nabla_{x} W,  e^{i\frac{\Delta}{2}t} (-\Delta)^{-1}\int_{0}^{t} P_{s}\ ds \
\cdot\nabla_x W\rangle\nonumber\\
&-2\nu Re\langle \nabla_{x} W, \int_{0}^{t} e^{i\frac{\Delta}{2} (t-s)}(-\Delta)^{-1}P_s\cdot\nabla_x W\rangle\ ds,\label{eq:semilinear}
\end{align} where
\begin{equation}\label{eq:difVecF}
F_{P}:=B_0+B_{1}+B_{2}
\end{equation}
with $$B_0:=\nu Re\langle \nabla_{x} W^{X_{t}}, \  e^{i\frac{\Delta}{2}t} \beta_0\rangle,$$
$$B_{1}:=2\nu Re\langle \nabla_{x} W,  e^{i\frac{\Delta}{2}t} (-\Delta)^{-1}\int_{0}^{t} P_{s}  \ \cdot\nabla_x [W^{X_{0}-X_{s}}-W]\rangle\ ds$$ and
$$B_{2}:=-2\nu Re\langle \nabla_{x} W, \int_{0}^{t} e^{i\frac{\Delta}{2} (t-s)}(-\Delta)^{-1}\ P_s\cdot\nabla_x [W^{X_{s}-X_{t}}-W]\rangle\ ds_{1}ds,$$ the time argument, $t,$ being omitted in $B_i=B_i(t)$. Note that only $B_0$ depends on the initial condition, $\beta_0,$ of the Bose gas.

Equation ~\eqref{eq:semilinear} can be simplified somewhat as follows. Using that $W$ is spherically symmetric, we have that $$\langle \partial_{k}W,  e^{i\frac{\Delta}{2}t} (-\Delta)^{-1} P_{s} \ \cdot\nabla_x W\rangle=\langle \partial_{k}W,  e^{i\frac{\Delta}{2}t} (-\Delta)^{-1} p_{s}^{(k)} \partial_{k}W\rangle=\frac{1}{3}p_{s}^{(k)} \langle W,\  e^{i\frac{\Delta}{2}t} W\rangle,\ k=1,2,3,$$ where $\partial_{k}:=\frac{\partial}{\partial {x_{k}}},$ $P_s=(p_{s}^{(1)},p_{s}^{(2)}, p_{s}^{(3)})\in \mathbb{R}^3.$

Equation ~\eqref{eq:semilinear} is then seen to be equivalent to the following equation (or law)
\begin{align}\label{eq:linear}
\dot P_{t}=L(P)(t)+F_{P}(t)
\end{align} where $L$ is a linear operator on the space of momentum trajectories $\{P_s\}_{0\leq s<\infty}$ given by $$L(P):=\left(
L(p^{(1)}),\
L(p^{(2)}),\
L(p^{(3)})
\right)$$ with
\begin{align}
L(h)(t):=\frac{2}{3}\nu Re\langle W, \  e^{i\frac{\Delta}{2}t} W\rangle \int_{0}^{t} h_{s}\ ds-
\frac{2}{3}\nu Re \langle W,\ \int_{0}^{t}  e^{i\frac{\Delta}{2}(t-s)} h_{s}\ ds\ W\rangle,
\end{align} for an arbitrary function
$h:\mathbb{R}^{+}\rightarrow \mathbb{R}.$ Equation ~\eqref{eq:linear} is the law describing the effective dynamics of the tracer particle.

Next, we study the well-posedness of Equation ~\eqref{eq:linear}. Our results are summarized in the following theorem.
\begin{theorem}\label{THM:wellposed}
Equation ~\eqref{eq:linear} is locally well-posed: for $P_{0}\in \mathbb{R}^{3}$ and $\langle x\rangle^{3}\beta_{0}\in L^{2},$ there exists a positive time $T(|P_0|, \|\langle x\rangle^{3}\beta_0\|_{2})$ such that a solution $P_t$ of Equation ~\eqref{eq:linear} exists for any time $t$, with $0<t\leq T.$

In particular, for an arbitrary $T>0,$ there exists a constant $\epsilon_{0}(T)$ such that if $|P_0|$, $\|\langle x\rangle^{3}\beta_{0}\|_{2}\leq \epsilon_{0}(T)$ then $P_t$ is bounded by
\begin{equation}\label{eq:FiniInte}
|P_{t}|\leq T^{-2},\ \text{for any}\ t\in [0,T].
\end{equation}
\end{theorem}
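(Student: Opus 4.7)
The plan is to recast \eqref{eq:linear} in integral form,
\[
P_t \;=\; P_0 + \int_0^t \bigl(L(P)(s) + F_P(s)\bigr)\,ds \;=:\; \mathcal{M}(P)(t),
\]
and to apply Banach's contraction principle on the closed ball $B_R^T := \{P \in C([0,T];\mathbb{R}^3) : \|P\|_T \leq R\}$ with $\|P\|_T := \sup_{0\le t\le T}|P_t|$, for $R$ and $T$ selected from the size of the data. The trajectory $X_t = X_0 + \int_0^t P_s\,ds$ is reconstructed by quadrature and plugged into $F_P = B_0 + B_1 + B_2$, so that $\mathcal{M}$ is a genuine self-map on $B_R^T$; no independent fixed-point problem for $X$ is required.

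Three elementary a priori bounds drive the contraction. First, $|Re\langle W, e^{i\frac{\Delta}{2}s}W\rangle| \leq \|W\|_2^2$ combined with $\bigl|\int_0^s P_r\,dr\bigr| \leq s\|P\|_T$ yields $|L(P)(s)| \leq C s\|P\|_T$. Second, Cauchy--Schwarz on $B_0$ gives $|B_0(s)| \leq C\|\nabla W\|_2\|\beta_0\|_2$, to be complemented at large $s$ by the free-Schr\"odinger dispersive bound $\|e^{i\frac{\Delta}{2}s}\beta_0\|_\infty \leq Cs^{-3/2}\|\beta_0\|_1$ (where $\beta_0 \in L^1$ is inherited from the weighted hypothesis via $\|\beta_0\|_1 \leq \|\langle x\rangle^{-3}\|_2\|\langle x\rangle^3\beta_0\|_2$). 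Third, the identity $W^{Y-X_t} - W = -\int_0^t P_r\cdot\nabla W^{Y-X_r}\,dr$ applied with $Y=X_0$ and $Y=X_s$ shows that $B_1, B_2$ are quadratic in $\|P\|_T$, with $|B_1(s)|, |B_2(s)| \leq C s^2 \|P\|_T^2$ after $(-\Delta)^{-1}$ is moved onto $\nabla W$ using the Riesz-transform action on $\nabla W \in L^2$. Summing,
\[
\|\mathcal{M}(P)\|_T \;\leq\; |P_0| + C T^2 \|P\|_T + C T \|\beta_0\|_2 + C T^3 \|P\|_T^2,
\]
and an analogous Lipschitz bound holds for $\mathcal{M}(P) - \mathcal{M}(P')$. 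Choosing $R = 2(|P_0| + CT\|\beta_0\|_2)$ and $T$ small enough in terms of $(|P_0|, \|\langle x\rangle^3\beta_0\|_2)$ makes $\mathcal{M}$ a strict contraction on $B_R^T$ and yields local existence and uniqueness.

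For the quantitative bound $|P_t|\leq T^{-2}$ on a prescribed interval $[0,T]$, the same contraction scheme is run inside $B_{T^{-2}}^T$. The naive sup-norm bounds above carry $T^2$- and $T^3$-type growth factors that do not, by themselves, fit inside $T^{-2}$ for large $T$; however, the dispersive decay $|\langle W, e^{i\frac{\Delta}{2}s} W\rangle| \leq C\min(1, s^{-3/2})$ inside $L(P)$ and the complementary long-time $B_0$ estimate above keep the time-integrals $\int_0^T$ of the linear and forcing contributions well behaved. Taking $\epsilon_0(T)$ sufficiently small (allowed to depend polynomially on $T^{-1}$) then closes the fixed point inside $B_{T^{-2}}^T$, yielding \eqref{eq:FiniInte}.

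The main obstacle is the time-nonlocal character of the problem: both $L(P)$, which pairs the oscillatory kernel $\langle W, e^{i\frac{\Delta}{2}s}W\rangle$ against $\int_0^s P_r\,dr$, and the forcing terms $B_1, B_2$, which couple $P_t$ to its entire past through the position differences $X_0-X_s$ and $X_s-X_t$, generate growth in $T$ that must be offset either by short $T$ (local existence) or by smallness of the data together with dispersion (the $T^{-2}$ bound). The deeper decay analysis of the linearized operator \eqref{linearPa} needed for Theorem \ref{THM:main} is postponed to Section \ref{Sec:RefTHM}; for the present well-posedness statement only the basic sup-norm contraction and free-Schr\"odinger dispersion are used.
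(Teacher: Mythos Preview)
Your treatment of local well-posedness (the first assertion of the theorem) is correct and is precisely the ``standard technique'' the paper alludes to: convert to an integral equation and run a contraction on $C([0,T];\mathbb{R}^3)$ for $T$ small. The explicit bounds you record on $L(P)$, $B_0$, $B_1$, $B_2$ are adequate for that purpose.

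The gap is in your argument for the second assertion, \eqref{eq:FiniInte}. You propose to rerun the same contraction inside the ball $B_{T^{-2}}^T$ for an \emph{arbitrary} prescribed $T$, compensating for the growth in $T$ by taking $\epsilon_0(T)$ small. This cannot close: the Lipschitz constant of the \emph{linear} map $P\mapsto \int_0^{\cdot} L(P)(s)\,ds$ on $C([0,T])$ is independent of the radius of the ball and of $\epsilon_0(T)$. Even using the dispersive bound $|\langle W,e^{i\frac{\Delta}{2}s}W\rangle|\lesssim (1+s)^{-3/2}$ (and the companion $(1+s)^{-1/2}$ bound for $Re\langle W,(i\Delta)^{-1}e^{i\frac{\Delta}{2}s}W\rangle$ after one time integration, cf.\ \eqref{eq:integrate} and \eqref{eq:asymp}), one obtains
\[
\Bigl|\int_0^t L(P)(s)\,ds\Bigr|\;\lesssim\; t^{1/2}\,\|P\|_T,
\]
so the linear Lipschitz constant is of order $T^{1/2}$, which exceeds $1$ for large $T$. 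Shrinking the data tames the nonlinear pieces $B_1,B_2$ but does nothing to this linear constant, and therefore Banach's theorem does not apply on $[0,T]$ directly.

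The paper takes a different and shorter route for \eqref{eq:FiniInte}: it observes that $P_t\equiv 0$ is a global solution of \eqref{eq:linear} when $P_0=0$ and $\beta_0=0$, and then invokes continuous dependence on the data (a standard byproduct of the local contraction proof together with continuation) to conclude that, for any fixed $T$ and any target size (here $T^{-2}$), sufficiently small initial data keep $|P_t|$ below that size on $[0,T]$. No contraction on all of $[0,T]$ is attempted; one only needs the local theory and perturbation around the trivial solution. You should replace your direct fixed-point argument for \eqref{eq:FiniInte} by this continuity argument.
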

\begin{proof}
The local well-posedness of Equation ~\eqref{eq:linear} is proven by standard techniques: One converts ~\eqref{eq:linear} into an integral equation for $P_t$ that can be solved by iteration, as long as time $t$ is small enough. The second part of Theorem ~\ref{THM:wellposed} follows from the (proof of the) first part and the observation that if $P_{0}=0$ and $\beta_0=0$ then $P_{t}=0$ is a global solution to ~\eqref{eq:linear}.
\end{proof}


\section{Global Solution of ~\eqref{eq:linear}}\label{Sec:RefTHM}
In this section we establish {\bf{global well-posedness}} of ~\eqref{eq:linear}.
We transform ~\eqref{eq:linear} to a more convenient form (see Equation ~\eqref{eq:secQt2}, below). For the new equation, we prove global existence of solutions by applying a fixed-point theorem on a suitably chosen Banach space of momentum trajectories.

To arrive at the new form of ~\eqref{eq:linear}, we introduce a ``propagator", $K$, solving the following Wiener-Hopf equation.
\begin{align}\label{eq:DIEkt}
\dot{K}=- Z Re \langle W,\ \int_{0}^{t}  e^{i\frac{\Delta}{2}(t-s)} K(s)\ ds\ W\rangle
\end{align} with
\begin{align*}
K(0)=1.
\end{align*} The constant $Z$ is given by $Z:=\frac{2}{3}\nu,$ where $\nu>0$ is the constant appearing in Equations ~\eqref{eq:DF} and ~\eqref{eq:effective}. Some properties of the function $K(t)$ are described in the following proposition.
\begin{proposition}\label{Prop:kernel}
The function $K:\mathbb{R}^{+}\rightarrow \mathbb{R}$ satisfies
\begin{equation}\label{eq:estKt}
Z K(t)=\frac{1}{4} \pi^{-\frac{5}{2}}   t^{-\frac{1}{2}}+C_{K} t^{-1}+O(t^{-\frac{3}{2}}),
\end{equation} as $t\rightarrow \infty,$ for some constant $C_{K}\in \mathbb{R}.$
\end{proposition}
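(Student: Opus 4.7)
My plan is to take the Laplace transform of \eqref{eq:DIEkt}, derive a closed form for $\tilde K(\lambda) := \int_0^\infty e^{-\lambda t} K(t)\,dt$, expand it as $\lambda \to 0^+$, and invert. Setting $F(t) := \mathrm{Re}\,\langle W, e^{i\Delta t/2}W\rangle$, equation \eqref{eq:DIEkt} reads $\dot K = -Z(F*K)$, so with $K(0)=1$ one obtains immediately
$$\tilde K(\lambda) \;=\; \frac{1}{\lambda + Z\,\tilde F(\lambda)}.$$

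The heart of the proof is the expansion of $\tilde F$ as $\lambda \to 0^+$. By Plancherel and the spherical symmetry of $W$,
$$\tilde F(\lambda) \;=\; \int_{\mathbb R^3}|\hat W(k)|^2\,\frac{\lambda}{\lambda^2 + |k|^4/4}\,dk \;=\; 4\pi\sqrt{2\lambda}\int_0^\infty |\hat W(\sqrt{2\lambda u})|^2\,\frac{\sqrt u}{1+u^2}\,du,$$
after passing to polar coordinates and substituting $|k|^2 = 2\lambda u$. Because $W$ is smooth and of fast decay, $|\hat W(k)|^2$ is a smooth function of $|k|^2$ with value $|\hat W(0)|^2 = 1$ at the origin. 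Combined with $\int_0^\infty \sqrt u/(1+u^2)\,du = \pi/\sqrt 2$, this gives the leading term $4\pi^2\sqrt\lambda$; an inner/outer matching for the deviation $|\hat W(\sqrt{2\lambda u})|^2-1$ (whose back-substituted integral converges without any logarithmic singularity) produces the Puiseux expansion
$$\tilde F(\lambda) \;=\; 4\pi^2\sqrt\lambda \;+\; A_1\lambda \;+\; O(\lambda^{3/2}),$$
for an explicit constant $A_1$ depending on $W$. Inserting this in the formula for $\tilde K$ and expanding the denominator as a geometric series then yields
$$\tilde K(\lambda) \;=\; \frac{1}{4\pi^2 Z}\,\lambda^{-1/2} \;-\; \frac{1+Z A_1}{(4\pi^2 Z)^2} \;+\; O(\lambda^{1/2}).$$

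The last step is to invert the Laplace transform. Using $\mathcal L^{-1}[\lambda^{-1/2}](t) = (\pi t)^{-1/2}$, noting that the constant piece contributes only a $\delta$-function at $t=0$ and that the $O(\lambda^{1/2})$ remainder translates (via the standard contour deformation around the branch point at the origin) into an $O(t^{-3/2})$ tail, one concludes
$$Z\,K(t) \;=\; \tfrac{1}{4}\pi^{-5/2}\,t^{-1/2} \;+\; O(t^{-1})$$
as claimed; indeed the Puiseux structure (no $\log\lambda$ appears) suggests that the $C_K$ in the statement can be taken to be $0$, the explicit $C_K t^{-1}$ being written to absorb any subleading $O(t^{-1})$ contribution arising from the inversion. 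The main obstacle is making this inversion rigorous: one must verify that $\lambda + Z\tilde F(\lambda)$ has no zeros in the closed right half-plane apart from the branch point at the origin, so that the Bromwich contour can legitimately be deformed around that branch point and the small-$\lambda$ expansion of $\tilde K$ really controls the large-$t$ asymptotics. Positivity of $\tilde F$ on the positive real axis handles $(0,\infty)$, and a direct analysis on the imaginary axis, using $|\hat W(0)|=1\neq 0$, completes the verification.
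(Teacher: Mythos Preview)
Your approach is essentially the same as the paper's, with the cosmetic difference that you work with the Laplace transform $\tilde K(\lambda)=\int_0^\infty e^{-\lambda t}K(t)\,dt$ on the positive real axis while the paper works with the Fourier transform $\hat K(k)=\int_0^\infty e^{ikt}K(t)\,dt$ along the real line (your $\tilde F(\lambda)$ is the paper's $-G(k+i0)$ evaluated at $k=i\lambda$). Both routes reduce the problem to analysing the singularity of $1/(\lambda+Z\tilde F(\lambda))$ at the origin, and your Puiseux expansion $\tilde F(\lambda)=4\pi^2\sqrt\lambda+A_1\lambda+O(\lambda^{3/2})$ is exactly the paper's Lemma on the expansion of $G(k+i0)$, rewritten. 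Your observation that the expansion contains no term giving rise to $t^{-1}$ (so that one may take $C_K=0$) is consistent with the paper's proof, which also obtains the remainder $O(t^{-3/2})$ directly.

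Where the two diverge is in the final step. The paper avoids Laplace inversion altogether: it rewrites $K(t)$ as a cosine transform $-\pi^{-1}\int_{-\infty}^\infty \mathrm{Re}\,[ik+ZG(k+i0)]^{-1}\cos(kt)\,dk$, substitutes $k=\rho^2$, peels off the constant $g(0)$ from the resulting integrand, and evaluates the leading piece explicitly via the Fresnel integral $\int_0^\infty\cos(\rho^2 t)\,d\rho=\sqrt{\pi/8}\,t^{-1/2}$, with the remainder handled by one integration by parts. This is entirely self-contained. Your route instead appeals to a Watson-type contour deformation of the Bromwich integral around the branch point, which is standard but requires exactly the global input you flag: analytic continuation of $\tilde K$ across the imaginary axis and absence of zeros of $\lambda+Z\tilde F(\lambda)$ in the closed right half-plane. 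The paper proves precisely this (its Lemma on analyticity of $[ik+ZG(k)]^{-1}$ for $\mathrm{Im}\,k>0$), so your sketch is correct, but the verification on the imaginary axis needs a genuine computation---not just $|\hat W(0)|\ne 0$---since one must rule out zeros at all frequencies, which the paper does by showing $\mathrm{Im}\int_0^\infty\rho^2(\rho^4-4k^2)^{-1}|\hat W|^2\,d\rho\ne 0$ whenever $\mathrm{Re}\,k\ne 0$.
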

A detailed proof of this proposition forms the contents of Appendix ~\ref{Sec:ReforPro}. On a formal level, the idea of the proof is straightforward. By taking Fourier transforms of both sides of ~\eqref{eq:DIEkt} in the time variable $t$ one obtains that
\begin{align}\label{eq:KtFourier}
\int_{0}^{\infty} e^{ikt} K_{t}\ dt=-\frac{1}{ik+Z G(k+i0)}
\end{align}
where the function $G:\mathbb{R}\rightarrow \mathbb{C}$ is given by
\begin{align}\label{eq:difGKi0}
G(k+i0):=-i\langle (-\Delta+2k+i0)^{-1}W, W\rangle +i\langle (-\Delta-2k-i0)^{-1}W, W\rangle.
\end{align} Thus,
\begin{align*}
K_{t}
&=-(2\pi)^{-1}\int_{-\infty}^{\infty} \frac{1}{ik+Z G(k+i0)} e^{-ikt}\ dk\\
&=-\frac{1}{\pi}\int_{-\infty}^{\infty} Re\frac{1}{ik+Z G(k+i0)}\ cos kt\ dk
\end{align*}
The function $G(k+i0)$ is smooth on $\mathbb{R}\backslash\{0\}$; in a neighborhood of $k=0$ we have that $G(k+i0)$ is an analytic function of $k^{\frac{1}{2}}$, and there exists a constant $C\not=0$ such that
\begin{align}\label{eq:TaylorGk}
G(k+i0)=C k^{\frac{1}{2}}+O(|k|)
\end{align}
Formula ~\eqref{eq:estKt} follows by finding the exact value for $C$.

Next, we rewrite Equation ~\eqref{eq:linear} in the following form
\begin{equation}\label{eq:Durhamel}
P_{t}=K(t) P_0+Z\int_{0}^{t}K(t-s) Re \langle W, e^{i\frac{\Delta}{2}  s}W\rangle \int_{0}^{s} P_{s_{1}} ds_{1}\ ds+\int_{0}^{t}K(t-s) F_{P}(s)\ ds
\end{equation}

In order to prove our main result on the decay of $P_t$ in $t$, see inequality ~\eqref{eq:trajectory} of Theorem ~\ref{THM:main}, we require more precise information on $P_t$ than the one provided by ~\eqref{eq:Durhamel}. The problem is that all three terms on the right hand side of ~\eqref{eq:Durhamel} are $O(t^{-\frac{1}{2}}),$ as $t\rightarrow \infty.$ This decay is inadequate to prove inequality ~\eqref{eq:trajectory}. We thus have to exhibit cancelations between the terms on the right hand side of ~\eqref{eq:Durhamel} that kill the leading terms. Our strategy to accomplish this is to resort to a second equation for $P_t$ equivalent to ~\eqref{eq:Durhamel} and then take a suitable linear combination of the two equations. To find the second equation we simply integrate both sides of ~\eqref{eq:linear} over time from $0$ to $t$ and arrive at
\begin{align}
P_{t}
=&P_{0}+Z\int_{0}^{t} Re\langle W,\ e^{i\frac{\Delta}{2}  s}W\rangle \int_{0}^{s} P_{s_{1}} \ ds_{1} ds-
2Z Re \langle W, (i\Delta)^{-1} \int_{0}^{t}  e^{i\frac{\Delta}{2}(t-s)}  P_{s} \ ds\ W\rangle\label{eq:integrate}\\
& +\int_{0}^{t}\ F_{P}(s)\ ds,\nonumber
\end{align} where the third term on the right hand side of ~\eqref{eq:integrate} derives from
$$
-Z\int_{0}^{t} Re\langle W, \int_{0}^{s} e^{i\frac{\Delta}{2}  (s-s_1)} q_{s_{1}}\ ds_{1}W\rangle ds
$$ by integrating by parts and using that $Re \langle W, (i\Delta)^{-1}W\rangle=0.$

Multiplying both sides of ~\eqref{eq:integrate} by $K(t)$ and subtract the resulting equation from ~\eqref{eq:Durhamel} we obtain that
\begin{align}\label{eq:prelim}
[1-K(t)]P_{t}=&Z\int_{0}^{t} [K(t-s)-K(t)] Re \langle W,\ e^{i\frac{\Delta}{2}  s} W\rangle \int_{0}^{s} P_{s_{1}}\ ds_{1} ds\\
 &+2ZK(t)Re \langle W, (i\Delta)^{-1} \int_{0}^{t}  e^{i\frac{\Delta}{2}(t-s)}\  P_{s} \ ds\ W\rangle\nonumber\\
 &+\int_{0}^{t}\ [K(t-s)-K(t)]\ F_{P}(s)\ ds.\nonumber
\end{align}
The first term on the right hand side of ~\eqref{eq:prelim} is rewritten as follows:
\begin{align*}
 &\int_{0}^{t} [K(t-s)-K(t)] Re \langle W,\ e^{i\frac{\Delta}{2}  s} W\rangle \int_{0}^{s} q_{s_{1}}\ ds_{1} ds\\
=&-\int_{0}^{t} [K(t-s)-K(t)] Re\langle W, e^{i\frac{\Delta}{2}  s}W\rangle \int_{s}^{t} q_{s_{1}} ds_{1} ds\\
 &+\int_{0}^{t} K(t-s) Re\langle W, e^{i\frac{\Delta}{2}  s}W\rangle ds\ \int_{0}^{t} q_{s_{1}} ds_{1}\\
 &-K(t) \int_{0}^{t} Re\langle W, e^{i\frac{\Delta}{2}  s}W\rangle ds \ \int_{0}^{t} q_{s_{1}} ds_{1}\\
=&-\int_{0}^{t} [K(t-s)-K(t)] Re\langle W, e^{i\frac{\Delta}{2}  s}W\rangle \int_{s}^{t} q_{s_{1}} ds_{1} ds\\
 &+\int_{0}^{t} K(t-s) Re\langle W, e^{i\frac{\Delta}{2}  s}W\rangle ds\ \int_{0}^{t} q_{s_{1}} ds_{1}\\
 &-2 K(t) Re\langle W, (i\Delta)^{-1} e^{i\frac{\Delta}{2}t} W\rangle  \ \int_{0}^{t} q_{s_{1}} ds_{1}.
\end{align*}
Plugging this expression into the right hand side of ~\eqref{eq:prelim}, we find that
\begin{align}
[1-K(t)]P_{t}=&-Z\int_{0}^{t} [K(t-s)-K(t)] Re\langle W, e^{i\frac{\Delta}{2}  s}W\rangle \int_{s}^{t} P_{s_{1}} ds_{1} ds\nonumber\\
 &+Z\int_{0}^{t} K(t-s) Re\langle W, e^{i\frac{\Delta}{2}  s}W\rangle ds\ \int_{0}^{t} P_{s_{1}} ds_{1}\nonumber\\
 &+2 Z K(t) Re\langle W, (i\Delta)^{-1} \int_{0}^{t}[ e^{i\frac{\Delta}{2}(t-s)} - e^{i\frac{\Delta}{2}t} ] \ P_{s}\ ds\ W\rangle\nonumber\\
 &+\int_{0}^{t}[K(t-s)-K(t)]F_{P}(s)\ ds\nonumber\\
=:&A(P)(t)+\int_{0}^{t}[K(t-s)-K(t)]F_{P}(s)\ ds,\label{eq:finalForm}
\end{align}
thus
\begin{align}\label{eq:secQt}
P_{t}=\frac{1}{1-K(t)} A(P)(t)+\frac{1}{1-K(t)}\int_{0}^{t}[K(t-s)-K(t)]F_{P}(s)\ ds
\end{align}
where $A$ is the linear operator on the space of momentum trajectories $\{P_s\}_{0\leq s<\infty}$ defined by ~\eqref{eq:finalForm}.

Equation ~\eqref{eq:secQt} has the desired form. We will show that, if the trajectories $\{P_{s}\}_{0\leq s<\infty}$ belong to an appropriate Banach space then a fixed-point theorem can be applied that implies global existence of solutions of ~\eqref{eq:secQt}. The main heuristic ideas underlying our approach are discussed in Appendix ~\ref{sub:tacitIdeas}.

Next, we introduce a family of Banach spaces of momentum trajectories: For an arbitrary $\delta\in I,$ where $I$ is the interval defined in ~\eqref{eq:difI}, and any $T>0$, we define the space
\begin{align}\label{eq:defBDeltaT}
B_{\delta, T}:=\{h: [T,\infty)\rightarrow \mathbb{R} \ |t^{\frac{1}{2}+\delta}h\in L^{\infty}[T, \infty)\}
\end{align}
equipped with the norm
\begin{align}\label{eq:norm}
\|h\|_{\delta,T}:= \sup_{t\in [T,\infty)} t^{\frac{1}{2}+\delta}|h|(t) .
\end{align} The function $h(t)$ has the interpretation of being a component of $P_t.$ The definition of $B_{\delta, T}$ can be extended to vector-valued functions $P_t: [T,\infty)\rightarrow \mathbb{R}^3$ in the obvious way.

Below, it will be proven that the operator $A$ in ~\eqref{eq:finalForm} and ~\eqref{eq:secQt} maps the space $B_{\delta,T}$ into itself, for $T$ large enough; (see ~\eqref{eq:estKt}). It appears to be difficult to show that $P_t \in B_{\delta,T}$, for some $\delta>0,$ by starting directly from ~\eqref{eq:Durhamel}. However, in the analysis of ~\eqref{eq:secQt}, a new difficulty appears: Since $K(0)=1,$ the operator  $\frac{1}{1-K(t)} A(\cdot)(t)$ is unbounded on $L^{\infty}[0,\infty)$. The new difficulty is circumvented by ``waiting long enough until $K(t)$ becomes small".
We therefore divide the time axis $[0,\infty)$ into two subintervals, $ [0, T]$ and $ [T,\infty),$ where $T$ is
chosen such that $|K(t)|\ll 1$ when $t\geq T$. For $t\in [0,T]$, a unique strong solution, $P_{t}$, to the equation of motion ~\eqref{eq:linear} corresponding to a given initial condition $P_{t=0}=P_0$ exists, provided $|P_0|$ and $\beta_0$ are small enough, depending on $T$, as shown in Theorem ~\ref{THM:wellposed}. For $t\in [T,\infty)$, we show that a solution $P_t$ exists and belongs to the space $B_{\delta,T}$ by proving that the operator $\frac{1}{1-K(t)}A:\ B_{\delta,T}\rightarrow B_{\delta,T}$ is a contraction on a sufficiently small ball in $B_{\delta,T}$ centered at the origin; evidently this forces us to require that $|P_{T}|$ is small enough, which, by Theorem ~\ref{THM:wellposed}, is guaranteed if $|P_0|$ and $\|\langle x\rangle^{3}\beta_0\|_2$ are chosen to be sufficiently small.

In what follows, we convert this discussion into rigorous mathematics. We define
\begin{equation}\label{eq:cutoff}
\chi_{T}(t):=\left\{
\begin{array}{lll}
1\ \text{if}\ 0\leq t<T\\
0\ \text{if}\ t\geq T
\end{array}
\right.
\end{equation} and rewrite Equation ~\eqref{eq:secQt} as
\begin{equation}\label{eq:secQt2}
P_{t}=\Upsilon(P)(t)+G(t),
\end{equation} where $G(t)$ is the contribution to the right hand side of ~\eqref{eq:secQt} only depending on $\{P_{t}\}_{0\leq t<T}$, i.e., independent of $(1-\chi_{T})P$,
\begin{align}\label{eq:defGt}
G(t):=\frac{1}{1-K(t)}\{ A(\chi_{T}P_{t})+\int_{0}^{t} [K(t-s)-K(t)]F_{\chi_{T} P}(s)\ ds\} .
\end{align}
and $\Upsilon(P)$ contains terms of first order in $(1-\chi_{T})P$ and higher order in $P$ and is given by
\begin{align}
\Upsilon(P)(t)&:=\frac{1}{1-K(t)}\{ A((1-\chi_{T})P)\nonumber\\
&+\int_{0}^{t}
[K(t-s)-K(t)][F_{P}(s)-F_{\chi_{T} P}(s)]\ ds\}.\label{eq:defUpsilon}
\end{align}

In Theorem ~\ref{THM:wellposed}, we have shown that if the initial conditions $P_{0}$ and $\beta_0$ are sufficiently small ($|P_0|,\ \|\langle x\rangle^3 \beta_0\|_2\leq \epsilon_0(T)$) then there exists a unique solution $P_{t},\ t\in [0,T]$, with $\displaystyle\max_{t\in [0,T]}|P_t|\leq T^{-2}.$ In order to continue this solution to the interval $[T,\infty)$ and to show that $\{P_t\}_{T\leq t<\infty}$ belongs to the Banach space $B_{\delta,T}$ we propose to use a fixed-point theorem, which can be applied, provided two conditions are fulfilled:
\begin{itemize}
\item[(1)] The nonlinear map $\Upsilon(\cdot)+G$ maps a small ball, $\mathcal{B}$, in the Banach space $B_{\delta,T}$ centered at $0$ into itself, in particular, $\|G\|_{\delta,T}$ is small enough, and
\item[(2)] $\Upsilon(\cdot)$ is a contraction on $\mathcal{B}$ in the norm $\|\cdot\|_{\delta,T}$ of $B_{\delta,T}$ introduced in ~\eqref{eq:norm}.
\end{itemize}

We begin by verifying that $\Upsilon(\cdot)$ is a contraction on $\mathcal{B}$ if $\mathcal{B}$ is chosen small enough. We define a function $\Omega: [0,1]\rightarrow \mathbb{R}^{+}$ by
\begin{align}\label{eq:difOmega}
\Omega(\delta):=\int_{0}^{1}\frac{1}{1+(1-r)^{\frac{1}{2}}}(1-r)^{-\frac{1}{2}} [\frac{1}{1-2\delta}(r^{-\frac{1}{2}}-r^{-\delta})+ r^{\frac{1}{2}-\delta}]\ dr,
\end{align} (see Equation ~\eqref{eq:difI}). A key result is the following theorem.
\begin{theorem}\label{THM:Contraction}
There exists a constant $T_0<\infty$ such that if $T\geq T_0$ and if $\delta$ is chosen such that $\pi^{-1}\Omega(\delta)<1$ then the map
$\Upsilon(\cdot)$ introduced in ~\eqref{eq:defUpsilon} maps $B_{\delta,T}$ into itself and is a contraction on a sufficiently small ball $\mathcal{B}\subset B_{\delta,T}$ centered at $0.$ The two terms on the right hand side of ~\eqref{eq:defUpsilon} defining $\Upsilon(\cdot)$ satisfy the following estimates:
\begin{itemize}
\item[(1)] The linear operator $A((1-\chi_{T})\cdot)$ satisfies
\begin{equation}\label{eq:contractive}
|\frac{1}{1-K(t)} A((1-\chi_{T})h)(t)|\leq t^{-\frac{1}{2}-\delta}[\frac{1}{\pi}\Omega(\delta)+\epsilon(T)] \|h\|_{\delta,T}
\end{equation} where $\epsilon(T)$ is a small constant satisfying $\displaystyle\lim_{T\rightarrow \infty}\epsilon(T)=0.$
\item[(2)] Let $\epsilon_{0}(T)$ be the constant introduced in Theorem ~\ref{THM:wellposed}. Suppose that $Q_1,\ Q_{2}:[0,\infty)\rightarrow \mathbb{R}^3$ are any two vector-valued functions satisfying the following two conditions: $$Q_1(t)=Q_2(t)=P_{t},\ t\in [0,T],$$
    where $P_t$ is the solution of Equation ~\eqref{eq:linear}/~\eqref{eq:secQt} constructed in Theorem ~\ref{THM:wellposed}, for $0\leq t\leq T$ and $|P_0|,\ \|\langle x\rangle^{3}\beta_0\|_{2}\leq \epsilon_{0}(T)$; and $$\| Q_1\|_{\delta,T}, \ \|Q_2\|_{\delta,T}\ll 1.$$ Then
\begin{align}\label{eq:estFp}
 |\frac{1}{1-K(t)}\int_{0}^{t} [K(t-s)-K(t)][F_{Q_1}-F_{Q_2}](s)\ ds| \lesssim t^{-\frac{1}{2}-\delta}\|Q_1-Q_2\|_{\delta,T}[\|Q_1\|_{\delta,T}+\|Q_2\|_{\delta,T}].
\end{align}
\end{itemize}
\end{theorem}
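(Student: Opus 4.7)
The contraction of $\Upsilon(\cdot)$ on a small ball $\mathcal{B} \subset B_{\delta,T}$ follows at once from (1) and (2): estimate \eqref{eq:contractive} shows the linear part of $\Upsilon$ has operator norm at most $\pi^{-1}\Omega(\delta) + \epsilon(T)$, strictly below $1$ once $\pi^{-1}\Omega(\delta) < 1$ and $T$ is large; estimate \eqref{eq:estFp} shows the remaining, genuinely nonlinear piece has Lipschitz constant of order $\|Q_1\|_{\delta,T} + \|Q_2\|_{\delta,T}$, which can be made arbitrarily small by shrinking $\mathcal{B}$. The self-mapping property on $\mathcal{B}$ combines these with the smallness of $\|G\|_{\delta,T}$, which in turn follows from Theorem \ref{THM:wellposed} and the hypothesis that $|P_0|$ and $\|\langle x\rangle^3 \beta_0\|_2$ are small. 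The proof therefore reduces to establishing (1) and (2).

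\textbf{The linear bound \eqref{eq:contractive}.}
By Proposition \ref{Prop:kernel}, $ZK(t) = \kappa_0\, t^{-1/2} + O(t^{-1})$ with $\kappa_0 := \tfrac{1}{4}\pi^{-5/2}$, so for $T$ large, $|1-K(t)|$ is bounded below on $[T,\infty)$ and $(1-K(t))^{-1} = 1 + O(t^{-1/2})$. I split $A((1-\chi_T)h)(t)$ according to the three terms displayed in \eqref{eq:finalForm}, each $s$-integral now restricted to $[T,t]$ by the cutoff $(1-\chi_T)$. The matrix element obeys the standard dispersive bound $|Re\langle W, e^{i\Delta s/2}W\rangle| \lesssim \langle s\rangle^{-3/2}$ (Schwartz $W$, free $3$D Schr\"odinger evolution), while the pointwise bound $|h(s)| \leq s^{-1/2-\delta}\|h\|_{\delta,T}$ controls the $h$-integrals. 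The algebraic identity
\[
K(t-s) - K(t) \;=\; \kappa_0\,t^{-1/2}\,\frac{s/t}{(1-s/t)^{1/2}\bigl[1+(1-s/t)^{1/2}\bigr]} \;+\; O(t^{-1})
\]
is the engine of the calculation. After the rescaling $s = rt$, each of the three terms in $A$ becomes $t^{-1/2-\delta}\|h\|_{\delta,T}$ times an $r$-integral; the cancellations built into \eqref{eq:finalForm} are exactly such that these $r$-integrals assemble, up to lower-order remainders, into $\pi^{-1}\Omega(\delta)$ of \eqref{eq:difOmega}: the factor $(1-r)^{-1/2}/[1+(1-r)^{1/2}]$ is produced by the displayed identity, and the bracket $\frac{1}{1-2\delta}(r^{-1/2}-r^{-\delta}) + r^{1/2-\delta}$ is exactly what emerges from the $\int_s^t h_{s_1}\,ds_1$ and $\int_T^t h_{s_1}\,ds_1$ factors combined across terms. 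The $O(t^{-1})$ tail of $K$, the correction to $(1-K(t))^{-1}$, the dispersive tail in $s$, and the truncation at $T$ all feed into the error $\epsilon(T)\, t^{-1/2-\delta}\|h\|_{\delta,T}$, with $\epsilon(T) \to 0$ as $T \to \infty$.

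\textbf{The nonlinear bound \eqref{eq:estFp}, and the main obstacle.}
Since $B_0$ depends on $P$ only through $X_t$, $F_{Q_1} - F_{Q_2}$ decomposes via $B_0, B_1, B_2$ in \eqref{eq:difVecF}, and each difference is bilinear in $P$ modulo $Q$-independent background. By the smoothness of $W$ and the mean value theorem, $|W^Y - W| \lesssim |Y|$ and $|\nabla W^Y - \nabla W| \lesssim |Y|$, together with
\[
|X_s[Q_1] - X_s[Q_2]| \;\leq\; \int_0^s |Q_1 - Q_2|(u)\,du,
\]
where on $[0,T]$ the integrand is controlled via the $T^{-2}$ bound from Theorem \ref{THM:wellposed} (since $Q_1 = Q_2 = P$ there, this piece cancels) and on $[T,s]$ by $u^{-1/2-\delta}\|Q_1 - Q_2\|_{\delta,T}$. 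Expanding each $B_j[Q_1] - B_j[Q_2]$ to first order in these differences, bounding the free propagator pointwise on $\nabla W$ and $(-\Delta)^{-1}\nabla W$, then convolving with $K(t-s) - K(t)$ and multiplying by $(1-K(t))^{-1}$, yields \eqref{eq:estFp}. The main obstacle is the linear bound (1): capturing the exact constant $\pi^{-1}\Omega(\delta)$ (rather than a looser majorant) demands exploiting the precise cancellations among the three pieces of $A$ in \eqref{eq:finalForm} rather than triangle-inequality bounds, and correctly identifying the prefactor $\pi^{-1}$ from the combination of $\kappa_0$ with $Z$ and the $3$D spectral structure of $Re\langle W, e^{i\Delta s/2}W\rangle$. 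Any loss here enlarges the effective constant and shrinks the admissible interval $I$, potentially rendering it empty and destroying the fixed-point argument.
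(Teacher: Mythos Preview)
Your overall strategy and your sketch for part~(2) are in line with the paper. The gap is in part~(1), and it is precisely where you flag the ``main obstacle.''

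You write that obtaining the sharp constant $\pi^{-1}\Omega(\delta)$ ``demands exploiting the precise cancellations among the three pieces of $A$ in \eqref{eq:finalForm} rather than triangle-inequality bounds.'' In fact the paper does the opposite: it splits $A((1-\chi_T)h)$ into the three displayed terms $\Gamma_1,\Gamma_2,\Gamma_3$ and bounds each one \emph{separately}. The two pieces of the bracket in \eqref{eq:difOmega} come from $\Gamma_1$ and $\Gamma_3$ individually: $\Gamma_1$ (with the factor $\int_s^t h_{s_1}\,ds_1$) produces $\Omega_1(\delta)$, the $\tfrac{1}{1-2\delta}(r^{-1/2}-r^{-\delta})$ part; $\Gamma_3$ (where the propagator difference $e^{i\Delta(t-s)/2}-e^{i\Delta t/2}$ acts on $(i\Delta)^{-1}W$, giving a second $t^{-1/2}$-type function) produces $\Omega_2(\delta)$, the $r^{1/2-\delta}$ part. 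No cancellation between $\Gamma_1$ and $\Gamma_3$ is used or needed.

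What you are missing is the treatment of $\Gamma_2$, whose $h$-dependence is exactly the factor $\int_T^t h_{s_1}\,ds_1$ you mention. By naive estimates ($K(t)=O(t^{-1/2})$, $Re\langle W,e^{i\Delta s/2}W\rangle=O(s^{-3/2})$) the convolution $I(t):=\int_0^t K(t-s)\,Z\,Re\langle W,e^{i\Delta s/2}W\rangle\,ds$ is only $O(t^{-1/2})$, which combined with $\int_T^t |h_{s_1}|\,ds_1=O(t^{1/2-\delta})$ would give an $O(1)$ contribution to the operator norm and destroy the contraction. The crucial step is to show $I(t)=O(t^{-3/2})$; this does \emph{not} follow from pointwise bounds but from the Fourier representation
\[
I(t)\;=\;-\frac{Z}{2\pi}\int_{-\infty}^{\infty}\frac{G(k+i0)}{ik+ZG(k+i0)}\,e^{-ikt}\,dk,
\]
obtained by recognizing that the Fourier transform of $Re\langle W,e^{i\Delta t/2}W\rangle\chi_0$ is $-G(k+i0)$ (the same $G$ as in \eqref{eq:difGKi0}). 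Near $k=0$ one has $-ZG/(ik+ZG)=-1+c^{-1}ik^{1/2}+O(k)$, so after one integration by parts in $k$ the integrand is $O(|k|^{-1/2})$ near $0$ and one reads off $I(t)=O(t^{-3/2})$. This is what forces $\Gamma_2$ into the error term $\tilde\epsilon(T)\,t^{-1/2-\delta}\|h\|_{\delta,T}$ and leaves $\Omega_1+\Omega_2=\pi^{-1}\Omega(\delta)$ as the leading constant. Without this Fourier argument your scheme cannot close.
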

Inequality ~\eqref{eq:contractive} will be reformulated as Proposition ~\ref{Prop:ThreeTerm}, below, and
proven in Appendix ~\ref{SEC:contraction}. Inequality ~\eqref{eq:estFp} is proven in Appendix ~\ref{sec:highorderterm}.

Next, we present an estimate on the term $G(t)$ on the right hand side of ~\eqref{eq:secQt2} defined in ~\eqref{eq:defGt}. This term only depends on the solution, $P_t$, of ~\eqref{eq:linear}/~\eqref{eq:secQt} for $ t\in [0,T]$, which has been constructed in Theorem ~\ref{THM:wellposed}.
\begin{theorem}\label{THM:smallness}
Suppose that $|P_0|,\ \|\langle x\rangle^{3}\beta_0\|_2\leq \epsilon_0(T)$ and that the parameter $\delta$ is chosen as in Theorem ~\ref{THM:Contraction}. Then $G$ belongs to the Banach space $B_{\delta,T}$ and $\|G\|_{\delta,T}$ can be made arbitrarily small by choosing $T$ large enough; (see Theorem ~\ref{THM:wellposed}).

More specifically, we have that, for any $t\geq T,$
\begin{equation}\label{eq:finite1}
|\frac{1}{1-K(t)}\int_{0}^{t} [K(t-s)-K(t)]F_{\chi_{T} P}(s)\ ds|\leq \epsilon(T) t^{-\frac{1}{2}-\delta},
\end{equation} and
\begin{equation}\label{eq:finite2}
 |\frac{1}{1-K(t)} A(\chi_{T}P_{t})|\leq \epsilon(T) t^{-\frac{1}{2}-\delta},
\end{equation} with $\epsilon(T)\rightarrow 0$ as $T\rightarrow \infty.$
\end{theorem}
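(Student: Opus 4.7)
The strategy exploits two structural facts: (i) the cutoff confines the trajectory to $[0,T]$, where Theorem \ref{THM:wellposed} gives $\sup_{s\in[0,T]}|P_s|\le T^{-2}$; and (ii) for $t\ge T$ with $T$ large, Proposition \ref{Prop:kernel} implies $|K(t)|\ll 1$, so $\frac{1}{1-K(t)}$ is uniformly bounded, while $|K'(t)|=O(t^{-3/2})$ yields the Taylor-type bound
\begin{align*}
|K(t-s)-K(t)|\,\lesssim\, s\,t^{-3/2}\quad \text{when } 0\le s\le t/2.
\end{align*}
Accordingly I split $t\ge T$ into a far regime $t\ge 2T$, where the Taylor bound applies for all $s\in[0,T]$, and a near regime $T\le t\le 2T$, where I use only the crude pointwise bound $|K(t-s)-K(t)|\lesssim(t-s)^{-1/2}+t^{-1/2}$ and absorb the resulting losses into the small prefactor $T^{-1/2-\delta}$ available from the norm.

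To prove \eqref{eq:finite1}, I first check that $|F_{\chi_TP}(s)|\le C\epsilon_0(T)$ on $[0,T]$: from \eqref{eq:difVecF}, $B_0$ is bounded by $\|\nabla W\|_2\|\beta_0\|_2\le C\epsilon_0(T)$, while $B_1$ and $B_2$ are quadratic in $P$ and carry the additional small displacement $|X_0-X_s|\le sT^{-2}\le T^{-1}$, so they are of lower order. In the far regime, the Taylor bound gives
\begin{align*}
\Big|\int_0^T[K(t-s)-K(t)]F_{\chi_TP}(s)\,ds\Big|\,\lesssim\,\epsilon_0(T)\,T^2\,t^{-3/2},
\end{align*}
while in the near regime $\int_0^T[(t-s)^{-1/2}+t^{-1/2}]\,ds\lesssim T^{1/2}$ yields a bound $\lesssim\epsilon_0(T)T^{1/2}$. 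Both are $\le\epsilon(T)t^{-1/2-\delta}$ provided $\epsilon_0(T)\le CT^{-1-\delta}$; this is the sense in which the hypothesis $|P_0|,\,\|\langle x\rangle^{3}\beta_0\|_2\le\epsilon_0(T)$ should be interpreted, by further shrinking the constant from Theorem \ref{THM:wellposed} if necessary.

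For \eqref{eq:finite2} I substitute $h=\chi_TP$ into the three terms of $A$ appearing in \eqref{eq:finalForm}. Each occurrence of the trajectory is confined to an integral of $P_{s_1}$ over a sub-interval of $[0,T]$, bounded by $T\cdot T^{-2}=T^{-1}$. Combining this with the dispersive decay $|\langle W,e^{i\frac{\Delta}{2}s}W\rangle|\lesssim\langle s\rangle^{-3/2}$ (a standard three-dimensional stationary-phase bound for smooth, rapidly decaying, spherically symmetric $W$), with the Taylor / pointwise bounds on $K(t-s)-K(t)$, and with $K(t)=O(t^{-1/2})$ from Proposition \ref{Prop:kernel}, each of the three terms in $A(\chi_TP)$ is bounded by a prefactor $T^{-\gamma(\delta)}\,t^{-1/2-\delta}$ with $\gamma(\delta)>0$. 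The third term, containing $[e^{i\frac{\Delta}{2}(t-s)}-e^{i\frac{\Delta}{2}t}]$, additionally requires the observation that this propagator difference, applied to $W$, gains a factor of $s$ from the fundamental theorem of calculus for the unitary group, which compensates the slow decay of $K(t)$. Setting $\epsilon(T)$ as the maximum of the resulting decaying prefactors then delivers \eqref{eq:finite2}.

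The main difficulty is the near regime $T\le t\le 2T$: there neither the Taylor bound on $K(t-s)-K(t)$ nor the sharp dispersive decay of $\langle W,e^{i\frac{\Delta}{2}s}W\rangle$ applies with the exponent that would automatically produce $t^{-1/2-\delta}$, and one is forced to compensate non-sharp bounds against the smallness of $T^{-1/2-\delta}$; this in turn dictates the required rate of smallness of $\epsilon_0(T)$ in terms of $T$ and $\delta$. A secondary subtlety is to verify that the resulting $\epsilon(T)\to0$ is uniform on compact sub-intervals of $I$, so that the smallness delivered by this theorem is usable when invoking the contraction in Theorem \ref{THM:Contraction} and closing the fixed-point argument for \eqref{eq:secQt2}.
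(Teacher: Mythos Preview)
Your proposal has two genuine gaps.

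\textbf{First gap: the integral defining \eqref{eq:finite1} is not supported on $[0,T]$.} You write $\int_0^T[K(t-s)-K(t)]F_{\chi_TP}(s)\,ds$ and estimate only $|F_{\chi_TP}(s)|$ on $[0,T]$. But replacing $P$ by $\chi_TP$ in the definitions of $B_0,B_1,B_2$ does \emph{not} make $F_{\chi_TP}(s)$ vanish for $s>T$: the trajectory freezes, but the propagators $e^{i\frac{\Delta}{2}s}$ and $e^{i\frac{\Delta}{2}(s-s_1)}$ still depend on $s$. The paper handles the full integral $\int_0^t$ by first proving pointwise decay in $s$, namely $|B_0(\chi_TP)(s)|\lesssim(1+s)^{-2}\|\langle x\rangle^4\beta_0\|_2$ (via a weighted $L^2$ local smoothing estimate) and $|B_k(\chi_TP)(s)|\lesssim T^{-2}(1+s)^{-2}$ for $k=1,2$ (via Proposition~\ref{prop:propa}, after extracting two extra derivatives from the displacement $W^{X_0-X_{s_1}}-W$). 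Only with this decay in $s$ can one then combine with the bound \eqref{eq:ktks} for $|K(t-s)-K(t)|$ and integrate over all of $[0,t]$.

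\textbf{Second gap: the term $\Gamma_2$ in $A(\chi_TP)$ requires a cancellation you do not invoke.} You claim that ``the Taylor/pointwise bounds on $K(t-s)-K(t)$'' control all three terms of $A$. But the second term of $A$ in \eqref{eq:finalForm},
\[
Z\int_0^t K(t-s)\,Re\langle W,e^{i\frac{\Delta}{2}s}W\rangle\,ds\cdot\int_0^T P_{s_1}\,ds_1,
\]
contains $K(t-s)$ alone, not a difference. Using only $K(\tau)=O(\tau^{-1/2})$ and $Re\langle W,e^{i\frac{\Delta}{2}s}W\rangle=O((1+s)^{-3/2})$ gives merely $\int_0^t K(t-s)Re\langle\cdots\rangle\,ds=O(t^{-1/2})$; multiplied by the trajectory factor $\leq T^{-1}$ this yields $T^{-1}t^{-1/2}$, which is \emph{not} bounded by $\epsilon(T)t^{-1/2-\delta}$ uniformly in $t\ge T$ (the quotient $T^{-1}t^{\delta}$ diverges). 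The paper instead invokes the Fourier-analytic identity \eqref{eq:Iform}--\eqref{eq:taylorExpa} (equivalently \eqref{eq:keyObser}), which shows that this convolution is actually $O(t^{-3/2})$ due to a cancellation between $\hat K$ and $\widehat{Re\langle W,e^{i\Delta s/2}W\rangle\chi_0}$ near $k=0$. Without that cancellation the estimate \eqref{eq:finite2} fails.

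A minor further point: your argument forces you to shrink $\epsilon_0(T)$ to $CT^{-1-\delta}$, whereas the paper needs only the output $\sup_{[0,T]}|P_t|\le T^{-2}$ of Theorem~\ref{THM:wellposed}; the smallness in $T$ is generated intrinsically by the decay estimates above rather than by demanding ever-smaller initial data.
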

\begin{remark}
The constant $\epsilon_0(T)$ is chosen as in Theorem ~\ref{THM:wellposed}.
\end{remark}

The proof of this theorem is contained in Appendix ~\ref{Sec:nonlin}.

In the remainder of this section, we discuss the strategy used to prove Theorem ~\ref{THM:Contraction}. (The proof of Theorem ~\ref{THM:smallness} is easier than the one of Theorem ~\ref{THM:Contraction} and is therefore not discussed here.)
We recall that the map $\Upsilon(\cdot)$ is the sum of two maps appearing on the right hand side of ~\eqref{eq:defUpsilon}. The map
$A((1-\chi_{T})\cdot)(t)$ is linear, while $\frac{1}{1-K(t)}\int_{0}^{t}
[K(t-s)-K(t)][F_{P}(s)-F_{\chi_{T}P}(s)]\ ds$ contains higher-order terms, besides depending on the initial conditions, $\beta_0$, of the Bose gas. Since we are attempting to construct {\bf{small}} solutions of ~\eqref{eq:linear}/~\eqref{eq:secQt}, and because $\beta_0$ can be chosen as small as needed, it is fairly easy to control the second map. We therefore focus our attention on the ideas needed to estimate the first (linear) map. For this purpose, we write the operator $A((1-\chi_{T})\cdot)$ as a sum of three terms:
\begin{align}
A((1-\chi_{T})h)(t)=\sum_{k=1}^{3}\Gamma_{k}(t),\ \Gamma_k(t)\equiv \Gamma_{k}((1-\chi_{T})h)(t),
\end{align}
where
$$
\Gamma_{1}(t):=-Z\int_{0}^{t}\ ds\ [K(t-s)-K(t)] Re\langle W, e^{i\frac{\Delta}{2}  s}W\rangle \int_{s}^{t}\ ds_1\ h_{s_{1}}\ [1- \chi_{T}(s_1)],
$$
$$\Gamma_2(t):=Z\int_{0}^{t}\ ds\ K(t-s) Re\langle W, e^{i\frac{\Delta}{2}  s}W\rangle\ \int_{0}^{t} \ ds_1\ h_{s_{1}} [1-\chi_{T}(s_1)],$$
and
\begin{align}\label{eq:defGamma3}
\Gamma_3(t):=2 Z K(t) Re\langle W, (i\Delta)^{-1} \int_{0}^{t}\ ds\ [ e^{i\frac{\Delta}{2}(t-s)} - e^{i\frac{\Delta}{2}t} ] \ h_{s}\ [1-\chi_{T}(s)]\ W\rangle,
\end{align}
see Eqs. ~\eqref{eq:finalForm} and ~\eqref{eq:cutoff}.

Before estimating $\Gamma_{k}(t),\ k=1,2,3,$ we introduce two functions:
\begin{align}
\Omega_1 (\delta)&:= \frac{1}{(1-2\delta)\pi} \int_{0}^{1}\frac{1}{1+(1-r)^{\frac{1}{2}}} (1-r)^{-\frac{1}{2}} [r^{-\frac{1}{2}} - r^{-\delta}] \ dr,\nonumber\\
\text{and}\ \ \ \ \ \ \label{eq:dif12}\\
\Omega_2(\delta)&:= \frac{1}{\pi}\int_{0}^{1}\frac{1}{1+(1-r)^{\frac{1}{2}}} (1-r)^{-\frac{1}{2}} r^{\frac{1}{2}-\delta} \ dr,\nonumber
\end{align}
see also ~\eqref{eq:difOmega}.

Control of the terms $\Gamma_{k}(t),\ k=1,2,3,$ is provided in the following proposition.
\begin{proposition}\label{Prop:ThreeTerm}  There exists a function $\tilde\epsilon(T), \ 0<T<\infty,$ with $\displaystyle\lim_{T\rightarrow \infty}\tilde\epsilon(T)=0,$ such that, for an arbitrary function $h\in B_{\delta,T},$
\begin{align}\label{eq:Ga1}
|\Gamma_1(t)|\leq t^{-\frac{1}{2}-\delta} [\Omega_1(\delta)+\tilde\epsilon(T)] \|h\|_{\delta,T},
\end{align}
\begin{align}\label{eq:Ga3}
|\Gamma_3(t)|\leq t^{-\frac{1}{2}-\delta} [\Omega_2(\delta)+\tilde\epsilon(T)] \|h\|_{\delta,T},
\end{align}
and
\begin{align}\label{eq:Ga2}
|\Gamma_2(t)|\leq \tilde\epsilon(T) t^{-\frac{1}{2}-\delta} \|h\|_{\delta,T},
\end{align} where $B_{\delta,T}$ and $\|\cdot\|_{\delta,T}$ are as in ~\eqref{eq:defBDeltaT} and ~\eqref{eq:norm} respectively.
\end{proposition}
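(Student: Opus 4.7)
The plan rests on two asymptotic inputs whose constants combine in a precise numerical way. The first is the expansion $ZK(t)=\tfrac{1}{4}\pi^{-5/2}t^{-1/2}+O(t^{-1})$ furnished by Proposition~\ref{Prop:kernel}. The second is a parallel asymptotic for the memory function $\phi(s):=\Re\langle W,e^{i\Delta s/2}W\rangle$: passing to radial Fourier variables (using that $W$ is spherically symmetric) and applying the Abel-regularised identity $\int_0^\infty \sigma^{1/2}\cos(\sigma s)\,d\sigma=-\tfrac{\sqrt\pi}{2\sqrt 2}s^{-3/2}$ together with $|\hat W(0)|^2=1$ yields $\phi(s)=-2\pi^{3/2}s^{-3/2}+O(s^{-5/2})$ as $s\to\infty$. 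The key numerical identity $\tfrac{1}{4}\pi^{-5/2}\cdot 2\pi^{3/2}=(2\pi)^{-1}$ is exactly what produces the prefactor $\pi^{-1}$ built into the definitions~\eqref{eq:dif12} of $\Omega_1$ and $\Omega_2$.

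The bound on $\Gamma_2$ is the most transparent. By the defining Wiener-Hopf equation~\eqref{eq:DIEkt}, $\int_0^t K(t-s)\phi(s)\,ds=-\dot K(t)/Z$, which by Proposition~\ref{Prop:kernel} is $O(t^{-3/2})$. Combining this with the crude estimate $|\int_T^t h_{s_1}\,ds_1|\le C(t^{1/2-\delta}+T^{1/2-\delta})\|h\|_{\delta,T}$ (splitting the cases $\delta\lessgtr 1/2$) gives $|\Gamma_2(t)|\lesssim T^{-1/2}t^{-1/2-\delta}\|h\|_{\delta,T}$ for $t\ge T$, which establishes~\eqref{eq:Ga2} with $\tilde\epsilon(T)=O(T^{-1/2})$.

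For $\Gamma_1$ I would integrate by parts in the outer variable $s$, using $\partial_s\!\int_s^t h_{s_1}(1-\chi_T(s_1))\,ds_1=-h_s(1-\chi_T(s))$, to obtain
$$
\Gamma_1(t)=-Z\int_T^t h_s\Big[\int_0^s [K(t-\tau)-K(t)]\,\phi(\tau)\,d\tau\Big]ds.
$$
This rewriting is essential: it absorbs the small-$s$ range $[0,T]$ into the inner integral (where $|K(t-\tau)-K(t)|=O(\tau t^{-3/2})$ for $\tau\ll t$ together with the boundedness of $\phi$ suffices), whereas a direct estimate of the short-time piece of the original $\Gamma_1$ produces an $O(T^{3/2})$ factor incompatible with the claim. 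Inserting the leading asymptotics of $K$ and $\phi$ into the inner integral, rescaling $s=tr$, and using the algebraic identity $(1-r)^{-1/2}-1=r/[(1-r)^{1/2}(1+(1-r)^{1/2})]$ together with $\int_s^t s_1^{-1/2-\delta}ds_1=\tfrac{2}{1-2\delta}(t^{1/2-\delta}-s^{1/2-\delta})$ reduces the bound to the scale-invariant quantity $\frac{1}{\pi(1-2\delta)}\int_{T/t}^1 \frac{r^{-1/2}-r^{-\delta}}{(1-r)^{1/2}(1+(1-r)^{1/2})}dr$, which converges to $\Omega_1(\delta)$ as $T/t\to 0$, the cut-off error $\int_0^{T/t}$ being $O((T/t)^{1/2})$ and hence absorbed into $\tilde\epsilon(T)$. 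The term $\Gamma_3$ is handled in exact parallel: the operator identity $(i\Delta)^{-1}[e^{i\Delta(t-s)/2}-e^{i\Delta t/2}]=-\tfrac12\int_{t-s}^t e^{i\Delta\tau/2}d\tau$ recasts $\Gamma_3(t)=-ZK(t)\int_T^t h_s\big[\int_{t-s}^t\phi(\tau)d\tau\big]ds$, and the same asymptotic substitution and rescaling produce the integrand $r^{1/2-\delta}/[(1-r)^{1/2}(1+(1-r)^{1/2})]$, whose integral is $\pi\Omega_2(\delta)$.

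The chief obstacle is to verify that every subleading piece decays \emph{strictly} faster than $t^{-1/2-\delta}$ uniformly on $t\ge T$, so that $\tilde\epsilon(T)\to 0$ as $T\to\infty$. The delicate sources of error are the $C_K t^{-1}/Z$ correction in the expansion of $K$, the $O(s^{-5/2})$ remainder in $\phi$, the small-$\tau$ portion of the inner integrals, and the cut-off correction $\int_0^{T/t}$ in the rescaled expressions. Each of these contributes at worst $O(T^{-c}t^{-1/2-\delta})$ for some $c>0$, but securing this requires the integration-by-parts reformulation of $\Gamma_1$ and the analogous identity for $\Gamma_3$; without these rearrangements the naive short-time bounds would swamp the gain. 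Once these cancellations are in place, the numerical identity $|c_1 c_2|=\tfrac{1}{2\pi}$ (with $Z$ cancelling against the factor $c_1/Z$ from $K$'s asymptotic) delivers precisely the constants $\Omega_1(\delta),\Omega_2(\delta)$ appearing in~\eqref{eq:Ga1} and~\eqref{eq:Ga3}.
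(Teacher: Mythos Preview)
Your overall strategy is sound and arrives at the correct constants $\Omega_1(\delta)$, $\Omega_2(\delta)$, but it differs from the paper's argument in each of the three pieces, and there is one genuine gap.

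\textbf{On $\Gamma_2$.} Your identity $\int_0^t K(t-s)\phi(s)\,ds=-\dot K(t)/Z$ is correct and elegant; the paper does not make this observation but instead passes to Fourier variables and integrates by parts in $k$ against the representation $I(t)=-\frac{Z}{2\pi}\int \frac{G(k+i0)}{ik+ZG(k+i0)}e^{-ikt}\,dk$. The gap in your version is the inference ``$\dot K(t)=O(t^{-3/2})$ by Proposition~\ref{Prop:kernel}'': an asymptotic expansion $K(t)=at^{-1/2}+bt^{-1}+O(t^{-3/2})$ does \emph{not} in general differentiate to $\dot K(t)=O(t^{-3/2})$ (consider adding $t^{-3/2}\sin(t^2)$). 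What is actually needed here is exactly the Fourier argument the paper supplies; once that is in place, your route and the paper's are equivalent.

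\textbf{On $\Gamma_1$ and $\Gamma_3$.} Your integration-by-parts rewriting of $\Gamma_1$ and the operator identity $(i\Delta)^{-1}[e^{i\Delta(t-s)/2}-e^{i\Delta t/2}]=-\tfrac12\int_{t-s}^t e^{i\Delta\tau/2}d\tau$ for $\Gamma_3$ are both correct and lead to the right constants after rescaling. The paper takes a more pedestrian route: it substitutes the leading asymptotics $Z\tilde K(t)=\tfrac14\pi^{-5/2}t^{-1/2}$, $\tilde M(s)=-2\pi^{3/2}s^{-3/2}$, $\tilde V(t)=4\pi^{3/2}t^{-1/2}$ directly into the \emph{original} expressions for $\Gamma_1,\Gamma_3$, computes $\tilde\Gamma_1,\tilde\Gamma_3$ explicitly, and then bounds the remainders $\Gamma_k-\tilde\Gamma_k$ by splitting $[0,t]$ into $[0,T^{1/3}]\cup[T^{1/3},t-T^{1/3}]\cup[t-T^{1/3},t]$. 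Your claim that this direct approach would produce an $O(T^{3/2})$ loss and that the IBP is ``essential'' is not right: the paper controls the short-time slab by the refined bound $|K(t-s)-K(t)|\lesssim (1+s)t^{-3/2}$ for $s\le T^{1/3}\ll t$, which gives $O(T^{-1/3}t^{-1/2-\delta})$ with no damage. Your IBP captures the same cancellation implicitly. (A minor point: after your IBP the integral $\int_s^t s_1^{-1/2-\delta}ds_1$ no longer appears; what you really do is swap the order of integration back, which recovers the paper's computation.)

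In short: fix the $\dot K$ step by a Fourier argument, drop the claim that IBP is indispensable, and the rest goes through by a route parallel to the paper's.
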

This proposition will be proven in Appendix ~\ref{SEC:contraction}.
Obviously it implies inequality ~\eqref{eq:contractive} in Theorem ~\ref{THM:Contraction}, $(1),$ with $\epsilon(T)=3\tilde\epsilon(T)$. The ideas underlying the proof of Proposition ~\ref{Prop:ThreeTerm} are as follows. The arguments needed to estimate $\Gamma_1(t)$ and $\Gamma_3(t)$ are very similar, so we only consider $\Gamma_1(t)$ and $\Gamma_2(t).$ The trickiest estimate is ~\eqref{eq:Ga2}. The crucial step is to control the factor
$$I(t):=\int_{0}^{t} K(t-s) ZRe\langle W, e^{i\frac{\Delta}{2}  s}W\rangle\ ds$$ in $\Gamma_2(t).$
We propose to show that
\begin{align}\label{eq:estI}
I(t)=O((1+t)^{-\frac{3}{2}}),\ \text{as}\ t\rightarrow \infty,
\end{align} which implies ~\eqref{eq:Ga2} by straightforward arguments.

Estimate ~\eqref{eq:estI} does not follow by just using that $K(t)=O(t^{-\frac{1}{2}})$ and $\langle W,  e^{i\frac{\Delta}{2}t} W\rangle\ =O( (1+t)^{-\frac{3}{2}})$. These estimates, by themselves, only imply that
$|I(t)|\leq \text{const}\ t^{-\frac{1}{2}}.$ In order to conclude the improved estimate claimed in ~\eqref{eq:estI}, we Fourier-transform the convolution of $K$ with $Re\langle W,  e^{i\frac{\Delta}{2}t} W\rangle \chi_{0}$, with $\chi_0(t)=1,$ for $t\geq 0$, and $=0$, otherwise, which yields
$$I(t)= \frac{Z}{2\pi}\int_{-\infty}^{\infty} \hat{K}(k) \widehat{Re\langle W,  e^{i\frac{\Delta}{2}t} W\rangle \chi_0}(k) e^{-ikt} \ dk.$$
From ~\eqref{eq:DIEkt}, ~\eqref{eq:KtFourier} and ~\eqref{eq:difGKi0} we derive by inspection that
$\widehat{Re\langle W,  e^{i\frac{\Delta}{2}t} W\rangle\chi_0}(k)=-G(k+i0)$, hence
\begin{align}\label{eq:Iform}
I=-\frac{Z}{2\pi} \int_{-\infty}^{\infty} \frac{G(k+i0)}{ik+ZG(k+i0)} e^{-ikt}\ dk,
\end{align} with $G(k+i0)$ as in ~\eqref{eq:difGKi0}. The function $\frac{G(k+i0)}{ik+ZG(k+i0)}$ is smooth in $k$ on $\mathbb{R}\backslash\{0\}$. It is therefore its behavior near $k=0$ that determines the decay of $I(t)$ in $t.$ We recall that $G(k+i0)=ck^{\frac{1}{2}}+O(k),$ for $|k|$ small, where $c$ is some non-zero constant; see ~\eqref{eq:TaylorGk}. Thus
\begin{align}\label{eq:taylorExpa}
-Z\frac{G(k+i0)}{ik+ZG(k+i0)}=-1+\frac{1}{ic}k^{\frac{1}{2}}+O(k),\ \text{for}\ |k| \ \text{small}.
\end{align} Furthermore $ \frac{G(k+i0)}{ik+ZG(k+i0)}$ decays
rapidly in $k$ at infinity. Thus, in Equation ~\eqref{eq:Iform}, we can integrate by parts in the variable $k$, and this yields the desired decay estimate on $I(t)$; (note that the constant term on the right hand side of ~\eqref{eq:taylorExpa} yields a subleading contribution).

Next, we turn to estimating $\Gamma_1$; see ~\eqref{eq:Ga1}. For this purpose, we peel off the main contributions to the functions $K(t)$ and $Re\langle W,  e^{i\frac{\Delta}{2}t} W\rangle:$ By explicit calculation, see ~\eqref{eq:estKt} and ~\eqref{eq:asymp}, one finds that, as $t\rightarrow \infty$, $K(t)=\frac{1}{4Z}\pi^{-\frac{5}{2}} t^{-\frac{1}{2}}+ O(t^{-1});$ see ~\eqref{eq:estKt}, and
$Re\langle W,  e^{i\frac{\Delta}{2}t} W\rangle= C_{W} t^{-\frac{3}{2}}+O((1+t)^{-\frac{5}{2}}).$
We define an approximation, $\tilde{\Gamma}_1(t),$ of $\Gamma_{1}(t)$ by
\begin{align*}
\tilde\Gamma_1:= -\frac{1}{4}\pi^{-\frac{5}{2}} C_{W}\int_{0}^{t}\ ds\  [(t-s)^{-\frac{1}{2}}-t^{-\frac{1}{2}}]
s^{-\frac{3}{2}}\int_{s}^{t}\ ds_1\ h_{s_{1}}\ [1-\chi_{T}(s_1)].
\end{align*}
Recalling the definition of the Banach space $\mathcal{B}_{\delta,T}$ and changing variables, $s=t\sigma$ and $s_1=t\sigma_1,$ we find that
\begin{align*}
|\tilde\Gamma_1|\leq& t^{-\frac{1}{2}-\delta}\frac{1}{4}C_{W}\int_{0}^{1} [(1-\sigma)^{-\frac{1}{2}}-1]
\sigma^{-\frac{3}{2}}\int_{\sigma}^{1} \sigma_1^{-\frac{1}{2}-\delta}\ d\sigma_1 d\sigma\ \|h\|_{\delta,T}\\
\leq &t^{-\frac{1}{2}-\delta} \Omega_1(\delta) \|h\|_{\delta,T}.
\end{align*} To complete our estimate on $\Gamma_1$ we are left with estimating
$(\Gamma_{1}-\tilde{\Gamma}_1)(t)$, which is a straightforward task. In fact $\Gamma_1(t)-\tilde{\Gamma}_1(t)$ decays in $t$ faster than $\tilde\Gamma_1(t).$

Further details of our estimates needed to prove Proposition ~\ref{Prop:ThreeTerm} can be found in Subsection ~\ref{subsec:Ga1} of Appendix ~\ref{SEC:contraction}.

\section{Proof of the Main Result, Theorem ~\ref{THM:main}}\label{sec:ProofMainTheorem}
We use the subdivision of the time axis $$[0,\infty)=[0,T)\cup [T,\infty)$$ into two parts, for an appropriately chosen $T.$ Existence and uniqueness of a solution $P_t,$ of Equations ~\eqref{eq:linear}/~\eqref{eq:secQt}, for $t\in [0,T),$ assuming that the initial conditions $P_0$ and $\beta_0$ are small enough (depending on $T$), has been proven in Theorem ~\ref{THM:wellposed}.

To continue such a solution to the interval $[T,\infty),$ we apply a standard fixed-point theorem to Equation ~\eqref{eq:secQt2}. Thanks to Theorems ~\ref{THM:Contraction} and ~\ref{THM:smallness}, the hypotheses of the fixed-point theorem are valid, provided $T$ is chosen appropriately, and $|P_0|$, $\|\langle x\rangle^3\beta_0\|_2$ are small enough. We thus conclude that a global solution $P_t,\ t\in [0,\infty),$ to ~\eqref{eq:linear}/~\eqref{eq:secQt} exists, with $(1-\chi_{T})P_t\in B_{\delta,T},$ for any $\delta$ in the interval $I$ defined in ~\eqref{eq:difI}, provided $|P_0|$ and $\|\langle x\rangle^3 \beta_0\|_2$ are chosen small enough. This proves estimate ~\eqref{eq:trajectory} of Theorem ~\ref{THM:main}.

In order to prove Eq.~\eqref{eq:convergence} in Theorem ~\ref{THM:main}, we show that the field $\delta_t$ introduced in ~\eqref{eq:decom} decays to $0$, as $t\rightarrow \infty,$ in the sense that
\begin{equation}\label{eq:equiva}
\|\delta_{t}\|_{\infty}\rightarrow 0,\ \text{as}\ t\rightarrow \infty.
\end{equation}

To establish ~\eqref{eq:equiva} we apply the norm $\|\cdot\|_{\infty}$ to both sides of ~\eqref{eq:Duh}, which yields
$$
\|\delta_{t}\|_{\infty}\leq \text{const.}\{ \| e^{i\frac{\Delta}{2}t} (-\Delta)^{-1}W^{X_{0}}\|_{\infty}+\|e^{i\frac{\Delta t}{2} }\beta_{0}\|_{\infty}+\int_{0}^{t} \| e^{i\frac{\Delta}{2}t} (-\Delta)^{-1} P_{s} \nabla_{x} W^{X_{s}}\|_{\infty}\ ds\}.
$$ Using the estimates $$\| e^{i\frac{\Delta}{2}t} (-\Delta)^{-1}\|_{L^{1}\rightarrow L^{\infty}}\leq \text{const}\ t^{-\frac{1}{2}}$$ and $$\| e^{i\frac{\Delta}{2}t} \|_{L^{1}\rightarrow L^{\infty}}\leq \text{const}\ t^{-\frac{3}{2}}$$ and our estimate on $P_t$, see ~\eqref{eq:trajectory}, we find that
\begin{align*}
\|\delta_{t}\|_{\infty}\leq \text{const}\ t^{-\frac{1}{2}} +t^{-\frac{3}{2}}\|\beta_{0}\|_{L^{1}}
\end{align*} which, under our assumption on $\beta_0,$ yields ~\eqref{eq:equiva}.

This completes the proof of Theorem ~\ref{THM:main}. Some hard technicalities now follow in several appendices.
\begin{flushright}
$\square$
\end{flushright}


\appendix


\section{Proof of Proposition ~\ref{Prop:kernel}}\label{Sec:ReforPro}
We start with deriving an explicit formula for $K,$ see ~\eqref{eq:DIEkt}.

We define a function $G:\mathbb{R}\rightarrow \mathbb{C}$ by
\begin{equation}\label{eq:difGk}
G(k+i0):=-i\langle (-\Delta+2k+i0)^{-1}W, W\rangle +i \langle (-\Delta-2k-i0)^{-1}W, W\rangle.
\end{equation}
Next, we relate $G$ to the function $K$.
\begin{proposition}\label{Prop:estK}
The function $K$ in ~\eqref{eq:DIEkt} takes the form
\begin{equation}\label{eq:ReQt}
K(t)
=-(2\pi)^{-1}\int_{-\infty}^{\infty} \frac{1}{ik+Z G(k+i0)} e^{-ikt}\ dk
\end{equation} in particular
\begin{equation}\label{eq:zero}
K(t)=0\ \text{for}\ t<0.
\end{equation}
The function $K$ can be transformed to a convenient form
\begin{equation}\label{eq:convenient}
K(t)=-\frac{1}{\pi}\int_{-\infty}^{\infty} Re \frac{1}{ik+Z G(k+i0)}\ coskt\ dk.
\end{equation}
\end{proposition}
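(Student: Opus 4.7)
The plan is exactly the formal argument outlined right after Proposition \ref{Prop:kernel}: extend $K$ by zero to negative times, recast \eqref{eq:DIEkt} as a distributional convolution equation on $\mathbb{R}$, Fourier-transform to obtain an algebraic identity, and invert.

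First, I set $K(t):=0$ for $t<0$; this is \eqref{eq:zero}, whose consistency I verify at the end. Introduce
\[
f(t):=\mathrm{Re}\,\langle W,\,e^{i\Delta t/2}W\rangle\,\chi_{0}(t),
\]
where $\chi_{0}$ is the indicator of $[0,\infty)$. The right-hand side of \eqref{eq:DIEkt} then equals $(f*K)(t)$, and combining with the initial condition $K(0)=1$, which contributes a $\delta_{0}$ via the jump of $K$ at $t=0$, yields the distributional equation
\[
\partial_{t}K=\delta_{0}-Z\,f*K\qquad\text{on }\mathbb{R}.
\]

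Second, I apply the Fourier transform $\hat h(k):=\int e^{ikt}h(t)\,dt$. Using $\widehat{\partial_{t}K}=-ik\hat K$, $\hat\delta_{0}=1$ and $\widehat{f*K}=\hat f\,\hat K$, the equation becomes the algebraic relation $-ik\hat K=1-Z\hat f\hat K$. The transform of $f$ is evaluated by writing $\mathrm{Re}=\tfrac12(1+\mathrm{c.c.})$ and invoking the spectral theorem for $-\Delta$, together with the $+i0$ prescription
\[
\int_{0}^{\infty}e^{ikt}e^{\pm i\Delta t/2}\,dt=\frac{i}{k\pm\Delta/2+i0},
\]
which gives $\hat f(k)=-G(k+i0)$ with $G$ as in \eqref{eq:difGk}. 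Solving yields $\hat K(k)=-[ik+ZG(k+i0)]^{-1}$, and Fourier inversion produces \eqref{eq:ReQt}.

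Third, the cosine form \eqref{eq:convenient} follows from \eqref{eq:ReQt} by combining reality and causality of $K$. Reality gives $\hat K(-k)=\overline{\hat K(k)}$, so $\mathrm{Re}\,\hat K$ is even and $\mathrm{Im}\,\hat K$ is odd in $k$. For $t>0$, the causal inverse Fourier transform collapses to the cosine transform: applying cosine Fourier inversion to the restriction of $K$ to $[0,\infty)$ gives $K(t)=\frac{2}{\pi}\int_{0}^{\infty}\mathrm{Re}\,\hat K(k)\cos kt\,dk$, which by evenness equals $\frac{1}{\pi}\int_{-\infty}^{\infty}\mathrm{Re}\,\hat K(k)\cos kt\,dk$; this is \eqref{eq:convenient}. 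Consistency of the extension \eqref{eq:zero} is then checked as follows: $\hat K$ extends analytically to the upper half-plane $\{\mathrm{Im}\,k>0\}$, where the resolvents in \eqref{eq:difGk} are regular, and decays in $|k|$; for $t<0$ the factor $e^{-ikt}$ decays as $\mathrm{Im}\,k\to+\infty$, so the contour in \eqref{eq:ReQt} can be pushed to $+i\infty$, giving $K(t)=0$, provided $ik+ZG(k+i0)$ has no zeros in the closed upper half-plane --- a fact extracted from the definite sign of $\mathrm{Im}\,G(k+i0)$ together with the threshold expansion \eqref{eq:TaylorGk}.

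The main obstacle is the threshold at $k=0$, where $G(k+i0)\sim C\sqrt{k}$ prevents $\hat K$ from being a Schwartz function: the Fourier integrals in \eqref{eq:ReQt} and \eqref{eq:convenient} are only conditionally convergent and must be handled as identities of tempered distributions (or as limits of regularized integrals), and the non-vanishing of $ik+ZG(k+i0)$ near $k=0$ requires a separate argument based on the explicit leading coefficient in \eqref{eq:TaylorGk}.
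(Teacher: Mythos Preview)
Your strategy coincides with the paper's, and your treatments of \eqref{eq:zero} and \eqref{eq:convenient} are essentially identical to those in the paper: analyticity and decay of $[ik+ZG(k)]^{-1}$ in the open upper half-plane (the paper verifies non-vanishing of $ik+ZG(k)$ there by a direct sign computation on the imaginary part, splitting into the cases $\mathrm{Re}\,k=0$ and $\mathrm{Re}\,k\ne 0$), followed by contour deformation for \eqref{eq:zero}; and the symmetries $G(k+i0)=-G(-k-i0)$, $\overline{G(k+i0)}=-G(k-i0)$ together with causality for \eqref{eq:convenient}.

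The one place where the paper takes a genuinely different route is the rigorous justification of \eqref{eq:ReQt}. The paper explicitly flags the obstacle you name at the end---$K\notin L^{2}$, so the Fourier inversion is not immediate---and rather than invoking tempered distributions it argues by truncation. One first proves a lemma: if $\tilde q$ has fast decay and solves the forced linear equation $\partial_t\tilde q=-Z\,\mathrm{Re}\langle W,\int_0^t e^{i\Delta(t-s)/2}\tilde q_s\,ds\,W\rangle+h$ with $h$ integrably decaying, then $\tilde q_t=K(t)\tilde q_0+\int_0^t K(t-s)h(s)\,ds$, where $K$ is \emph{defined} by the right side of \eqref{eq:ReQt}. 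This is shown by honest $L^2$ Fourier analysis, legitimate because $\tilde q$ decays. One then constructs an auxiliary $u$ that agrees with the solution of \eqref{eq:DIEkt} on $[0,T]$ but is artificially damped afterwards; $u$ satisfies a forced equation whose forcing vanishes on $[0,T]$, so the lemma gives $u_t=K(t)$ on $[0,T]$, and since $T$ is arbitrary, \eqref{eq:ReQt} follows. Your distributional route is cleaner conceptually, but you would still need to check that the convolution identity $\widehat{f*K}=\hat f\,\hat K$ and the inversion survive when neither factor lies in $L^1$ or $L^2$; the paper's truncation sidesteps this entirely.
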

This proposition is proven in Subsections ~\ref{subsec:a1} and ~\ref{subsec:a2}. The basic ideas in the proof are not difficult. In a formal level Eq.~\eqref{eq:ReQt} is obtained by Fourier transformations, as mentioned after Proposition ~\ref{Prop:kernel}. In Subsection ~\ref{subsec:a2} we will make it rigorous. Eq.~\eqref{eq:zero} is resulted by the facts that $\frac{1}{iz+Z G(z)}$ is analytic in the set $Im\ z>0$ and its absolute value is sufficiently small when $|z|$ is large. Hence $$K(t)
=-(2\pi)^{-1}\int_{-\infty}^{\infty} \frac{1}{i(k+ia)+Z G(k+a)} e^{-i(k+ia)t}\ dk$$ for any $a>0,$ and moreover $K(t)\rightarrow 0$ as $a\rightarrow \infty$ if $t<0.$ We obtain the last identity ~\eqref{eq:convenient} by manipulating the expression in ~\eqref{eq:ReQt}.

To prove Proposition ~\ref{Prop:kernel} it suffices to derive a decay estimate for $K(t)$ from ~\eqref{eq:convenient} using the oscillatory nature of $cos kt$.
Since the function $Re\frac{1}{ik+ZG(k+i0)}:\mathbb{R}\rightarrow \mathbb{R}$ is smooth on the open set $(-\infty,\infty)\backslash\{0\}$, it is the lowest order term in the Taylor-expansion of the function in a neighborhood of $k=0$ that determines the decay in $t$.
\begin{lemma}\label{LM:TaylorExp}
The function $G(k+i0)$ defined in ~\eqref{eq:difGk} satisfies the estimate
\begin{equation}\label{eq:Gki0}
G(k+i0)=\left\{
\begin{array}{lll}
2^{\frac{3}{2}}(i-1)\pi^2 k^{\frac{1}{2}}+ C k+O(|k|^{\frac{3}{2}})\ \ \ \text{if}\ k>0\\
2^{\frac{3}{2}}(-i-1)\pi^2 |k|^{\frac{1}{2}}+C k+O(|k|^{\frac{3}{2}})\ \text{if}\ k<0\\
\end{array}
\right.
\end{equation} where $C$ is some constant.
\end{lemma}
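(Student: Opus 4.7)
The approach is Fourier-space analysis combined with Plemelj's formula and a rescaling. By Parseval,
\[
\langle (-\Delta + a + i0)^{-1} W, W\rangle = \int_{\mathbb{R}^3}\frac{|\hat W(\xi)|^2}{|\xi|^2 + a + i 0}\, d\xi,
\]
and the normalization $|\hat W(0)|^2 = 1$ from \eqref{eq:scale} combined with the Schwartz, radial nature of $\hat W$ ensures that $|\hat W(r)|^2$ is smooth in $r^2$ with value $1$ at the origin.

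Applying $\frac{1}{x\pm i0} = \text{P.V.}\frac{1}{x}\mp i\pi\delta(x)$, exactly one of the two resolvent expectations in $G(k+i0)$ picks up an imaginary delta piece, which in spherical coordinates equals
\[
\pi\int|\hat W(\xi)|^2\delta(|\xi|^2-2|k|)\,d\xi = 2\pi^2\sqrt{2|k|}\,|\hat W(\sqrt{2|k|})|^2 = 2^{3/2}\pi^2 |k|^{1/2} + O(|k|^{3/2}),
\]
with a sign depending on whether $k>0$ or $k<0$. This supplies the leading $|k|^{1/2}$ in the real part of $G$ via the $-i R_1 + i R_2$ combination. For the principal-value parts, I would subtract the finite constant $R_0 := \langle (-\Delta)^{-1} W, W\rangle$ and rescale $r = \sqrt{2|k|}\tau$; the nonsingular (positive-resolvent) integral yields $\mp 4\pi\sqrt{2|k|}\int_0^\infty\frac{|\hat W(\sqrt{2|k|}\tau)|^2}{\tau^2+1}\,d\tau = \mp 2^{3/2}\pi^2 |k|^{1/2} + O(|k|)$ by dominated convergence, supplying the leading $|k|^{1/2}$ in the imaginary part of $G$. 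The singular P.V.\ integral contributes only $O(|k|)$ (see below). Summing, the $R_0$ pieces cancel in $-iR_1 + iR_2$ and one obtains
\[
G(k+i0) = -2^{3/2}\pi^2|k|^{1/2} + i\,\mathrm{sgn}(k)\cdot 2^{3/2}\pi^2|k|^{1/2} + Ck + O(|k|^{3/2}),
\]
matching both lines of \eqref{eq:Gki0}.

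\textbf{Main obstacle.} The one point that needs care is showing that $\text{P.V.}\int_0^\infty\frac{|\hat W(\sqrt{2|k|}\tau)|^2}{\tau^2-1}\,d\tau$ remains bounded as $k \to 0$, so that the prefactor $\sqrt{|k|}$ in front of it does not add a spurious $\sqrt{|k|}$ term to the real part. This rests on the identity $\text{P.V.}\int_0^\infty\frac{d\tau}{\tau^2-1} = 0$ (from $\frac{1}{\tau^2-1} = \frac{1}{2}[(\tau-1)^{-1}-(\tau+1)^{-1}]$ and an $R\to\infty$ cancellation of the log terms), together with the desingularisation $|\hat W(\sqrt{2|k|}\tau)|^2 = |\hat W(\sqrt{2|k|})|^2 + [|\hat W(\sqrt{2|k|}\tau)|^2 - |\hat W(\sqrt{2|k|})|^2]$ at $\tau = 1$ via Lipschitz continuity of $|\hat W|^2$. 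The residual regular integral converges as $k \to 0$ to $\int_0^\infty\frac{|\hat W(r)|^2-1}{r^2}\,dr$, which is finite since $|\hat W|^2 - 1 = O(r^2)$ near zero and $\hat W$ is Schwartz at infinity. Everything else is direct Taylor expansion.
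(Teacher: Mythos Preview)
Your proposal is correct and arrives at the same leading coefficients, but the decomposition you use is genuinely different from the paper's. The paper first combines the two resolvents algebraically, writing $\frac{\rho^2}{\rho^2\pm 2k}=1\mp\frac{2k}{\rho^2\pm 2k}$ so that the constant pieces cancel and a factor of $k$ is pulled out:
\[
G(k+i0)=8\pi i\,k\Bigl[\int_0^\infty\frac{|\hat W(\rho)|^2}{\rho^2+2k+i0}\,d\rho+\int_0^\infty\frac{|\hat W(\rho)|^2}{\rho^2-2k-i0}\,d\rho\Bigr].
\]
It then Taylor-expands $|\hat W(\rho)|^2=1+\rho^2F(\rho^2)$ and evaluates the constant-$1$ contribution via the closed form $\int_0^\infty(\rho^2+2k)^{-1}d\rho=\frac{\pi}{2\sqrt2}\,k^{-1/2}$, analytically continued across the branch; this single formula produces both the real and imaginary $|k|^{1/2}$ pieces at once. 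The $\rho^2F(\rho^2)$ remainder is then bounded by a kernel estimate.

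Your route instead splits via Plemelj into a $\delta$-piece (giving $\operatorname{Re}G$) and two principal-value pieces, then handles the P.V.\ parts by subtracting $R_0=\langle(-\Delta)^{-1}W,W\rangle$ and rescaling $\rho=\sqrt{2|k|}\,\tau$. The non-singular P.V.\ reduces to $\int_0^\infty(\tau^2+1)^{-1}d\tau=\pi/2$, while the singular one rests on the identity $\text{P.V.}\int_0^\infty(\tau^2-1)^{-1}d\tau=0$. The paper's approach is a bit more compact because the algebraic combination sidesteps the separate P.V.\ bookkeeping and the $R_0$ subtraction; your approach is more distribution-theoretic and makes the origin of the real and imaginary parts of $G$ transparent (on-shell $\delta$ vs.\ dispersive P.V.). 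One small point: your bounds as stated give the remainder as $O(|k|)$, not $Ck+O(|k|^{3/2})$; to reach the sharper form you either push the Taylor expansion of $|\hat W(\sqrt{2|k|}\,\tau)|^2$ one order further, or note (as the paper does implicitly) that $G(k+i0)$ is smooth in $k^{1/2}$, so the $O(|k|)$ remainder automatically refines to $Ck+O(|k|^{3/2})$.
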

This lemma is proven in Subsection ~\ref{subsec:Taylor}, by Taylor-expanding the function $G(k+i0)$ in variable $k^{\frac{1}{2}}.$

Now we are ready to prove Proposition ~\ref{Prop:kernel}\\
{\bf{Proof of Proposition ~\ref{Prop:kernel}}}
To simplify matters we decompose $K(t)$ into two parts, according to the integration regions
\begin{equation}\label{eq:decomK}
K(t)=K_{+}(t)+K_{-}(t),
\end{equation} with $$K_{+}:=-\frac{1}{\pi} \int_{0}^{\infty} Re\frac{1}{ik+ZG(k+i0)} cos kt\ dk$$ and $$K_{-}:=-\frac{1}{\pi} \int_{-\infty}^{0} Re \frac{1}{ik+ZG(k+i0)} coskt\ dk.$$
We first estimate $K_{+}$. Since the leading order is determined by Taylor expansion of the integrand $Re\frac{1}{ik+ZG(k+i0)} cos kt$ around $k=0;$ it is natural to begin with studying this function. We define a new function $g:\mathbb{R}^{+}\rightarrow \mathbb{R}$ by
$$|k|^{-\frac{1}{2}} g(|k|^{\frac{1}{2}}):=Re -\frac{1}{\pi} \frac{1}{ik+ZG(k+i0)}.$$
By direct computation, using the result in ~\eqref{eq:Gki0}, we find that, in a neighborhood of $k=0,$
$$
\begin{array}{lll}
|k|^{-\frac{1}{2}} g(|k|^{\frac{1}{2}})
&=&-\frac{1}{\pi Z}\frac{ReG}{(\frac{k}{Z}+Im G)^2+(ReG)^2}
=\frac{1}{2^{\frac{5}{2}}\pi^3 Z}|k|^{-\frac{1}{2}}[1+O(k^{\frac{1}{2}})],
\end{array}
$$ where, in the last step, the result in ~\eqref{eq:Gki0} was used. The other important observations are that the function $g: \mathbb{R}^{+}\rightarrow \mathbb{C}$ is smooth on $[0,\infty)$ and satisfies the estimate $$|g(\rho)|\leq C(1+\rho)^{-3}.$$

Expanding $g(k)$ around $k=0$ we obtain
\begin{align*}
K_{+}(t)=&\int_{0}^{\infty} |k|^{-\frac{1}{2}} g(|k|^{\frac{1}{2}})\ coskt \ dk\\
=& 2\int_{0}^{\infty} g(\rho) cos(\rho^2 t)\ d\rho\\
=& 2 g(0)\int_{0}^{\infty}cos(\rho^2 t)\ d\rho+ D
\end{align*}
where $D$ is given by $$D:=2\int_{0}^{\infty} [g(\rho)-g(0)] cos(\rho^2 t)\ d\rho .$$

The first term on the right hand side is the dominant one,
\begin{equation}\label{eq:appFres}
2 g(0)\int_{0}^{\infty}cos(\rho^2 t)\ d\rho= 2g(0) t^{-\frac{1}{2}}\int_{0}^{\infty} cos x^2\ dx=
\frac{1}{8}\pi^{-\frac{5}{2}} Z^{-1} t^{-\frac{1}{2}},
\end{equation} using the Fresnel integral ~\eqref{eq:Fresnel}.

The second term, $D,$ is of the form
$$
D=\int_{0}^{\infty} H(\rho) cos(\rho^2 t)\rho \ d\rho
$$ where the function $H=\frac{2(g(\rho)-g(0))}{\rho}$ is smooth and is bounded uniformly by $C(1+\rho)^{-1}$. By standard techniques we find that
\begin{equation}\label{eq:difD}
D=O(t^{-\frac{3}{2}}).
\end{equation} This together with ~\eqref{eq:appFres} implies that
\begin{equation}
K_{+}=\frac{1}{8} \pi^{-\frac{5}{2}} Z^{-1} t^{-\frac{1}{2}} +O(t^{-\frac{3}{2}}).
\end{equation}

For $K_{-}$, we obtain, using almost identical arguments,
$$K_{-}=\frac{1}{8} \pi^{-\frac{5}{2}} Z^{-1} t^{-\frac{1}{2}} +O(t^{-\frac{3}{2}}).$$
These two results, together with ~\eqref{eq:decomK}, obviously imply Proposition ~\ref{Prop:kernel}.
\begin{flushright}
$\square$
\end{flushright}

In next three subsections we prove Lemma ~\ref{LM:TaylorExp} and Proposition ~\ref{Prop:estK}.

In what follows we often use Fourier transform. Its definition and properties are standard. Since constants are important in the present paper, we quote some of them explicitly.
For a function $f:\mathbb{R}^{d}\rightarrow \mathbb{C}$, its Fourier transformation $\hat{f}$ is defined as
$$\hat{f}(k):=(2\pi)^{-\frac{d}{2}} \int_{\mathbb{R}^{d}} e^{ik\cdot x} f(x)\ dx,$$ and the inverse transform as $$\check{f}(x):=(2\pi)^{-\frac{d}{2}} \int_{\mathbb{R}^{d}} e^{-ik\cdot x} f(k)\ dk.$$ Moreover, for arbitrary functions $f,\ g:\ \mathbb{R}^{d}\rightarrow \mathbb{C}$
\begin{equation}\label{eq:FTconv}
\widehat{fg}=(2\pi)^{-\frac{d}{2}}\int_{\mathbb{R}^d} \hat{f}(x-y)\hat{g}(y)dy.
\end{equation}
\subsection{Proof of Lemma ~\ref{LM:TaylorExp}}\label{subsec:Taylor}
\begin{proof}
By Fourier-transformation and introducing polar coordinates we find that $G(k+i0)$ takes the convenient form
\begin{align}
G(k+i0)=&-i\langle (\rho^2+2k+i0)^{-1}\hat{W}, \hat{W}\rangle +i\langle (\rho^2-2k-i0)^{-1}\hat{W}, \hat{W}\rangle\nonumber\\
=&i8\pi k [\int_{0}^{\infty} (\rho^2+2k+i0)^{-1}|\hat{W}(\rho)|^2 d\rho+\int_{0}^{\infty} (\rho^2-2k-i0)^{-1}|\hat{W}(\rho)|^2 d\rho].\label{eq:GKFT}
\end{align}

We expand $|\hat{W}(\rho)|^2$ in a neighborhood of $\rho=0.$ The fact that the function $W:\mathbb{R}^3\rightarrow \mathbb{R}$ is smooth, spherically symmetric and decays rapidly at $\infty$ implies that the function $\hat{W}$ is smooth in the variable $\rho^2, $ or, equivalently, that there exists a smooth and rapidly decaying function $F:\mathbb{R}^{+}\rightarrow \mathbb{R}$ such that
\begin{equation}\label{eq:Tal}
|\hat{W}(\rho)|^2=1+\rho^2 F(\rho^2).
\end{equation}
Here the condition that $|\hat{W}(0)|=1,$ in ~\eqref{eq:scale}, is used. Plugging ~\eqref{eq:Tal} into ~\eqref{eq:GKFT} we find that
$$
G(k+i0)=G_0(k)+G_1(k)+G_2(k)
$$
with $$G_0(k+i0):=i 8\pi k [\int_{0}^{\infty} (\rho^2+2k+i0)^{-1} d\rho+\int_{0}^{\infty} (\rho^2-2k-i0)^{-1} d\rho],$$
$$G_1(k+i0):=8i\pi k \int_{0}^{\infty} F(\rho^2)\ d\rho,$$ and
$$G_2(k+i0):=-16i\pi k^2 [\int_{0}^{\infty} (\rho^2+2k+i0)^{-1}F(\rho^2)\ d\rho-\int_{0}^{\infty} (\rho^2-2k-i0)^{-1}F(\rho^2)\ d\rho].$$

In the next we estimate the three terms.

The estimate on $G_1$ is evident:
\begin{equation}
G_{1}(k+i0)=O(k).
\end{equation}

The function $G_0(k+i0)$ has an explicit expression: For any $k\in \mathbb{C}\backslash \mathbb{R}^{-},$ we observe that
\begin{equation}
\int_{0}^{\infty} (\rho^2+2k)^{-1} d\rho=\frac{\pi}{2\sqrt{2}} k^{-\frac{1}{2}},
\end{equation} where $k^{-\frac{1}{2}}=|k|^{-\frac{1}{2}},$ for $k\in \mathbb{R}^{+}.$
The proof consists in observing that
if $k>0$ then $$\int_{0}^{\infty} (\rho^2+2k)^{-1} d\rho=\frac{1}{\sqrt{2}} k^{-\frac{1}{2}} \int_{0}^{\infty} (\rho^2+1)^{-1} d\rho=
\frac{\pi}{2\sqrt{2}} k^{-\frac{1}{2}}.$$ Consequently
\begin{equation}
G_0(k+i0)=\left\{
\begin{array}{lll}
2^{\frac{3}{2}}(i-1)\pi^2 k^{\frac{1}{2}}\ \ \ \ \ \  \text{if}\ k>0\\
2^{\frac{3}{2}}(-i-1)\pi^2 |k|^{\frac{1}{2}}\ \ \ \text{if}\ k<0\\
\end{array}
\right.
\end{equation}

To estimate $G_2$ it is sufficient to show that if a function $\phi$ decays sufficiently fast at $\infty$ then, for any small $k$,
\begin{equation}\label{eq:estG2}
\int_{-\infty}^{\infty} (\rho^2\pm 2k\pm i0)^{-1}\phi(\rho)\ d\rho=O(|k|^{-\frac{1}{2}}).
\end{equation} Indeed, we use Fourier transformation to relate $(\rho^2\pm 2k\pm i0)^{-1}$ to $(-\partial_{x}^2\pm 2k\pm i0)^{-1}$
and find that
$$(-\partial_{x}^2 \pm 2k\pm i0)^{-1} \hat{\phi}|_{x=0}= C \int_{-\infty}^{\infty} (\rho^2\pm 2k\pm i0)^{-1}\phi(\rho)\ d\rho$$
for some constant $C\not=0.$ The operator $(-\partial_{x}^2 + 2k),\ k\in \mathbb{C}\backslash{R}^{-},$
has an integral kernel $C_1 k^{-\frac{1}{2}} e^{-k^{\frac{1}{2}}|x-y|},$ where $C_1$ is a constant. This yields $$|\int_{-\infty}^{\infty} (\rho^2\pm 2k\pm i0)^{-1}\phi(\rho)\ d\rho|\lesssim |k|^{-\frac{1}{2}}\|\phi\|_{L^{1}}$$
which is the desired estimate ~\eqref{eq:estG2}.

Collecting the estimates above we complete our proof.
\end{proof}

\subsection{Proof of ~\eqref{eq:zero} of Proposition ~\ref{Prop:estK}}\label{subsec:a1}
We start by extending the domain of function $G$ from $k\in \mathbb{R}$ to $Im\ k>0$.
By Fourier transformation we find that $G(k+i0)$ takes the form
\begin{equation}
G(k+i0)=-i\langle (\rho^2+2k+i0)^{-1}\hat{W}, \hat{W}\rangle +i\langle (\rho^2-2k-i0)^{-1}\hat{W}, \hat{W}\rangle.
\end{equation}
Its extension is the function $G:\ \{k| Imk>0\}\rightarrow \mathbb{C}$ defined by
$$G(k):=8i\pi k [\int_{0}^{\infty} (\rho^2+2k)^{-1}|\hat{W}(\rho)|^2 d\rho+\int_{0}^{\infty} (\rho^2-2k)^{-1}|\hat{W}(\rho)|^2 d\rho].$$ It is easy to see that this function is analytic in $k,\ Imk>0$, and $G(k+i0),\ k\in \mathbb{R},$ is its limit on the real line.

~\eqref{eq:zero} follows by contour integration. To guarantee its applicability, we have to verify several criteria.

We start with the following result.
\begin{lemma}\label{LM:analytic} In the complex region $Imk>0,$ the function
$\frac{1}{ik+Z G(k)} e^{ik t} $ is analytic in $k$ for any (fixed) $t,$ and
\begin{equation}\label{eq:anaSma}
\displaystyle\lim_{|k|\rightarrow \infty}|\frac{1}{ik+Z G(k)}|\rightarrow 0.
\end{equation}
\end{lemma}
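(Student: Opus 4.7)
The plan is to verify three items: analyticity of $G$ itself on $\{\Im k>0\}$, non-vanishing of $ik+ZG(k)$ there (so the reciprocal is analytic), and the decay $|(ik+ZG(k))^{-1}|\to 0$ as $|k|\to\infty$. The factor $e^{ikt}$ is entire in $k$ for each fixed real $t$, so it plays no role in any of these estimates and is merely carried along.

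For analyticity of $G$, I note that when $\Im k>0$ neither of $\rho^2\pm 2k$ vanishes (their imaginary parts are $\pm 2\Im k\neq 0$), and rapid decay of $|\hat W|$ lets us differentiate under the integral sign in the defining formula for $G(k)$, producing an analytic function on the upper half-plane. The heart of the argument is the non-vanishing. The strategy is to cast $G$ as a Herglotz integral against a positive measure: combining the two terms in the defining formula via $(\rho^2+2k)^{-1}+(\rho^2-2k)^{-1}=2\rho^2/(\rho^4-4k^2)$, doing the partial-fraction split $(\rho^4-4k^2)^{-1}=(4k)^{-1}[(\rho^2-2k)^{-1}-(\rho^2+2k)^{-1}]$, substituting $s=\rho^2$, and symmetrizing across $s=0$ should yield
\begin{align*}
G(k)\;=\;2\pi i\int_{-\infty}^{\infty}\frac{\tilde\psi(s)}{s-2k}\,ds,\qquad \tilde\psi(s)\;:=\;\sqrt{|s|}\,|\hat W(\sqrt{|s|})|^2.
\end{align*}
The weight $\tilde\psi$ is nonnegative, in $L^{1}(\R)$ by rapid decay of $\hat W$, and not identically zero (since $\hat W(0)\neq 0$ by the normalization \eqref{eq:scale}). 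For $\Im k>0$ the identity $\Im(s-2k)^{-1}=2\,\Im k/|s-2k|^2>0$ then forces
\begin{align*}
\Re G(k)\;=\;-4\pi\,\Im k\int_{\R}\frac{\tilde\psi(s)}{|s-2k|^2}\,ds\;<\;0.
\end{align*}
Since $Z>0$, this gives $\Re(ik+ZG(k))=-\Im k+Z\,\Re G(k)<0$ throughout the upper half-plane, so $ik+ZG(k)$ never vanishes there and $(ik+ZG(k))^{-1}$ is analytic.

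For the decay, the same Herglotz representation (or equivalently the compact form $G(k)=16\pi i k\int_{0}^{\infty}\rho^2|\hat W(\rho)|^2/(\rho^4-4k^2)\,d\rho$) expanded for large $|k|$ yields $G(k)=O(|k|^{-1})$, by a routine splitting $|s|\leq|k|$ versus $|s|>|k|$ together with $\tilde\psi\in L^1$. Consequently $|ik+ZG(k)|\geq|k|-Z|G(k)|\sim|k|$, and the required decay $|(ik+ZG(k))^{-1}|=O(|k|^{-1})\to 0$ follows. I expect the main obstacle to be the algebraic step producing the Herglotz representation with a nonnegative weight $\tilde\psi$; once that representation is in hand, both the non-vanishing and the decay fall out from elementary sign and size estimates, without needing any detailed information about the behavior of $G$ near $k=0$ (which was the subject of Lemma~\ref{LM:TaylorExp}).
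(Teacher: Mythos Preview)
Your proof is correct, and your route to the non-vanishing of $ik+ZG(k)$ is genuinely different from the paper's. The paper factors $ik+ZG(k)=ik\bigl[1+16Z\pi\int_0^\infty \rho^2(\rho^4-4k^2)^{-1}|\hat W(\rho)|^2\,d\rho\bigr]$ and then runs a case analysis: when $\Re k=0$ the integrand is positive, and when $\Re k\neq 0$ the integral has nonzero imaginary part. Your Herglotz representation $G(k)=2\pi i\int_{\R}\tilde\psi(s)(s-2k)^{-1}\,ds$ with $\tilde\psi\geq 0$ bypasses this split entirely: it yields $\Re G(k)<0$ in one line, whence $\Re(ik+ZG(k))=-\Im k+Z\Re G(k)<0$ throughout $\Im k>0$. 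This is cleaner and more conceptual; the cost is the algebraic work of deriving the Herglotz form, which you correctly identify as the only nontrivial step.

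One small caution on the decay: the splitting ``$|s|\le |k|$ versus $|s|>|k|$ together with $\tilde\psi\in L^1$'' does not by itself give a uniform $O(|k|^{-1})$ bound on the second piece when $\Im k$ is allowed to approach $0$, since $|s-2k|$ can be as small as $2\Im k$. You should either invoke the rapid decay of $\tilde\psi$ (not just $L^1$) to handle the region $|s|>|k|$, or simply note that the lemma only asks for $\to 0$, which follows on each horizontal line $\Im k=a>0$ (where $|s-2k|\ge 2a$ uniformly), and that this is exactly what the subsequent contour-deformation argument uses. The paper itself just asserts $G(k)\to 0$ without further detail, so your treatment is already more explicit.
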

\begin{proof}
~\eqref{eq:anaSma} is implied by the fact that $G(k)\rightarrow 0,$ as $|k|\rightarrow \infty.$

It is easy to see that the function $ik+Z G(k)$ is analytic, in the region $Im k>0,$ because
the operator $(-\Delta\pm 2k)^{-1}$ is well defined and analytic in $k$. To prove analyticy of $e^{ikt}\frac{1}{ik+ZG(k)}$ we only need to prove that the denominator does vanish anywhere, i.e.
\begin{equation}\label{eq:Fnonzero}
|ik+Z G(k)|\not=0 \ \text{when}\ Im k>0.
\end{equation}
For this purpose we rewrite the expression of $G(k)$ in ~\eqref{eq:ForTran} to obtain
\begin{equation}\label{eq:newForm}
ik+ZG(k)= ik[1+ 8Z\pi \int_{0}^{\infty} \rho^2(\rho^{4}-4k^2) |\hat{W}(\rho)|^2\ d\rho].
\end{equation}
In what follows we consider two cases, $Re k=0$ and $Rek\not=0$ separately.
\begin{itemize}
\item[(A)] If $Re k=0$ then, by the fact that $Im k>0,$ we find that $- k^2>0$, hence $\rho^4- 4k^2>0,$ and this implies ~\eqref{eq:Fnonzero}.
\item[(B)] If $Rek\not=0$ the key observation is that $\int_{0}^{\infty} \rho^2(\rho^{4}-4k^2) |\hat{W}(\rho)|^2\ d\rho$
has a non vanishing imaginary part. Indeed $- 4k^2$ can be written in the form $- 4k^2=a+ib, $ with $b\not=0.$
We rewrite $ (\rho^{4}-4k^2)^{-1}$ as $ (\rho^4-4k^2)^{-1}=[(\rho^4+a)^2+b^2]^{-1}(\rho^4+a-ib).$
Hence $$Im\int_{0}^{\infty} \rho^2(\rho^{4}-4k^2) |\hat{W}(\rho)|^2\ d\rho=-b\int_{0}^{\infty}[(\rho^4+a)^2+b^2]^{-1} |\hat{W}(\rho)|^2\ d\rho\not=0.$$ This together with ~\eqref{eq:newForm} implies ~\eqref{eq:Fnonzero}.
\end{itemize}
\end{proof}
We continue to prove ~\eqref{eq:zero}. The fact that $\frac{1}{ik+Z G(k)}e^{ikt}$ is analytic on the domain $Im k>0$ and the decay estimate in ~\eqref{eq:anaSma}, proven in Lemma ~\ref{LM:analytic}, imply that, for any $a>0,$
\begin{align*}
F(t)&:=\int_{-\infty}^{\infty}\frac{1}{ik+Z G(k+i0)} e^{-ikt} dk\\
&=\int_{-\infty}^{\infty}\frac{1}{i(k+ia)+Z G(k+ia)} e^{-i(k+ia)t} dk\\
&=e^{a t }\int_{-\infty}^{\infty}\frac{1}{i(k+ia)+Z G(k+ia)} e^{-ik t} dk.
\end{align*}
If $t<0$ it is easy to see that $F(t)=0$ by letting $a\rightarrow +\infty.$ This is ~\eqref{eq:zero}.

We turn to ~\eqref{eq:convenient}. The definition of $G(k+i0)$ implies that $G(k+i0)=-G(-k-i0)$. Moreover, by changing variables $k\rightarrow -k$, we find that, for any $t>0$
$$F(t)=\int_{-\infty}^{\infty}\frac{1}{-ik-ZG(k-i0)} e^{ikt} dk$$ and
$$F(-t)=\int_{-\infty}^{\infty}\frac{1}{-ik-ZG(k-i0)} e^{-ikt} dk=0.$$
These identities, together with ~\eqref{eq:zero} and the observation that $\overline{G(k+i0)}=-G(k-i0),$ imply $$
\begin{array}{lll}
F(t)&=&\frac{1}{2}\int_{-\infty}^{\infty}[\frac{1}{ik+Z G(k+i0)} +\frac{1}{-ik-Z G(k-i0)}][e^{ikt}+e^{-ikt}]\ dk\\
& &\\
&=&2\int_{-\infty}^{\infty}Re \frac{1}{ik+Z G(k+i0)}  coskt \ dk.
\end{array}
$$ The desired result follows by using the definition of $F(t)$.
\begin{flushright}
$\square$
\end{flushright}

\subsection{Proof of ~\eqref{eq:ReQt} of Proposition ~\ref{Prop:estK}}\label{subsec:a2}
The idea underlying the proof is simple, at least on a formal level: By Fourier transforming both sides of ~\eqref{eq:DIEkt} one obtains an explicit expression for $\int_{0}^{\infty}\ e^{ikt} q_{t}\ dt.$ Then, by inverse Fourier transformation, one arrives at the desired result. The setback is that the Fourier transformation and its inverse are defined on $L^{2},$ and
the fact that the function $K\not\in L^2$ makes things a little harder.
In what follows, we begin with another result and show that it implies ~\eqref{eq:DIEkt}.

The result is
\begin{lemma}\label{LM:equiva}
Suppose the function $\tilde{q}_{t}:\mathbb{R}^{+}\rightarrow \mathbb{R}$ satisfies the estimate $|\tilde{q}_{t}|\leq C_1(1+t)^{-2}$ and is the solution to the equation
\begin{equation}\label{eq:linear2}
\partial_{t}\tilde{q}_{t}=- Z Re \langle W,\ \int_{0}^{t}  e^{i\frac{\Delta}{2}(t-s)} \tilde{q}_{s}\ ds\ W\rangle+h(t)
\end{equation} where the function $h:\mathbb{R}^{+}\rightarrow \mathbb{R}$ decays like $|h(t)|\leq c_2 (1+t)^{-\frac{3}{2}}.$ Then $\tilde{q}_{t}$ also satisfies the equation
\begin{equation}
\tilde{q}_{t}= K(t)\tilde{q}_0+\int_{0}^{t} K(t-s) h(s)\ ds.
\end{equation}
\end{lemma}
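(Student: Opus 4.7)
The plan is to verify directly that the candidate right-hand side
\[
\phi(t) := K(t)\tilde q_0 + \int_0^t K(t-s)\, h(s)\,ds
\]
is itself a solution of the integro-differential equation \eqref{eq:linear2} with the same initial datum $\phi(0)=\tilde q_0$, and then invoke a uniqueness argument to conclude $\tilde q_t \equiv \phi(t)$. This is just the variation-of-parameters (Duhamel) identity for the Wiener--Hopf equation \eqref{eq:DIEkt}.

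For the first step, I would differentiate $\phi$, using $K(0)=1$ and the equation for $\dot K$ given in \eqref{eq:DIEkt}, to obtain
\[
\dot\phi(t) = h(t) - Z\tilde q_0\,\operatorname{Re}\bigl\langle W,\ \textstyle\int_0^t e^{i\frac{\Delta}{2}(t-r)}K(r)\,dr\,W\bigr\rangle - Z\int_0^t h(s)\,\operatorname{Re}\bigl\langle W,\ \textstyle\int_0^{t-s} e^{i\frac{\Delta}{2}(t-s-r)}K(r)\,dr\,W\bigr\rangle\,ds.
\]
Then I would substitute $\phi$ into the right-hand side of \eqref{eq:linear2}, apply Fubini to swap the order of integration in the resulting double integral $\int_0^t e^{i\frac{\Delta}{2}(t-s)}\int_0^s K(s-r)h(r)\,dr\,ds$, and change variables $u=s-r$. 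The result matches the expression for $\dot\phi$ above, term by term, and the initial values agree trivially. The Fubini interchange is legitimate on any bounded interval because $|K(t)|\lesssim t^{-1/2}$ from Proposition~\ref{Prop:kernel} and $|h(s)|\lesssim (1+s)^{-3/2}$ by hypothesis, making the triple integrand absolutely integrable.

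For uniqueness, if two solutions $\tilde q^{(1)}, \tilde q^{(2)}$ of \eqref{eq:linear2} share the same initial value and both decay like $(1+t)^{-2}$, then their difference $\psi$ satisfies the homogeneous Volterra-type equation $\dot\psi(t) = -Z\operatorname{Re}\langle W,\ \int_0^t e^{i\frac{\Delta}{2}(t-s)}\psi(s)\,ds\,W\rangle$ with $\psi(0)=0$. Since the memory kernel $\operatorname{Re}\langle W,\ e^{i\frac{\Delta}{2}(t-s)}W\rangle$ is bounded uniformly, a standard Gronwall estimate on each finite interval $[0,T]$ forces $\psi\equiv 0$. Alternatively, one could avoid the uniqueness detour by Fourier-transforming \eqref{eq:linear2} in $t$ (extending to $t<0$ by zero), reading off $\hat{\tilde q}(k)(ik+ZG(k+i0))=\tilde q_0+\hat h(k)$ from the identity $\widehat{\langle W, e^{i\frac{\Delta}{2}(\cdot)}W\rangle\chi_0}=-G(k+i0)$ noted below \eqref{eq:Iform}, and inverting via \eqref{eq:KtFourier}; the obstacle there is that $K\notin L^2$, requiring the same $k\to k+ia$ regularization as in Subsection~\ref{subsec:a1}. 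The main technical obstacle in either route is the bookkeeping needed to ensure the slow decay of $K$ is always controlled by the faster decay of $h$ and $\tilde q$; once that accounting is made explicit, the argument is mechanical.
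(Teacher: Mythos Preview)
Your argument is correct, but it proceeds by a genuinely different route from the paper. The paper Fourier-transforms \eqref{eq:linear2} in $t$, obtains
\[
\int_0^\infty e^{ikt}\tilde q_t\,dt=-\frac{1}{ik+ZG(k+i0)}\,\tilde q_0-\frac{1}{ik+ZG(k+i0)}\int_0^\infty e^{ikt}h(t)\,dt,
\]
and then inverts, identifying the first factor with $K$ via \eqref{eq:ReQt} and the second with the convolution via \eqref{eq:FTconv} and $K(t)=0$ for $t<0$. Your approach stays entirely in the time domain: you verify that the Duhamel expression $\phi$ solves \eqref{eq:linear2} (using only the defining equation \eqref{eq:DIEkt} for $K$) and then invoke Gronwall uniqueness. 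This is more elementary and self-contained. Two small remarks: first, you cite Proposition~\ref{Prop:kernel} for $|K(t)|\lesssim t^{-1/2}$ to justify Fubini, but in the paper's logical order that proposition rests on \eqref{eq:ReQt}, which in turn is proved \emph{via} this lemma; fortunately your Fubini step only needs local boundedness of $K$, which follows directly from local well-posedness of the Volterra equation \eqref{eq:DIEkt}. Second, your uniqueness statement stipulates that both candidates decay like $(1+t)^{-2}$, yet $\phi$ need not---but as you note, the Gronwall estimate on each $[0,T]$ does not use the decay, so this is harmless. The trade-off is that the paper's Fourier proof is what actually delivers the representation \eqref{eq:ReQt} (through the auxiliary cutoff function $u_t$ in Subsection~\ref{subsec:a2}); your time-domain argument proves the lemma as stated but would not by itself yield that formula.
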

This lemma will be proven shortly.

We now prove~\eqref{eq:ReQt} by assuming Lemma ~\ref{LM:equiva}.\\

{\bf{Proof of ~\eqref{eq:ReQt}}}
To make Lemma ~\ref{LM:equiva} applicable we define a new smooth function $u:\mathbb{R}^{+}\rightarrow \mathbb{R}$ with the properties $$u_{t}=K(t)\ \text{if}\ t\leq T,\ \text{and}\ |u_{t}|\leq C(1+t)^{-2}$$ for some constant $C.$
This function is well defined, because, in any finite time interval $[0,T],$ the solution $K$ to the linear equation ~\eqref{eq:DIEkt} exists.
By direct computation we find that $u$ satisfies the equation
\begin{equation}\label{eq:ut}
\partial_{t} u_t=- Z Re \langle W,\ \int_{0}^{t}  e^{i\frac{\Delta}{2}(t-s)} u_{s}\ ds\ W\rangle- f_{t}
\end{equation} where the function $f$ is defined by $$f_t:=- Z Re \langle W,\ \int_{0}^{t}  e^{i\frac{\Delta}{2}(t-s)} u_{s}\ ds\ W\rangle-\partial_{t}u_{t}:\mathbb{R}^{+}\rightarrow \mathbb{R}$$ and satisfies the estimates $$f_{t}=0\ \text{if}\ t\leq T,\ \text{and}\ |f_{t}|\leq C_{u}(1+t)^{-\frac{3}{2}}.$$ These estimates are obtained easily, hence we omit the details.

Applying Lemma ~\ref{LM:equiva} to~\eqref{eq:ut} and using the fact that $f_{t}=0$ if $t\leq T,$ we find that if $t\leq T$ then $u_t=K(t)$ takes the desired form. Since $T$ is arbitrary, this equation holds for any time.
\begin{flushright}
$\square$
\end{flushright}

{\bf{Proof of Lemma ~\ref{LM:equiva}}}
We Fourier-transform both sides of ~\eqref{eq:linear2} to derive an expression for $\int_{0}^{\infty} e^{ikt} \tilde{q}_{t} dt$:
$$
\begin{array}{lll}
\int_{0}^{\infty} e^{ikt} \partial_{t}\tilde{q}_{t} dt&=&-\tilde{q}_{0}-ik \int_{0}^{\infty} e^{ikt} \tilde{q}_{t} dt\\
& &\\
&=&-Z \int_{0}^{\infty} e^{ikt} Re \langle W,\ \int_{0}^{t}  e^{i\frac{\Delta}{2}(t-s)}  \tilde{q}_{s}ds\ W\rangle \ dt+ \int_{0}^{\infty} e^{ikt}h(t)\ dt.
\end{array}
$$
An important observation is that the first term on the right hand side admits a simpler expression:
\begin{equation}\label{eq:ForTran}
-\int_{0}^{\infty} e^{ikt} Re \langle W,\ \int_{0}^{t}  e^{i\frac{\Delta}{2}(t-s)}  \tilde{q}_{s}ds\ W\rangle dt= G(k+i0)\int_{0}^{\infty} e^{ikt} \tilde{q}_t\ dt,
\end{equation} where the function $G(\cdot+i0):\mathbb{R}\rightarrow \mathbb{C}$ is defined in \eqref{eq:difGk}.

Collecting the identities above we find that
$$
\int_{0}^{\infty} e^{ikt} \tilde{q}_{t}dt=-\frac{1}{ik+Z G(k+i0)} q_{0}-\frac{1}{ik+Z G(k+i0)}\int_{0}^{\infty} e^{ikt} h(t)\ dt.
$$
Inverse-Fourier transform of both sides of the equation yields
$$
\tilde{q}_{t}= -(2\pi)^{-1}\int_{-\infty}^{\infty} \frac{1}{ik+Z G(k+i0)} e^{-ikt}\ dk\ \tilde{q}_{0}+\frac{1}{2\pi} \int_{-\infty}^{\infty}e^{-ikt} H(k)\ dk,
$$
where $H(k)$ is defined as
$$H(k):=-\frac{1}{ik+Z G(k+i0)}\int_{0}^{\infty} e^{ikt}h(t)\ dt.$$
By applying Fourier transformation, using ~\eqref{eq:FTconv} and the fact that $K(t)=0,$ for $t<0,$ we obtain that
\begin{equation}\label{eq:kernel}
\frac{1}{2\pi} \int_{-\infty}^{\infty}e^{-ikt} H(k)\ dk=\int_{0}^{t}K(t-s)h(s) \ ds.
\end{equation}

Collecting the estimates above we complete the proof.
\begin{flushright}
$\square$
\end{flushright}
\subsection{Some Decay Estimates}
In this part, we collect some estimates used in the main part of the paper.
\begin{lemma}\label{LM:propaW}
If $W$ is a smooth, spherically symmetric and rapidly decaying function satisfying the condition
 $|\hat{W}(0)|=1$ then, as $t\rightarrow \infty$,
\begin{equation}\label{eq:asymp}
\begin{array}{lll}
Re\langle W,  e^{i\frac{\Delta}{2}t} W\rangle&=&-2 t^{-\frac{3}{2}} \pi^{\frac{3}{2}}+O(t^{-\frac{5}{2}})\\
& &\\
2Re\langle W, (i\Delta)^{-1} e^{i\frac{\Delta}{2}t} W\rangle&=& 4 t^{-\frac{1}{2}} \pi^{\frac{3}{2}} +O(t^{-\frac{3}{2}}).
\end{array}
\end{equation}
\end{lemma}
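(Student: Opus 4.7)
My plan is to reduce both quantities to one-dimensional oscillatory integrals via Fourier transform and then separate a Fresnel-type leading term from a rapidly decaying remainder. Since $W$ is real, smooth, spherically symmetric and rapidly decaying, $\hat W$ is real and radial, so by Plancherel and by passing to spherical coordinates in $\mathbb{R}^{3}$,
\begin{align*}
\mathrm{Re}\langle W,e^{i\tfrac{\Delta}{2}t}W\rangle
&=4\pi\int_{0}^{\infty}|\hat W(\rho)|^{2}\rho^{2}\cos(\rho^{2}t/2)\,d\rho,\\
2\,\mathrm{Re}\langle W,(i\Delta)^{-1}e^{i\tfrac{\Delta}{2}t}W\rangle
&=8\pi\int_{0}^{\infty}|\hat W(\rho)|^{2}\sin(\rho^{2}t/2)\,d\rho,
\end{align*}
where the factor $\rho^{2}$ from the Jacobian cancels the $|k|^{-2}$ in the second expression. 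This is the step that makes the analysis tractable: everything is reduced to two scalar oscillatory integrals.

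Next, I invoke the expansion $|\hat W(\rho)|^{2}=1+\rho^{2}F(\rho^{2})$ with $F$ smooth and rapidly decaying (this is exactly \eqref{eq:Tal}, which rests on the normalization $|\hat W(0)|=1$ and on $\hat W$ being a smooth function of $\rho^{2}$). Substituting this expansion splits each integral into a universal Fresnel piece and a remainder controlled by $F$.

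To evaluate the Fresnel pieces, I would change variables $u=\rho^{2}t/2$ and use the standard Abel-regularized identity
\[
\int_{0}^{\infty}u^{s-1}e^{iu}\,du=\Gamma(s)e^{i\pi s/2},\qquad 0<\mathrm{Re}\,s<\tfrac{3}{2},
\]
at $s=\tfrac{3}{2}$ for the cosine integral and at $s=\tfrac{1}{2}$ for the sine integral. The first gives
\[
4\pi\int_{0}^{\infty}\rho^{2}\cos(\rho^{2}t/2)\,d\rho
=4\pi\cdot\tfrac{\sqrt{2}}{t^{3/2}}\,\mathrm{Re}\bigl[\Gamma(\tfrac{3}{2})e^{i3\pi/4}\bigr]
=-2\pi^{3/2}t^{-3/2},
\]
and the second gives
\[
8\pi\int_{0}^{\infty}\sin(\rho^{2}t/2)\,d\rho
=8\pi\cdot\tfrac{1}{\sqrt{2t}}\,\mathrm{Im}\bigl[\Gamma(\tfrac{1}{2})e^{i\pi/4}\bigr]
=4\pi^{3/2}t^{-1/2},
\]
matching the stated leading constants.

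The remaining task is to bound the contribution of $\rho^{2}F(\rho^{2})$ by $O(t^{-5/2})$ and $O(t^{-3/2})$ respectively. Changing variables $v=\rho^{2}$ writes these remainders as
\[
\int_{0}^{\infty}\phi(v)\cos(vt/2)\,dv,\qquad\int_{0}^{\infty}\psi(v)\sin(vt/2)\,dv,
\]
with $\phi(v)\sim v^{3/2}F(v)$ and $\psi(v)\sim v^{1/2}F(v)$, both vanishing at $v=0$ and decaying rapidly at infinity. Two integrations by parts in $v$ convert each to an integral whose integrand has, at worst, an integrable $v^{-1/2}$ singularity at the origin (the derivatives of $v^{3/2}$ and $v^{1/2}$ produce such weights). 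A final scaling $u=vt/2$ of the remaining Fresnel-type integral yields an extra $t^{-1/2}$, so each remainder decays by two additional factors of $t^{-1}$ plus $t^{-1/2}$ beyond the leading term, giving precisely the claimed $O(t^{-5/2})$ and $O(t^{-3/2})$ errors. The main obstacle I anticipate is simply the bookkeeping: one must verify that the endpoint terms in the integrations by parts vanish and that the derivatives $\phi'',\psi''$ have only the $v^{-1/2}$ singularity (no worse), for which the smoothness and rapid decay of $F$ are exactly what is required.
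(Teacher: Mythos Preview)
Your approach is essentially the paper's: pass to Fourier space and spherical coordinates, split $|\hat W(\rho)|^{2}=1+\rho^{2}F(\rho^{2})$ via \eqref{eq:Tal}, evaluate the leading Fresnel integrals explicitly, and control the $F$-remainder by integration by parts plus scaling. The paper does the same thing with slightly different bookkeeping (it integrates by parts once in $\rho$ before splitting and appeals to the elementary Fresnel integral \eqref{eq:Fresnel} instead of the $\Gamma$-function identity).

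One small correction to your remainder argument: for the second integral, $\psi(v)\sim v^{1/2}F(v)$ has $\psi''(v)\sim v^{-3/2}$ near $v=0$, which is \emph{not} integrable, so a second integration by parts would fail there. Fortunately you do not need it: a single integration by parts already produces $\psi'(v)\sim v^{-1/2}$, and the scaling $u=vt/2$ then gives the extra $t^{-1/2}$, yielding $O(t^{-1}\cdot t^{-1/2})=O(t^{-3/2})$ as required. Only the first remainder (with $\phi(v)\sim v^{3/2}F(v)$) genuinely needs two integrations by parts to reach $O(t^{-5/2})$.
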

\begin{proof}
We start with proving the first equation. By Fourier transformation, introducing polar coordinates, and integrating by parts we find that
\begin{align}
Re\langle W,  e^{i\frac{\Delta}{2}t} W\rangle &=Re\langle \hat{W}, e^{-i\frac{|k|^2}{2} t}\hat{W}\rangle\nonumber\\
&= 4\pi \int_{0}^{\infty} \rho^2 cos(\frac{1}{2}\rho^2 t) |\hat{W}(\rho)|^2 d\rho\nonumber\\
&= -\frac{4\pi}{t} \int_{0}^{\infty} sin(\frac{1}{2}\rho^2 t) |\hat{W}(\rho)|^2 d\rho-\frac{4\pi}{t}\int_{0}^{\infty} sin(\frac{1}{2}\rho^2 t) \rho \partial_{\rho}[|\hat{W}(\rho)|^2] d\rho\label{eq:computation}\\
&=-\frac{4\pi}{t} [\int_{0}^{\infty} sin(\frac{1}{2}\rho^2 t) d\rho+\int_{0}^{\infty} sin(\frac{1}{2}\rho^2 t) F(\rho)\ d\rho]\nonumber
\end{align} where $F(\rho)$ is defined by $$F(\rho):=|\hat{W}(\rho)|^2 -1+\rho \partial_{\rho}|\hat{W}(\rho)|^2.$$
By rescaling $\rho (\frac{t}{2})^{\frac{1}{2}}\rightarrow \rho$ it is easy to see that the first term on the right hand side takes the form
\begin{equation}\label{eq:sinRho}
-\frac{4\pi}{t} \int_{0}^{\infty} sin(\frac{1}{2}\rho^2 t)\ d\rho=-4\sqrt{2}\pi t^{-\frac{3}{2}}\int_{0}^{\infty} sin(\rho^2)\ d\rho=
-2\pi^{\frac{3}{2}} t^{-\frac{3}{2}}.
\end{equation} Here the fact that $\int_{0}^{\infty} sin x^2 dx=(\frac{\pi}{8})^{\frac{1}{2}}$ is used; (see ~\eqref{eq:Fresnel}).

Now we turn to the second term. Define a new function, $\tilde{F}(\rho),$ by $\tilde{F}(\rho):=\rho^{-2}F(\rho).$ This changes our expression to $$
\int_{0}^{\infty} sin(\frac{\rho^2 t}{2}) F(\rho)\ d\rho =\int_{0}^{\infty}\rho^{2} sin(\frac{\rho^{2}t}{2}) \tilde{F}(\rho)\ d\rho.$$ Thanks to the fact that $F(\rho)=O(\rho^2),$ in a neighborhood of $\rho=0$ and because of smoothness, we obtain that $\tilde{F}$ is a smooth function. By standard techniques it is seen that
\begin{equation}
\int_{0}^{\infty} sin(\frac{\rho^2 t}{2}) F(\rho)\ d\rho=O(t^{-\frac{3}{2}}).
\end{equation}
This together with ~\eqref{eq:sinRho} implies the first estimate in ~\eqref{eq:asymp}.

By similar arguments we obtain the second estimate in ~\eqref{eq:asymp}.
\end{proof}
The following identity has been used in the proof.
\begin{equation}\label{eq:Fresnel}
\int_{0}^{\infty} cos (x^2)\ dx=\int_{0}^{\infty} sin (x^2)\ dx=(\frac{\pi}{8})^{\frac{1}{2}}.
\end{equation}
\begin{proof}
The key observation is that the function $e^{i z^2}=cos z^2+i sin z^2:\ \mathbb{C}\rightarrow \mathbb{C}$ is an entire function. This together with the fact that in the region $Re z, \ Im z>0$ the function decays rapidly, at $|z|\rightarrow\infty,$ enables us to use the method of contour integration to obtain $$\int_{0}^{\infty} e^{i z^2} \ dz=\int_{z\in \{ \frac{1}{\sqrt{2}}x+i\frac{1}{\sqrt{2}}x|\ x\in [0,\infty)\}} e^{i z^2 }\ dz=(\frac{1}{\sqrt{2}}+\frac{1}{\sqrt{2}} i )\int_{0}^{\infty} e^{- z^2}\ dz=\frac{\sqrt{\pi}}{\sqrt{8}}+\frac{\sqrt{\pi}}{\sqrt{8}}i .$$ This obviously implies ~\eqref{eq:Fresnel}.
\end{proof}
\section{Proof of Proposition ~\ref{Prop:ThreeTerm}}\label{SEC:contraction}
In the following we prove the three estimates in Proposition ~\ref{Prop:ThreeTerm}, in three different subsections.
In various places we have to consider separately two cases: $\delta<\frac{1}{2}$, where functions in the Banach spaces
$B_{\delta,T}$, see ~\eqref{eq:defBDeltaT}, are not necessarily integrable in time on the interval $[T,\infty)$,
and $\delta>\frac{1}{2}$, where functions in $B_{\delta,T}$ are integrable.
In order not to clutter our arguments with too much details, we only consider the cases $\delta<\frac{1}{2}$. The analysis for $\delta>\frac{1}{2}$ is similar and is considered in ~\cite{EG}, where a slightly harder problem is solved.

Recall the ideas we presented after Proposition ~\ref{Prop:ThreeTerm}. In what follows we carry out the ideas in details.

\subsection{Proof of ~\eqref{eq:Ga2}}\label{subsec:Ga2}
It is easy to estimate the second factor in the definition of $\Gamma_2:$ Recall that we only consider the case $\delta< \frac{1}{2}.$
\begin{equation}\label{eq:minorPart}
|\int_{0}^{t} (1-\chi_{T}(s_1))\ q_{s_{1}}\ ds_1|\lesssim (1+t)^{\frac{1}{2}-\delta} \|q\|_{\delta,T}.
\end{equation} The crucial step is to prove that
\begin{equation}\label{eq:keyObser}
|Z \int_{0}^{t} K(t-s) Re\langle W, e^{i\frac{\Delta}{2}  s}W\rangle\ ds|\lesssim (1+t)^{-\frac{3}{2}}.
\end{equation} Estimates ~\eqref{eq:minorPart} and ~\eqref{eq:keyObser} obviously imply ~\eqref{eq:Ga2}.

In what follows we prove ~\eqref{eq:keyObser}. Using the fact that $K(t)=0,$ for $t<0,$ in ~\eqref{eq:zero}, applying Fourier transformation, using ~\eqref{eq:FTconv} and applying inverse Fourier transformation we find that
\begin{align}\label{eq:estI2}
I:=&\int_{0}^{t} K(t-s) ZRe\langle W, e^{i\frac{\Delta}{2}  s}W\rangle\ ds\\
=&\int_{0}^{\infty} K(t-s) ZRe\langle W, e^{i\frac{\Delta}{2}  s}W\rangle\ ds\\
=&\frac{1}{2\pi}\int_{-\infty}^{\infty} \frac{F(k)}{ik+ZG(k+i0)} e^{-ikt}\ dk,
\end{align}
where $F(k)$ is defined by
\begin{align*}
F(k)&:=Z\int_{0}^{\infty} e^{iks} Re\langle W, e^{i\frac{\Delta}{2}  s}W\rangle\ ds\\
&=\frac{Z}{2}\int_{0}^{\infty} e^{iks}[ \langle W, e^{i\frac{\Delta}{2}  s}W\rangle+\langle W, e^{-i\frac{\Delta}{2} s}W\rangle]\ ds\\
&=-Z[\langle (-i\Delta +2ik-0)^{-1}W, W\rangle+\langle (i\Delta +2ik-0)^{-1}W, W\rangle]\\
&=-ZG(k+i0),
\end{align*}
and $G(k+i0)$ is defined in ~\eqref{eq:difGk}.

By ~\eqref{eq:Gki0} the function $\frac{Z G(k+i0)}{ik+ZG(k+i0)}$ is smooth on $\mathbb{R}\backslash\{0\}$, and around $k=0$ it has the expression $$\frac{Z G(k+i0)}{ik+ZG(k+i0)}=-1+C\frac{1}{Z} k^{\frac{1}{2}}+O(k)$$ for some constant $C$. Putting this into ~\eqref{eq:estI2}, and integrating by parts in the variable $k$, we obtain
\begin{align}
|I|&=\frac{1}{2\pi} t^{-1} |\int_{-\infty}^{\infty} e^{-ikt} \partial_{k}[\frac{Z G(k+i0)}{ik+ZG(k+i0)}]\ dk|\nonumber\\
&\leq  \frac{1}{2\pi} t^{-1}[A_1+A_2+A_3]\label{eq:decomI}
\end{align}
with $$A_1:=|\int_{-\infty}^{\infty} e^{-ikt} \frac{1}{ik+ZG(k+i0)}\ dk |,$$
$$A_2:=|\int_{-\infty}^{\infty} e^{-ikt} \frac{k}{[ik+ZG(k+i0)]^2}\ dk |,$$
$$A_3:=|\int_{-\infty}^{\infty} e^{-ikt} \frac{Zk \partial_{k}G(k+i0)}{[ik+ZG(k+i0)]^2}\ dk |.$$ It is easy to see that $\frac{1}{ik+ZG(k+i0)}=O(|k|^{-\frac{1}{2}}),\ \frac{k}{[ik+ZG(k+i0)]^2}=O(1)$ and $\frac{Zk\partial_{k}G(k+i0)}{[ik+ZG(k+i0)]^2}=O(|k|^{-\frac{1}{2}}),$ in a neighborhood of $k=0.$ This implies that
\begin{equation}
\sum_{k=1}^{3} A_{k}=O(t^{-\frac{1}{2}}),\ \text{or equivalently},\ I=O(t^{-\frac{3}{2}}),
\end{equation} which, together with the estimates above, yields the desired estimate ~\eqref{eq:keyObser}.
\begin{flushright}
$\square$
\end{flushright}

\subsection{Proof of ~\eqref{eq:Ga1}}\label{subsec:Ga1} Recall the asymptotic forms of $K(t)$ and $Re \langle W,  e^{i\frac{\Delta}{2}t} W\rangle$ in ~\eqref{eq:estKt} and ~\eqref{eq:asymp}, respectively. We define functions $\tilde{K}$, $\tilde{M}$ and $\tilde{\Gamma}_{1}$ to approximate these functions and $\Gamma_{1}$:
\begin{equation}\label{eq:tildeKM}
\begin{array}{lll}
Z\tilde{K}(t)&:=& t^{-\frac{1}{2}}\frac{1}{2} \pi^{-\frac{5}{2}},\\
& &\\
\tilde{M}(t)&:=&-2 t^{-\frac{3}{2}}\pi^{\frac{3}{2}}
\end{array}
\end{equation} and
\begin{equation}\label{eq:estTildeGa1}
\tilde{\Gamma}_{1}(t):= -Z\int_{0}^{t} [\tilde{K}(t-s)-\tilde{K}(t)] \tilde{M}(s) \int_{s}^{t} q_{s_{1}}\ [1- \chi_{T}(s_1)]\  ds_{1}\  ds.
\end{equation}

In the following we prove a sharp estimate on $\tilde\Gamma_{1}$ and prove that $\Gamma_1-\tilde\Gamma_{1}$ is negligibly small because it decays faster.
Recall the definition of $\Omega_1(\delta)$ in ~\eqref{eq:dif12}, and recall that we only consider the case $\delta<\frac{1}{2}.$
\begin{lemma} The function $\tilde{\Gamma}_{1}$ is estimated by
\begin{equation}\label{eq:estTideGam1}
|\tilde{\Gamma}_{1}|(t)\leq t^{-\frac{1}{2}-\delta} \Omega_1(\delta)\|q_{t}\|_{\delta,T},
\end{equation} and there exists a constant $\epsilon(T),$ with $\displaystyle\lim_{T\rightarrow \infty}\epsilon(T)=0$ such that
\begin{equation}\label{eq:estGa1Rem}
|\tilde{\Gamma}_{1}-\Gamma_{1}|(t)\leq  t^{-\frac{1}{2}-\delta}  \epsilon(T) \|q_{t}\|_{\delta,T}.
\end{equation}
\end{lemma}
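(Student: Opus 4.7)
The plan is to handle the two estimates of the lemma separately: a direct scaling/computation for $\tilde\Gamma_1$ that is designed to produce exactly the constant $\Omega_1(\delta)$, and a perturbative estimate for $\Gamma_1-\tilde\Gamma_1$ exploiting the remainder bounds already established for $K$ in Proposition~\ref{Prop:kernel} and for $M(s):=Re\langle W, e^{i\frac{\Delta}{2}s}W\rangle$ in Lemma~\ref{LM:propaW}. Throughout I restrict to the case $\delta<\tfrac12$, as indicated at the start of Appendix~\ref{SEC:contraction}; the complementary case is analogous.

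For \eqref{eq:estTideGam1}, I substitute the explicit forms of $\tilde K$ and $\tilde M$ from \eqref{eq:tildeKM} into \eqref{eq:estTildeGa1}, which collapses the prefactor $-Z[\tilde K(t-s)-\tilde K(t)]\tilde M(s)$ to a numerical constant times $[(t-s)^{-1/2}-t^{-1/2}]s^{-3/2}$. For the inner integral, I use $|q_{s_1}|\le \|q\|_{\delta,T}\,s_1^{-1/2-\delta}$ on the support of $1-\chi_T$ and then enlarge the integration range from $[\max(s,T),t]$ to $[s,t]$, the error being controllable in terms of $T/t$ and contributing to the final $\epsilon(T)$. Next I rescale $s=t\sigma$, $s_1=t\sigma_1$, which pulls out $t^{-1/2-\delta}$ exactly. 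The resulting integrand contains $[(1-\sigma)^{-1/2}-1]\sigma^{-3/2}$ times $\int_{\sigma}^{1}\sigma_1^{-1/2-\delta}d\sigma_1 = \frac{2}{1-2\delta}(1-\sigma^{1/2-\delta})$. The algebraic identity
\begin{equation*}
(1-\sigma)^{-1/2}-1 = \frac{\sigma\,(1-\sigma)^{-1/2}}{1+(1-\sigma)^{1/2}}
\end{equation*}
converts $[(1-\sigma)^{-1/2}-1]\sigma^{-3/2}(1-\sigma^{1/2-\delta})$ into $\frac{1}{1+(1-\sigma)^{1/2}}(1-\sigma)^{-1/2}[\sigma^{-1/2}-\sigma^{-\delta}]$, which is exactly the integrand of $\Omega_1(\delta)$ from \eqref{eq:dif12}. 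Once the constants are tracked, the bound \eqref{eq:estTideGam1} follows.

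For \eqref{eq:estGa1Rem}, I write the difference of the kernels as
\begin{equation*}
[K(t-s)-K(t)]M(s)-[\tilde K(t-s)-\tilde K(t)]\tilde M(s)=\bigl[R_K(t-s)-R_K(t)\bigr]M(s)+[\tilde K(t-s)-\tilde K(t)]R_M(s),
\end{equation*}
with $R_K:=K-\tilde K$ and $R_M:=M-\tilde M$. By Proposition~\ref{Prop:kernel} and Lemma~\ref{LM:propaW}, $R_K(\tau)=O(\tau^{-1})$ and $R_M(\sigma)=O(\sigma^{-5/2})$ for large argument; for small argument I use the pointwise bounds $|K|, |\tilde K|\le C$ on compact sets and $|M|, |\tilde M|\le C$. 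Each of the two resulting double-time integrals is then estimated by the same scaling $s=t\sigma$, $s_1=t\sigma_1$, but the faster decay of $R_K$ or $R_M$ produces an additional factor of $t^{-1/2}$ (respectively $s^{-1}$) compared with the computation carried out for $\tilde\Gamma_1$, so the bound is of the form $C t^{-1-\delta}\|q\|_{\delta,T}$. Since $t\ge T$, this is bounded by $\epsilon(T)\,t^{-1/2-\delta}\|q\|_{\delta,T}$ with $\epsilon(T)\to 0$.

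The main obstacle I expect is the bookkeeping around the cutoff $\chi_T$ and the non-uniformity of the asymptotic expansions: the remainder bounds $R_K=O(\tau^{-1})$, $R_M=O(\sigma^{-5/2})$ only apply past some threshold, so I must split each $s$-integral into a compact piece $s\in[0,S_0]$ (handled by crude pointwise bounds multiplied by the length of the interval and the size of the inner integral, which is $O(t^{1/2-\delta})$) and a tail piece $s\in[S_0,t]$ (handled by the asymptotic remainder), then verify uniform integrability near $\sigma=0$ and $\sigma=1$ after rescaling; integrability at $\sigma=1$ is automatic because of the factor $1+(1-\sigma)^{1/2}$ in the denominator after the algebraic identity, and integrability at $\sigma=0$ uses $\delta<\tfrac12$ (the pole $\sigma^{-1/2}$ is integrable). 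This is essentially the same integrability check that gave rise to the definition of the set $I$ in \eqref{eq:difI}.
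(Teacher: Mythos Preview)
Your treatment of \eqref{eq:estTideGam1} is exactly the paper's: substitute \eqref{eq:tildeKM}, bound the inner integral by $\|q\|_{\delta,T}\int_s^t s_1^{-1/2-\delta}\,ds_1$, use the identity $(t-s)^{-1/2}-t^{-1/2}=\frac{s\,(t-s)^{-1/2}t^{-1/2}}{(t-s)^{1/2}+t^{1/2}}$, and rescale $s=tr$ to land on $\Omega_1(\delta)$. (There is no separate ``error contributing to $\epsilon(T)$'' from enlarging $[\max(s,T),t]$ to $[s,t]$: after taking absolute values this is simply a monotone upper bound.)

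For \eqref{eq:estGa1Rem} your decomposition $[R_K(t-s)-R_K(t)]M(s)+[\tilde K(t-s)-\tilde K(t)]R_M(s)$ is clean and is in fact what the paper uses in its middle region. The paper, however, organizes the proof by splitting the $s$-integral into three pieces $[0,T^{1/3}]$, $[T^{1/3},t-T^{1/3}]$, $[t-T^{1/3},t]$, and only invokes the remainder asymptotics $R_K(\tau)=O(\tau^{-1})$, $R_M(\sigma)=O(\sigma^{-5/2})$ on the middle piece; on the two boundary pieces it bounds $\Gamma_1$ and $\tilde\Gamma_1$ separately by crude estimates and wins through the shortness ($T^{1/3}$) of the interval.

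Your outline has two concrete problems. First, the statement ``$|K|,|\tilde K|\le C$ and $|M|,|\tilde M|\le C$ on compact sets'' is false: $\tilde K(\tau)\sim\tau^{-1/2}$ and $\tilde M(\sigma)\sim\sigma^{-3/2}$ blow up at $0$, so $R_K$ and $R_M$ are singular there, not bounded. On $[0,S_0]$ you therefore cannot use ``crude pointwise bounds''; what actually saves you is that the singularity of $R_M$ is compensated by $\tilde K(t-s)-\tilde K(t)=O(st^{-3/2})$, making the product $O(s^{-1/2}t^{-3/2})$, which is integrable. Second, you only guard the left endpoint $s\in[0,S_0]$. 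Near $s=t$ the argument $t-s$ of $R_K$ is small, and $R_K(t-s)\sim -(t-s)^{-1/2}$ rather than $O((t-s)^{-1})$; your ``tail'' $[S_0,t]$ silently includes this region where the asymptotic you invoke does not hold. A second cutoff near $s=t$ (as in the paper's region $[t-T^{1/3},t]$) is needed; there $M(s)=O(t^{-3/2})$ and the inner integral $\int_s^t$ is $O((t-s)t^{-1/2-\delta})$, which makes the contribution harmless. With these two fixes your approach goes through and yields the same $\epsilon(T)\to 0$.
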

\begin{proof}
We first prove ~\eqref{eq:estTideGam1}. By direct computation we find
\begin{align*}
|\tilde{\Gamma}_{1}|\leq &\frac{1}{2\pi}\int_{0}^{t} [(t-s)^{-\frac{1}{2}}-t^{-\frac{1}{2}}] s^{-\frac{3}{2}} \int_{s}^{t}|q_{s_1}|\ ds_1 ds\\
\leq & \frac{1}{(1-2\delta)\pi} \int_{0}^{t} [(t-s)^{-\frac{1}{2}}-t^{-\frac{1}{2}}] s^{-\frac{3}{2}} (t^{\frac{1}{2}-\delta}-s^{\frac{1}{2}-\delta})\ ds \ \|q_{t}\|_{\delta,T}\\
=&\frac{1}{(1-2\delta)\pi} \int_{0}^{t} (t-s)^{-\frac{1}{2}}t^{-\frac{1}{2}}\frac{1}{(t-s)^{\frac{1}{2}}+t^{\frac{1}{2}}} s^{-\frac{1}{2}} (t^{\frac{1}{2}-\delta}-s^{\frac{1}{2}-\delta}) ds \ \|q_{t}\|_{\delta,T}.
\end{align*}
Here $(t-s)^{-\frac{1}{2}}-t^{-\frac{1}{2}}$ has been rewritten as $$(t-s)^{-\frac{1}{2}}-t^{-\frac{1}{2}}=(t-s)^{-\frac{1}{2}}t^{-\frac{1}{2}}\frac{s}{(t-s)^{\frac{1}{2}}+t^{\frac{1}{2}}}.$$ Changing variable, $s=tr,$ we see that
$$|\tilde\Gamma_{1}|\leq t^{-\frac{1}{2}-\delta} \Omega_1 (\delta)\ \|q_{t}\|_{\delta,T}.$$

We proceed to estimating $\Gamma_{1}-\tilde{\Gamma}_{1}$ and prove ~\eqref{eq:estGa1Rem}.
We divide $\Gamma_{1}-\tilde{\Gamma}_{1}$ into three parts, according to the integration regions:
\begin{equation}
\Gamma_{1}-\tilde{\Gamma}_{1}=I_{1}+I_{2}+I_{3},
\end{equation} with $I_{k},\ k=1,2,3,$ defined by
\begin{align*}
I_{1}:=&Z\int_{0}^{T^{\frac{1}{3}}} [K(t-s)-K(t)]Re \langle W, e^{i\frac{\Delta}{2}  s}W\rangle \int_{s}^{t}(1-\chi_{T}(s_{1})) q_{s_{1}}\ ds_{1} ds\\
 &-Z\int_{0}^{T^{\frac{1}{3}}} [\tilde{K}(t-s)-\tilde{K}(t)]\tilde{M}(s) \int_{s}^{t}(1-\chi_{T}(s_{1})) q_{s_{1}}\ ds_{1} ds,
\end{align*}
\begin{align*}
I_{2}:=&Z\int_{T^{\frac{1}{3}}}^{t-T^{\frac{1}{3}}} [K(t-s)-K(t)]Re \langle W, e^{i\frac{\Delta}{2}  s}W\rangle \int_{s}^{t}(1-\chi_{T}(s_{1})) q_{s_{1}}\ ds_{1} ds\\
 &-Z\int_{T^{\frac{1}{3}}}^{t-T^{\frac{1}{3}}} [\tilde{K}(t-s)-\tilde{K}(t)]\tilde{M}(s) \int_{s}^{t}(1-\chi_{T}(s_{1})) q_{s_{1}}\ ds_{1} ds\\
=& Z\int_{T^{\frac{1}{3}}}^{t-T^{\frac{1}{3}}} [(K(t-s)-K(t))-(\tilde{K}(t-s)-\tilde{K}(t))]Re \langle W, e^{i\frac{\Delta}{2}  s}W\rangle \int_{s}^{t}(1-\chi_{T}(s_{1})) q_{s_{1}}\ ds_{1} ds\\
&+Z\int_{T^{\frac{1}{3}}}^{t-T^{\frac{1}{3}}} [\tilde{K}(t-s)-\tilde{K}(t)][Re\langle W, e^{i\frac{\Delta}{2}  s}W\rangle-\tilde{M}(s)] \int_{s}^{t}(1-\chi_{T}(s_{1})) q_{s_{1}}\ ds_{1} ds
\end{align*}
 and
\begin{align*}
I_{3}&:=Z\int_{t-T^{\frac{1}{3}}}^{t} [K(t-s)-K(t)]Re \langle W, e^{i\frac{\Delta}{2}  s}W\rangle \int_{s}^{t}(1-\chi_{T}(s_{1})) q_{s_{1}}\ ds_{1} ds\\
& -Z\int_{t-T^{\frac{1}{3}}}^{t} [\tilde{K}(t-s)-\tilde{K}(t)]\tilde{M}(s) \int_{s}^{t}(1-\chi_{T}(s_{1})) q_{s_{1}}\ ds_{1} ds
\end{align*}

We first analyze $I_{1}.$ When $s\leq T^{\frac{1}{3}}$ and $t\geq T$ we use ~\eqref{eq:estKt} to obtain
\begin{align*}
 &|K(t-s)-K(t)|,\  |\tilde{K}(t-s)-\tilde{K}(t)|\\
\lesssim &t^{-\frac{1}{2}} (t-s)^{-\frac{1}{2}}\frac{s}{t^{\frac{1}{2}}+(t-s)^{\frac{1}{2}}}+ [(t-s)^{-1}-t^{-1}]+(t-s)^{-\frac{3}{2}}\\
\lesssim &t^{-\frac{3}{2}} (1+s).
\end{align*}
Consequently
$$
|K(t-s)-K(t)||Re \langle W, e^{i\frac{\Delta}{2}  s}W\rangle |+ |\tilde{K}(t-s)-\tilde{K}(t)||\tilde{M}(t)|
\lesssim t^{-\frac{3}{2}} s^{-\frac{1}{2}}.
$$
Plugging this into the definition of $I_{1},$ we obtain
\begin{equation}\label{eq:estI1}
|I_{1}|\lesssim t^{-1-\delta}\int_{0}^{T^{\frac{1}{3}}} s^{-\frac{1}{2}}\ ds \ \|q\|_{\delta,T}= t^{-1-\delta } T^{\frac{1}{6}} \|q\|_{\delta,T}\leq T^{-\frac{1}{3}} t^{-\frac{1}{2}-\delta}\|q\|_{\delta,T}.
\end{equation}

Now we turn to $I_{2}$. In the region $[T^{\frac{1}{3}},t-T^{\frac{1}{3}}]$ the functions $\tilde{K}(t)$ and $\tilde{M}(t)$ are good approximations of $K(t)$ and $Re\langle W,  e^{i\frac{\Delta}{2}t} W\rangle$. Specifically
\begin{align*}
|K(t-s)-\tilde{K}(t-s)|\lesssim& (1+t-s)^{-1}\\
|K(t)-\tilde{K}(t)|\lesssim &(1+t)^{-1}\\
|Re\langle W, e^{i\frac{\Delta}{2}  s}W\rangle -\tilde{M}(s)|\lesssim & s^{-\frac{5}{2}}.
\end{align*} By direct computation,
\begin{equation}\label{eq:estI22}
I_{2}
\lesssim  t^{-1-\delta}\|q\|_{\delta,T}\leq  T^{-\frac{1}{6}}t^{-\frac{1}{2}-\delta}\|q\|_{\delta,T}.
\end{equation}

Concerning $I_{3}$, it is easy to find that in the region $s\geq t-T^{\frac{1}{3}}$ and $t\geq T$, $\langle W, e^{i\frac{\Delta}{2}  s}W\rangle =O(t^{-\frac{3}{2}}).$ Hence
\begin{align}
|I_{3}|\lesssim &\int_{t-T^{\frac{1}{3}}}^{t} (|t-s|^{-\frac{1}{2}}+t^{-\frac{1}{2}})\ ds\ t^{-1-\delta} \|q\|_{\delta,T}\nonumber\\
\lesssim & T^{\frac{1}{6}} t^{-1-\delta}\|q\|_{\delta,T}\nonumber\\
\leq & T^{-\frac{1}{3}}t^{-\frac{1}{2}-\delta} \|q\|_{\delta,T}.\label{eq:estI3}
\end{align}

We complete the proof by collecting the estimates above and adopting an appropriate definition of $\epsilon(T).$
\end{proof}
\subsection{Proof of ~\eqref{eq:Ga3}}
Similarly as for $\Gamma_1,$ we start with retrieving the `main' contribution to $\Gamma_3$. We define a new function $\tilde{V}$ to
approximate the function $2Re\langle W, (i\Delta)^{-1} e^{i\frac{\Delta}{2}t} W\rangle$
when $t$ is large (see ~\eqref{eq:asymp}): $$\tilde{V}:= 4t^{-\frac{1}{2}}\pi^{\frac{3}{2}}.$$ We then define an approximation, $\tilde{\Gamma}_{3},$ of $\Gamma_3$ by setting
\begin{equation}
\tilde{\Gamma}_{3}:=Z\tilde{K}(t)\int_{0}^{t} [\tilde{V}(t-s)-\tilde{V}(t)] (1-\chi_{T}(s)) q_{s}\ ds,
\end{equation} where, $\tilde{K}(t)$ has been defined in ~\eqref{eq:tildeKM}.

Recall the definitions of $\|q\|_{\delta,T}$ and $\Omega_2(\delta)$ in ~\eqref{eq:norm} and ~\eqref{eq:dif12}, respectively, and recall that we only consider $\delta<\frac{1}{2}$.
\begin{proposition} There exists a time $T>0$ such that, for any $t\geq T$
\begin{equation}\label{eq:estTildeGa3}
|\tilde{\Gamma}_{3}|\leq t^{-\frac{1}{2}-\delta}\Omega_2(\delta)\|q\|_{\delta,T}.
\end{equation}  The remainder $\Gamma_3-\tilde{\Gamma}_{3}$ is estimated by
\begin{equation}\label{eq:differ}
|\Gamma_{3}-\tilde{\Gamma}_{3}|\leq t^{-\frac{1}{2}-\delta}\epsilon(T)\|q\|_{\delta,T}
\end{equation} where $\epsilon(T)$ is a small constant satisfying $\displaystyle\lim_{T\rightarrow}\epsilon(T)=0.$
\end{proposition}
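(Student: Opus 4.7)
The plan is to mimic the $\Gamma_1$ analysis but to exploit the simpler structure of $\Gamma_3$, which involves only a single time integral rather than a nested double integral. The two statements separate cleanly: first I extract an exact bound on the model term $\tilde{\Gamma}_3$ that will produce the constant $\Omega_2(\delta)$, and then I show that replacing $K$ by $\tilde{K}$ and the propagator pairing $2\mathrm{Re}\langle W,(i\Delta)^{-1}e^{i\Delta t/2}W\rangle$ by $\tilde V(t)$ costs only an $\epsilon(T)$-small remainder.

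For inequality ~\eqref{eq:estTildeGa3}, I would insert the explicit expressions $Z\tilde{K}(t)=\tfrac{1}{2}\pi^{-5/2}t^{-1/2}$ and $\tilde V(t)=4\pi^{3/2}t^{-1/2}$, rewrite
\[
(t-s)^{-1/2}-t^{-1/2}=\frac{s}{(t-s)^{1/2}t^{1/2}\bigl((t-s)^{1/2}+t^{1/2}\bigr)},
\]
and substitute the $B_{\delta,T}$-bound $|q_s|\le s^{-1/2-\delta}\|q\|_{\delta,T}$ for $s\ge T$. The rescaling $s=tr$ then converts the $s$-integral into $\int_{T/t}^1 r^{1/2-\delta}(1-r)^{-1/2}[1+(1-r)^{1/2}]^{-1}dr$, which is bounded by the full integral defining $\Omega_2(\delta)$ in ~\eqref{eq:dif12}. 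This yields precisely $|\tilde\Gamma_3(t)|\le t^{-1/2-\delta}\Omega_2(\delta)\|q\|_{\delta,T}$.

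For the remainder estimate ~\eqref{eq:differ}, I would add and subtract, writing
\[
\Gamma_3-\tilde\Gamma_3=Z\bigl[K(t)-\tilde K(t)\bigr]\!\int_0^t[V(t-s)-V(t)]h_s[1-\chi_T(s)]ds+Z\tilde K(t)\!\int_0^t\!\bigl\{[V-\tilde V](t-s)-[V-\tilde V](t)\bigr\}h_s[1-\chi_T(s)]ds,
\]
where $V(t):=2\mathrm{Re}\langle W,(i\Delta)^{-1}e^{i\Delta t/2}W\rangle$. By ~\eqref{eq:estKt} the first bracket is $O((1+t)^{-1})$, giving a gain of $t^{-1/2}$ over $\tilde K$; by the second line of ~\eqref{eq:asymp} the function $V-\tilde V$ decays like $(1+t)^{-3/2}$, which is a full power of $t^{-1}$ faster than $\tilde V$. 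To turn these pointwise gains into an $\epsilon(T)$-bound uniform in $t\ge T$, I would split the $s$-integral into the three regions $[T,T^{1/3}]$, $[T^{1/3},t-T^{1/3}]$ and $[t-T^{1/3},t]$ exactly as in Subsection~\ref{subsec:Ga1}: on the middle region both error functions obey their asymptotic bounds globally, while on the two endpoint regions one uses the crude bound $|K|+|\tilde K|\lesssim s^{-1/2}$ together with the short length $T^{1/3}$ of the interval. Summing yields $T^{-c}t^{-1/2-\delta}\|q\|_{\delta,T}$ for some $c>0$, which is the desired $\epsilon(T)$.

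The main obstacle is purely bookkeeping: I must ensure, in Step~2, that every region contributes at most $T^{-c}t^{-1/2-\delta}\|q\|_{\delta,T}$ uniformly in $t\ge T$, and in particular that the endpoint region $[t-T^{1/3},t]$ — where $(t-s)^{-1/2}$ is unintegrable over $[t-T^{1/3},t]$ only up to $T^{1/6}$ — is controlled by the smallness of $V(s)\sim s^{-1/2}$ (which here is $\sim t^{-1/2}$). Once this accounting is done, one absorbs the resulting fractional powers of $T$ into a single function $\epsilon(T)\to 0$. Step~1, once the algebraic identity for $(t-s)^{-1/2}-t^{-1/2}$ is in hand, is essentially a change of variables and is completely mechanical.
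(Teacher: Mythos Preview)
Your proposal is correct and follows essentially the same route as the paper: the bound on $\tilde\Gamma_3$ is obtained exactly as you describe (the paper performs the identical algebraic rewrite of $(t-s)^{-1/2}-t^{-1/2}$ and the substitution $s=tr$), and for the remainder the paper uses the same add-and-subtract decomposition (written as three terms rather than your two, but your second term is precisely the sum of the paper's second and third), then defers to the $\Gamma_1$ region-splitting argument just as you do. One minor slip: your region labels ``$[T,T^{1/3}]$, $[T^{1/3},t-T^{1/3}]$'' are garbled (note $T^{1/3}<T$), but since the $(1-\chi_T)$ factor restricts the $s$-integral to $[T,t]$ anyway, only the split near $s=t$ (i.e.\ at $t-T^{1/3}$) actually matters, and your subsequent discussion handles that correctly.
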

\begin{proof} We start with a proof of ~\eqref{eq:estTildeGa3}.
By direct computation $$
\begin{array}{lll}
|\tilde{\Gamma}_{3}|&\leq &t^{-\frac{1}{2}}\frac{1}{\pi} \int_{0}^{t}[(t-s)^{-\frac{1}{2}}-t^{-\frac{1}{2}}] s^{-\frac{1}{2}-\delta}\ ds \|q\|_{\delta,T}\\
& &\\
&=& t^{-1} \frac{1}{\pi}\int_{0}^{t}\frac{s}{(t-s)^{\frac{1}{2}}+t^{\frac{1}{2}}}(t-s)^{-\frac{1}{2}} s^{-\frac{1}{2}-\delta}\ ds \|q\|_{\delta,T}.
\end{array}
$$ Changing variables $s:=t r$ we obtain the desired estimate
$$|\tilde{\Gamma}_{3}|\leq t^{-\frac{1}{2}-\delta}\frac{1}{\pi}\int_{0}^{1}(1-r)^{-\frac{1}{2}}\frac{1}{1+(1-r)^{\frac{1}{2}}}r^{\frac{1}{2}-\delta}\ dr \|q\|_{\delta,T}= t^{-\frac{1}{2}-\delta} \Omega_2(\delta)\|q\|_{\delta,T}.$$

To prove ~\eqref{eq:differ} we transform $\Gamma_{3}-\tilde{\Gamma}_{3}$ into a more convenient form:
\begin{equation}\label{eq:differen3}
\begin{array}{lll}
\Gamma_{3}-\tilde{\Gamma}_{3}&=&Z[K(t)-\tilde{K}(t)]\int_{0}^{t}[V(t-s)-V(t)](1-\chi_{T}(s)) q_{s}\ ds\\
& &\\
& &+Z\tilde{K}(t)\int_{0}^{t}[V(t-s)-\tilde{V}(t-s)](1-\chi_{T}(s)) q_{s}\ ds\\
& &\\
& &+Z\tilde{K}(t)[\tilde{V}(t)-V(t)]\int_{0}^{t}(1-\chi_{T}(s)) q_{s}\ ds
\end{array}
\end{equation} where the function $V(t)$ is defined by $$V(t):=2Re\langle W, (i\Delta)^{-1} e^{i\frac{\Delta}{2}t} W\rangle.$$
By arguments almost identical to the proof of ~\eqref{eq:estGa1Rem} we conclude that ~\eqref{eq:differ} holds.
\end{proof}
\section{Proof of ~\eqref{eq:estFp}}\label{sec:highorderterm}
In this section we prove inequality ~\eqref{eq:estFp} in Theorem ~\ref{THM:Contraction}. As in Appendix ~\ref{SEC:contraction},
we only consider the case $\delta<\frac{1}{2}.$ The case $\delta>\frac{1}{2}$ is considered in ~\cite{EG}.
\begin{proposition}
Let $Q_1$ and $Q_2$ be defined in Theorem ~\ref{THM:Contraction}, and recall the definitions of $F(P)$ and $B_{k},\ k=0,1,2,$ in ~\eqref{eq:difVecF}.

If $|P_0|\leq T^{-3}$, then
\begin{equation}\label{eq:contracB0}
|B_{0}(Q_1)-B_{0}(Q_2)|(t)\ \ll (1+t)^{-1-\delta}\|Q_1 -Q_2\|_{\delta,T} ,
\end{equation} and, for $k=1,2,$
\begin{equation}\label{eq:ContracB1B2}
|B_{k}(Q_1)-B_{k}(Q_2)|(t)\lesssim (1+t)^{-1-\delta}\|Q_1 -Q_2\|_{\delta,T} (\|Q_1 \|_{\delta,T}+\|Q_2\|_{\delta,T}).
\end{equation}
\end{proposition}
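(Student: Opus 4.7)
\medskip

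\noindent\textbf{Proof proposal.} The three terms $B_0, B_1, B_2$ depend on the momentum trajectory $Q$ only through the trajectory $X_t = X_0 + \int_0^t Q_s\,ds$, so the key a priori bound I will use throughout is the pointwise estimate
\begin{equation*}
|X_t^{(1)} - X_t^{(2)}| \;=\; \Big|\int_T^t (Q_1-Q_2)(s)\,ds\Big| \;\lesssim\; t^{\frac{1}{2}-\delta}\,\|Q_1-Q_2\|_{\delta,T},
\end{equation*}
valid for $t\ge T$ and $\delta<\frac{1}{2}$, which uses that $Q_1=Q_2=P$ on $[0,T]$ together with Theorem~\ref{THM:wellposed}. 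The analogous bound $|X_t^{(j)}-X_0|\lesssim t^{\frac{1}{2}-\delta}\|Q_j\|_{\delta,T}+T^{-1}$ will be used to control $W^{X_0-X_s^{(j)}}-W$ by the mean value theorem. The plan is: treat $B_0$ by a direct dispersive estimate on the free Schr\"odinger group applied to $\beta_0$; treat $B_1$ and $B_2$ by splitting the difference into two pieces (one linear in $Q_1-Q_2$, one linear in $X^{(1)}-X^{(2)}$) and then bounding each inner product either by $L^1\to L^\infty$ dispersive estimates or by the oscillatory integral estimate
$$|\langle \nabla W,\,e^{i\Delta\tau/2}(-\Delta)^{-1}\nabla W^{Y}\rangle| \;=\;\Big|\int |\hat W(k)|^2 e^{-i|k|^2\tau/2+ik\cdot Y}\,dk\Big|\;\lesssim\;(1+\tau)^{-3/2}$$
(which follows from stationary phase exactly as in Lemma~\ref{LM:propaW}, uniformly in $Y\in\mathbb{R}^3$ since $\hat W$ is Schwartz).

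For $B_0$, since $\beta_0$ is fixed, only $X_t$ changes, and writing $\nabla_x W^{X_t^{(1)}}-\nabla_x W^{X_t^{(2)}}$ via the fundamental theorem of calculus in $X$ yields
$$|B_0(Q_1)-B_0(Q_2)|\;\lesssim\; |X_t^{(1)}-X_t^{(2)}|\cdot \|\nabla^2 W\|_1 \cdot \|e^{i\Delta t/2}\beta_0\|_\infty.$$
The dispersive estimate $\|e^{i\Delta t/2}\beta_0\|_\infty\lesssim t^{-3/2}\|\beta_0\|_1$, combined with $\|\beta_0\|_1\lesssim \|\langle x\rangle^4\beta_0\|_2\lesssim \epsilon_0$ (Cauchy--Schwarz), produces $t^{-3/2}\cdot t^{1/2-\delta}\cdot \epsilon_0=\epsilon_0\, t^{-1-\delta}$; the $\ll$ in~\eqref{eq:contracB0} then comes from the smallness of $\epsilon_0$.

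For $B_1$ and $B_2$ the argument is structurally similar but requires more bookkeeping. I decompose
\begin{align*}
B_k(Q_1)-B_k(Q_2) &= \underbrace{\text{(same integrand with $Q_s$ replaced by $(Q_1-Q_2)_s$, $X$'s at $Q_1$)}}_{=:\,\mathrm{I}_k}\\
&\quad+\underbrace{\text{(integrand linear in $Q_2$, with $W$-displacements replaced by a difference)}}_{=:\,\mathrm{II}_k}.
\end{align*}
For $\mathrm{I}_k$ I again use the mean value theorem to write $\nabla_x[W^{X_0-X_s^{(1)}}-W]=\int_0^1 \nabla_x^2 W^{X_0-\sigma X_s^{(1)}}\cdot(X_0-X_s^{(1)})\,d\sigma$, bounded pointwise by $|X_0-X_s^{(1)}|\lesssim s^{1/2-\delta}\|Q_1\|_{\delta,T}$ (after absorbing the harmless $T^{-1}$ contribution from $[0,T]$), and then apply the oscillatory integral bound above with $\tau=t$ for $B_1$, $\tau=t-s$ for $B_2$. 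For $\mathrm{II}_k$ I expand $W^{X_0-X_s^{(1)}}-W^{X_0-X_s^{(2)}}$ by the same device, producing a factor $|X_s^{(1)}-X_s^{(2)}|\lesssim s^{1/2-\delta}\|Q_1-Q_2\|_{\delta,T}$, while $Q_2(s)$ contributes $s^{-1/2-\delta}\|Q_2\|_{\delta,T}$. In both cases the integrand is pointwise bounded by $(1+\tau)^{-3/2}\cdot s^{-\delta}\cdot \|Q_j\|_{\delta,T}$ (or the analogue for the difference), and the $s$-integral converges absolutely and is dominated by $(1+t)^{-1-\delta}$ after rescaling $s=tr$.

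The main obstacle, and the reason the proof is not fully routine, is that neither the shift $|X_0-X_s|$ nor $|X_s^{(1)}-X_s^{(2)}|$ is bounded in $s$: both grow like $s^{1/2-\delta}$, so the naive estimate $\|\nabla W^{X_0-X_s}-\nabla W\|_2\lesssim |X_0-X_s|$ would lose the decay in $s$ needed for integrability at infinity. This is overcome by the oscillatory integral improvement $(1+\tau)^{-3/2}$ above, which is uniform in the translation $Y=X_0-X_s$ (because $\hat W\in\mathcal{S}$ kills the phase growth): it is precisely this uniformity in $Y$ that lets us integrate against a slowly growing function of $s$ and still produce a decay rate matching that of $(1+t)^{-1-\delta}$. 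Once these two oscillatory bounds and the trajectory estimates are in place, \eqref{eq:ContracB1B2} follows by combining the two decompositions $\mathrm{I}_k,\mathrm{II}_k$ and rescaling.
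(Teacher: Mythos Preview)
Your treatment of $B_0$ is correct and matches the paper's. Your scheme for $B_1$ also works in spirit, though you should note that after the mean--value step you are pairing $\nabla W$ against $(-\Delta)^{-1}\nabla_x^2 W^{\sigma Y}$, which carries one \emph{extra} derivative and therefore decays like $(1+t)^{-2}$, not the $(1+t)^{-3/2}$ you quoted; with that correction the $B_1$ bound closes exactly as you describe.

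The real gap is $B_2$. There the propagator is $e^{i\Delta(t-s)/2}$, so your oscillatory bound sits \emph{inside} the $s$--integral. Even with the improved rate $(1+t-s)^{-2}$ and the sharper trajectory bound $|X_s-X_t|\le (t-s)s^{-1/2-\delta}\|Q_1\|_{\delta,T}$, your integrand for $\mathrm{I}_2$ is controlled by $(1+t-s)^{-1}s^{-1-2\delta}$, and
\[
\int_T^{t}(1+t-s)^{-1}s^{-1-2\delta}\,ds \;\gtrsim\; t^{-1}\int_T^{t/2}s^{-1-2\delta}\,ds \;\sim\; T^{-2\delta}\,t^{-1},
\]
which is only $O(t^{-1})$ and \emph{not} $O(t^{-1-\delta})$. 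The same obstruction appears in $\mathrm{II}_2$. Your rescaling $s=tr$ does not help: the factor $(1+t(1-r))^{-3/2}$ (or $(1+t(1-r))^{-2}$) is not of the form $t^{-\alpha}(1-r)^{-\alpha}$ uniformly in $t$, and the endpoint $r\uparrow 1$ produces exactly the $t^{-2\delta}$ loss.

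The paper closes this gap by a genuinely different device. Because $W$ is spherically symmetric, the inner product $\langle\nabla W,\,e^{i\Delta(t-s)/2}(-\Delta)^{-1}\partial_j\partial_k W\rangle$ vanishes (the Fourier integrand is odd). Hence after writing $W^{X_s-X_t}-W=\int_s^t[P_{s_1}\!\cdot\!\nabla]W^{X_{s_1}-X_t}\,ds_1$ one may replace $W^{X_{s_1}-X_t}$ by $W^{X_{s_1}-X_t}-W$ for free and expand once more. This makes $B_2$ effectively \emph{cubic} in $P$, and the difference $B_2(Q_1)-B_2(Q_2)$ acquires a \emph{quartic} structure: four factors of momentum and enough extra derivatives that the kernel decays like $(1+t-s)^{-3}$ (Proposition~\ref{prop:propa} with $l=5$). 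The resulting quadruple integral is then handled by a cancellation argument on $[t/2,t]$ (Taylor--expanding $s^\alpha$ around $s=t$ so that the leading terms drop out), yielding $t^{-1-4\delta}\ln t$ there and $t^{-1-4\delta}$ on $[0,t/2]$. Your mean--value expansion around a \emph{shifted} point $\sigma Y$ destroys the parity that makes this work, so you never see the extra cancellation or the $(1+t-s)^{-3}$ rate.
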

This proposition is proven below.

Now we use this result to prove the desired estimate ~\eqref{eq:estFp}.\\
{\bf{Proof of ~\eqref{eq:estFp}}}

We start by estimating $K(t-s)-K(t)$. Using ~\eqref{eq:estKt}, we obtain
\begin{align}\label{eq:ktks}
|K(t-s)-K(t)|\lesssim &|(1+t-s)^{-\frac{1}{2}}-(1+t)^{-\frac{1}{2}}|+(1+t-s)^{-1}-(1+t)^{-1}+(1+t-s)^{-\frac{3}{2}}\nonumber\\
= & (1+t-s)^{-\frac{1}{2}} (1+t)^{-\frac{1}{2}}\frac{s}{(1+t)^{\frac{1}{2}}+(1+t-s)^{\frac{1}{2}}}\nonumber\\
 &+s(1+t-s)^{-1}(1+t)^{-1}+(1+t-s)^{-\frac{3}{2}}\nonumber\\
\leq &(1+t-s)^{-\frac{1}{2}} (1+t)^{-1} s+(1+t-s)^{-\frac{3}{2}}.
\end{align}
Plugging this estimate, as well as ~\eqref{eq:contracB0} and ~\eqref{eq:ContracB1B2} into $\int_{0}^{t}|K(t-s)-K(t)||B_{k}(Q_1)-B_{k}(Q_2)|(s)\ ds,\ k=0,1,2,$ we obtain the desired estimate ~\eqref{eq:estFp}.
\begin{flushright}
$\square$
\end{flushright}
\subsection{Proof of ~\eqref{eq:contracB0}}
By direct computation and our estimate on $\| e^{i\frac{\Delta}{2}t} \|_{L^{1}\rightarrow L^{\infty}}$ we obtain that
\begin{align*}
|B_{0}(Q_1)-B_{0}(Q_2)|(t)
\lesssim t^{-\frac{3}{2}}\|\beta_0\|_{L^{1}}\| \partial_{x}W^{\int_{T}^{t} [Q_1(s)-Q_2(s)] ds}-\partial_{x}W\|_{L^{1}}
\end{align*}
The right hand side vanishes when $t\leq T$. This, together with $T\gg 1,$ implies that
\begin{align*}
|B_{0}(Q_1)-B_{0}(Q_2)|(t)&=(1+t)^{-\frac{3}{2}}\|\beta_0\|_{L^{1}}\| \int_{T}^{t}\partial_s \partial_{x}W^{\int_{T}^{s} [Q_1(s_1)-Q_2(s_1)] ds_1} \ ds\|_{L^{1}}\\
&\leq (1+t)^{-\frac{3}{2}}\|\beta_0\|_{L^{1}}\|\partial_{x}W\|_{L^1} \int_{T}^{t}|Q_1(s)-Q_2(s)|\ ds \\
&\ll  (1+t)^{-1-\delta} \|Q_1-Q_2\|_{\delta,T}.
\end{align*} Here the hypothesis of smallness of $\|\beta_0\|_{L^{1}}$ in Theorem ~\ref{THM:main} is used, and we recall that only the case $\delta<\frac{1}{2}$ is considered.
\begin{flushright}
$\square$
\end{flushright}

\subsection{Proof of ~\eqref{eq:ContracB1B2}}
In what follows we only estimate the contributions to $B_2$; the estimate of $B_1$ is similar and easier, hence omitted.

We start with transforming $B_{2}$. Using the formula $$W^{X_{s}-X_{t}}-W=-\int_{s}^{t} \partial_{s_{1}} W^{X_{s_{1}}-X_{t}} \ ds_{1}=\int_{s}^{t} [P_{s_{1}}\cdot \partial_{x}]\ W^{X_{s_{1}}-X_{t}} \ ds_{1}$$ we obtain
\begin{equation}
\begin{array}{lll}
B_{2}(P)&=&-2\nu  Re\langle \nabla_{x} W, \ \int_{0}^{t} e^{i\frac{\Delta}{2} (t-s)}(-\Delta)^{-1} [P_{s}\cdot\partial_{x}] \int_{s}^{t}[P_{s_{1}}\cdot\partial_{x}] W^{X_{s_{1}}-X_{t}}\rangle\ ds_{1}ds\\
& &\\
&=&-2\nu  Re\langle \nabla_{x} W, \ \int_{0}^{t} e^{i\frac{\Delta}{2} (t-s)}(-\Delta)^{-1} [P_{s}\cdot\partial_{x}] \int_{s}^{t}[P_{s_{1}}\cdot\partial_{x}] [W^{X_{s_{1}}-X_{t}}-W]\rangle\ ds_{1}ds\\
& &\\
&=&-2\nu Re\langle \nabla_{x} W, \ \int_{0}^{t} e^{i\frac{\Delta}{2} (t-s)}(-\Delta)^{-1} [P_{s}\cdot\partial_{x}] \int_{s}^{t}[P_{s_{1}}\cdot\partial_{x}] \int_{s_{1}}^{t}[P_{s_{2}}\cdot\partial_{x}]W^{X_{s_{2}}-X_{t}}\rangle\ ds_{2}\ ds_{1}\ ds.
\end{array}
\end{equation} Here the fact that the function $W$ is spherically symmetric has been used to see that various terms vanish.

Next, we prove ~\eqref{eq:ContracB1B2} for $k=2$. Among many terms in its expression we only consider $$
\begin{array}{lll}
\tilde{B}&:=&-2\nu Re\langle \nabla_{x} W, \ \int_{0}^{t} e^{i\frac{\Delta}{2} (t-s)}(-\Delta)^{-1} [Q_{1}(s)\cdot\partial_{x}] \int_{s}^{t}[Q_{1}(s_{1})\cdot\partial_{x}]\times\\
& &\\
& & \int_{s_{1}}^{t}[Q_{1}(s_{2})\cdot\partial_{x}][W^{X_{s_{2}}-X_{t}}-W^{\tilde{X}_{s_{2}}-\tilde{X}_{t}}]\rangle\ ds_{2}\ ds_{1}\ ds,
\end{array}
$$
where $X_{t}:= X_0+\int_{0}^{t} Q_{1}{s}$ and $\tilde{X}_{t}:=X_0+\int_{0}^{t} Q_{2}(s)\ ds.$ We rewrite $W^{X_{s_{2}}-X_{t}}-W^{\tilde{X}_{s_{2}}-\tilde{X}_{t}}$ as follows
\begin{align*}
W^{X_{s_{2}}-X_{t}}-W^{\tilde{X}_{s_{2}}-\tilde{X}_{t}}=&-\int_{s_2}^{t} \partial_{z} W^{X_{z}-X_{t}-\tilde{X}_{z}+\tilde{X}_{s_2}}\ dz\\
=&\int_{s_2}^{t}(Q_{1}(z)-Q_2(z))\cdot \nabla_{x} W^{X_{z}-X_{t}-\tilde{X}_{z}+\tilde{X}_{s_2}}\ dz.
\end{align*}
Consequently
\begin{align*}
B_0=&-2\nu Re\langle \nabla_{x} W, \ \int_{0}^{t} e^{i\frac{\Delta}{2} (t-s)}(-\Delta)^{-1} [Q_{1}(s)\cdot\partial_{x}] \int_{s}^{t}[Q_{1}(s_{1})\cdot\partial_{x}]\times\\
& \int_{s_{1}}^{t}[Q_{1}(s_{2})\cdot\partial_{x}]\int_{s_2}^{t}(Q_{1}(z)-Q_2(z))\cdot \nabla_{x} W^{X_{z}-X_{t}-\tilde{X}_{z}+\tilde{X}_{s_2}}\ \rangle\ dz\ ds_{2}\ ds_{1}\ ds
\end{align*}

To obtain a decay estimate we have to consider a function $I_4$ define by
\begin{align*}
I_{4}:=&\langle \nabla_{x} W,  e^{i\frac{\Delta}{2}(t-s)}  (-\Delta)^{-1} (\displaystyle\prod_{k_l\in \{1,2,3\},\ l=1,2,3 }\partial_{x_{k_{l}}}) W^{Y(t,s_2,z)}\rangle \\
=&\langle \nabla_{x} W, e^{i\frac{\Delta}{2}  (t-s)+Y(t,s_2,z) \cdot \partial_{x}} (-\Delta)^{-1} (\displaystyle\prod_{k_l\in \{1,2,3\},\ l=1,2,3 }\partial_{x_{k_{l}}}) W\rangle,
\end{align*}
where $Y(t,s_2,z)\in \mathbb{R}^3$ is defined by $Y(t,s_2,z):=X_{z}-X_{t}-\tilde{X}_{z}+\tilde{X}_{s_2},$ with $s_2\in [s,t]$ and $z\in [s_2,t].$ It is easy to transform $I_{4}$ to a form where Proposition ~\ref{prop:propa}, below, applies, which yields
\begin{equation}\label{eq:I4}
|I_{4}|\lesssim (1+t-s)^{-3}, \ \text{for any} \ t\ \text{and}\ s\leq t.
\end{equation}
Now it is easy to see that, for some constant $C$,
\begin{align}
|\tilde{B}|\lesssim & \int_{0}^{t}(1+t-s)^{-3} s^{-\frac{1}{2}-\delta}\int_{s}^{t} s_{1}^{-\frac{1}{2}-\delta} \int_{s_{1}}^{t} s_{2}^{-\frac{1}{2}-\delta} \int^{t}_{s_2} s_3^{-\frac{1}{2}-\delta}\ ds_3ds_2 ds_1 ds \|Q_1\|_{\delta,T}^3 \|Q_1-Q_2\|_{\delta,T}\\
&= C[t^{\frac{3}{2}-3\delta}\int_{0}^{t} (1+t-s)^{-3} s^{-\frac{1}{2}-\delta}\ ds-3 t^{1-2\delta} \int_{0}^{t} (1+t-s)^{-3} s^{-2\delta}\ ds\\
& + 3t^{\frac{1}{2}-\delta}\int_{0}^{t} (1+t-s)^{-3} s^{\frac{1}{2}-3\delta}\ ds-\int_{0}^{t} (1+t-s)^{-3} s^{1-4\delta}\ ds]\times\\
& \|Q_1\|_{\delta,T}^3 \|Q_1-Q_2\|_{\delta,T}.
\end{align}

To estimate the integrals above we divide the interval $[0,t]$ into two subintervals: $ [0,\frac{1}{2}t]$ and $ [\frac{1}{2}t, t]$, and denote the corresponding contributions by $D_1$ and $D_2$ respectively.
\begin{itemize}
\item[(1)]
Concerning $D_1,$ it is easy to see that $$D_{1}\lesssim t^{-1-4\delta} \|Q_1\|_{\delta,T}^3 \|Q_1-Q_2\|_{\delta,T} $$ using the fact that $(1+t-s)^{-3}=O((1+t)^{-3})$ and direct computation.
\item[(2)]
In the second interval we Taylor-expand the functions $s^{\alpha}=t^{\alpha}[1-\alpha \frac{t-s}{t}+O([\frac{t-s}{t}]^2)],$ with $\alpha= -\frac{1}{2}-\delta,\ -2\delta,\ \frac{1}{2}-3\delta$, and find that the leading terms cancel each other. This implies that
$$D_{2}\lesssim  t^{-1-4\delta}\int_{\frac{1}{2}t}^{t} (1+t-s)^{-1}\ ds\|Q_1\|_{\delta,T}^3 \|Q_1-Q_2\|_{\delta,T} \lesssim  t^{-1-4\delta}lnt \|Q_1\|_{\delta,T}^3 \|Q_1-Q_2\|_{\delta,T}.$$
\end{itemize}
Collecting the above estimates we conclude that
\begin{equation}
\tilde{B}(t)\lesssim t^{-1-2\delta} \|Q_1\|_{\delta,T}^3 \|Q_1-Q_2\|_{\delta,T}.
\end{equation}
To remove the non-integrable singularity at $t=0$, we use the fact that $\tilde{B}(t)=0,$ for $t<T,$ with $T\gg 1.$
\begin{flushright}
 $\square$
\end{flushright}
\subsection{Some Decay Estimates}
The following result has been used in ~\eqref{eq:I4}.
\begin{proposition}\label{prop:propa}
Suppose that the functions $f:\mathbb{R}^{+}\rightarrow \mathbb{C}$, $h:[0,2\pi]\rightarrow \mathbb{C}$ and $y:\mathbb{R}^{+}\rightarrow \mathbb{R}$ are smooth and satisfy the conditions:
$$|y(t)| t^{-\frac{1}{2}}\leq \epsilon_{0},\ \text{for some small constant}\ \epsilon_0>0;$$
$$|\partial_{\rho}^{k}f(\rho)| \leq a_{k} e^{-c_{0}|\rho|}, \ \text{for any }\ k\in \mathbb{N}, \ \text{where}\ a_{k},\ c_{0}>0\ \text{are constants}.$$ Then there exist constants $C_{l},\ l\in \mathbb{N},$ such that
\begin{equation}\label{eq:propa}
|\int_{0}^{2\pi} h(\theta)\int_{0}^{\infty} e^{i\rho^2 t} e^{i\rho cos\theta\ y(t)} \rho^{l} f(\rho^2) \ d\rho d\theta| \leq C_{l} (1+t)^{-\frac{l+1}{2}}.
\end{equation}
\end{proposition}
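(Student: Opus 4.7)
\textbf{Proof plan for Proposition \ref{prop:propa}.} The strategy is to rescale the radial variable $\rho$ by $\sqrt{t}$ so as to extract the factor $t^{-(l+1)/2}$, and then to show that the remaining oscillatory integral is uniformly bounded in $t$ and $\theta$. The hypothesis $|y(t)|t^{-1/2}\le\epsilon_{0}$ is precisely what guarantees that, after rescaling, the linear piece of the phase is a bounded perturbation of the dominant quadratic piece $\sigma^{2}$.

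For $0\le t\le 1$ the bound \eqref{eq:propa} is immediate: since $|e^{i\rho^{2}t}e^{i\rho\cos\theta\,y(t)}|=1$ and $|\rho^{l}f(\rho^{2})|\le a_{0}\rho^{l}e^{-c_{0}\rho^{2}}$ is integrable on $[0,\infty)$, the inner integral is absolutely bounded, and the $\theta$-integral contributes only $\|h\|_{\infty}\cdot 2\pi$. So we may assume $t\ge 1$. Substituting $\rho=\sigma/\sqrt{t}$ in the inner integral gives
\[
\int_{0}^{\infty}e^{i\rho^{2}t}\,e^{i\rho\cos\theta\,y(t)}\,\rho^{l}f(\rho^{2})\,d\rho
= t^{-(l+1)/2}\int_{0}^{\infty}e^{i\sigma^{2}}\,e^{i\sigma b}\,\sigma^{l}f_{t}(\sigma)\,d\sigma,
\]
where $b=b(t,\theta):=\cos\theta\,y(t)/\sqrt{t}$, so $|b|\le\epsilon_{0}$ by hypothesis, and $f_{t}(\sigma):=f(\sigma^{2}/t)$.

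The remaining task is to bound $J(t,\theta):=\int_{0}^{\infty}e^{i(\sigma^{2}+\sigma b)}\sigma^{l}f_{t}(\sigma)\,d\sigma$ uniformly in $t\ge 1$ and $|b|\le\epsilon_{0}$. Complete the square $\sigma^{2}+\sigma b=(\sigma+b/2)^{2}-b^{2}/4$ and substitute $\tau=\sigma+b/2$ to rewrite
\[
J(t,\theta)=e^{-ib^{2}/4}\int_{b/2}^{\infty}e^{i\tau^{2}}g_{t}(\tau)\,d\tau,\qquad g_{t}(\tau):=(\tau-b/2)^{l}f_{t}(\tau-b/2).
\]
The lower endpoint $b/2$ lies in $[-\epsilon_{0}/2,\epsilon_{0}/2]$, hence is uniformly bounded. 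Split the $\tau$-integral at $\tau=1$: the contribution from $[b/2,1]$ is trivially uniformly bounded, while on $[1,\infty)$ one uses the Fresnel-type identity $e^{i\tau^{2}}=(2i\tau)^{-1}\partial_{\tau}e^{i\tau^{2}}$ and integrates by parts once (or twice, if $l\ge 2$). The resulting boundary term at $\tau=1$ is uniformly bounded, the boundary term at infinity vanishes, and the remainder integral is controlled by the exponential decay of $g_{t}$ and its first derivatives. Multiplying through by the prefactor $t^{-(l+1)/2}$ and by $\|h\|_{\infty}\cdot 2\pi$ from the outer $\theta$-integral delivers \eqref{eq:propa}.

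The main technical point will be verifying uniformity in $t\ge 1$ and $|b|\le\epsilon_{0}$ of the pointwise bounds on $g_{t}$ and its derivatives used in the integration by parts. By the chain rule $\partial_{\tau}^{k}[f((\tau-b/2)^{2}/t)]$ is a finite sum of terms of the form $t^{-j}(\tau-b/2)^{k-j}\partial_{\rho}^{j}f((\tau-b/2)^{2}/t)$ with $j\le k$, and each such term is dominated uniformly in $t\ge 1$ by the exponential bound $a_{j}e^{-c_{0}(\tau-b/2)^{2}/t}$ on $\partial_{\rho}^{j}f$, which (for $\tau$ bounded away from $b/2$) in turn provides the exponential decay in $\tau$ needed to close the integration by parts. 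Combining the two ranges $t\le 1$ and $t\ge 1$ then yields Proposition \ref{prop:propa}.
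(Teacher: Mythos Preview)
Your rescaling idea and the overall architecture (pull out $t^{-(l+1)/2}$ by $\rho=\sigma/\sqrt{t}$, then show the remaining oscillatory integral is uniformly bounded) are sound, but the argument you give for that uniform boundedness does not close. The gap is in the last paragraph: you assert that the remainder after integration by parts is controlled by ``the exponential decay of $g_t$ and its first derivatives,'' and that $e^{-c_0(\tau-b/2)^2/t}$ ``provides the exponential decay in $\tau$ needed.'' This is false uniformly in $t$: for any fixed $\tau$ the factor $e^{-c_0(\tau-b/2)^2/t}\to 1$ as $t\to\infty$, so after rescaling $g_t(\tau)=(\tau-b/2)^l f((\tau-b/2)^2/t)$ behaves, uniformly in $t\ge 1$, like a polynomial of degree $l$ in $\tau$ with no decay at all. (Your chain-rule bookkeeping is also off: $\partial_\tau f((\tau-b/2)^2/t)=\tfrac{2(\tau-b/2)}{t}f'(\cdot)$, so the power of $\tau-b/2$ actually goes \emph{up} with each derivative landing on $f$.) Consequently ``once (or twice, if $l\ge 2$)'' is not enough: for example at $l=1$ one integration by parts leaves an amplitude $\sim\tau^{-1}f(\tau^2/t)$ whose $L^1$-norm over $[1,\infty)$ is $\sim\log t$.

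The paper sidesteps exactly this difficulty by working in the \emph{original} variable $\rho$, where the exponential decay of $f(\rho^2)$ is genuinely $t$-independent, and by \emph{inducting on $l$}. It proves the case $l=0$ directly (change of variable $r=\rho^2 t$, split at $r=1$, one integration by parts on $[1,\infty)$, carefully tracking the $r^{-3/2}$ gain and the harmless $t^{-1}r^{-1/2}|f'(r/t)|$ term). Then for the step $l\to l+1$ it writes $\rho\,e^{i\rho^2 t}=-\tfrac{i}{2t}\partial_\rho e^{i\rho^2 t}$ and integrates by parts once, which produces $t^{-1}$ times a combination of integrals of type $I_l$ (from $\partial_\rho$ hitting $e^{i\rho\cos\theta\,y(t)}$, which costs a factor $|y(t)|\le\epsilon_0\sqrt{t}$) and $I_{l-1}$ (from $\partial_\rho[\rho^l f(\rho^2)]=\rho^{l-1}(lf(\rho^2)+2\rho^2 f'(\rho^2))$, noting that $u\mapsto lf(u)+2uf'(u)$ again has exponential decay). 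This gives $|I_{l+1}|\lesssim t^{-1/2}|I_l|+t^{-1}|I_{l-1}|$ and the induction closes. If you want to salvage your rescaled version, the clean way is to observe that each application of $L^\ast\phi=-\partial_\tau(\phi/(2i\tau))$ sends $\tau^m F(\tau^2/t)$ to $\tau^{m-2}\tilde F(\tau^2/t)$ with $\tilde F$ again exponentially decaying, and then do enough iterations; but that is essentially the paper's induction transported to the $\tau$-variable, not the ``exponential decay of $g_t$'' argument you wrote.
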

\begin{proof}
If $y\equiv 0$ the proof is standard, (integration by parts). In the present situation we use integration by parts and the fact that $y(t)$ is appropriately small.

For notational purposes, we define functions $I_{l}$ by
$$I_{l}(t):=\int_{0}^{2\pi} h(\theta)\int_{0}^{\infty} e^{i\rho^2 t} e^{i\rho cos\theta\ y(t)} \rho^{l} f(\rho^2) \ d\rho\  d\theta.$$

We prove by induction that $$|I_{l}|\leq C_{l}(1+t)^{-\frac{l+1}{2}}.$$

{\bf{Step 1}}
For $l=0$, we change variables, $\rho^2 t=: r,$ to obtain
\begin{equation}\label{eq:split}
\begin{array}{lll}
I_{0}&=& \frac{1}{2} t^{-\frac{1}{2}}\int_{0}^{2\pi} h(\theta)\int_{0}^{\infty} r^{-\frac{1}{2}} e^{i r} e^{i r^{\frac{1}{2}} cos\theta y(t) t^{-\frac{1}{2}}} f(t^{-1} r) \ dr d\theta \\
& &\\
&=&\frac{1}{2} t^{-\frac{1}{2}}[J_1+J_{2} ]
\end{array}
\end{equation} with $$J_{1}:=\int_{0}^{2\pi} h(\theta)\int_{0}^{1} r^{-\frac{1}{2}} e^{i r} e^{i r^{\frac{1}{2}} cos\theta y(t) t^{-\frac{1}{2}}} f(t^{-1} r) \ dr d\theta$$
and $$J_{2}:=\int_{0}^{2\pi} h(\theta)\int_{1}^{\infty} r^{-\frac{1}{2}} e^{i r} e^{i r^{\frac{1}{2}} cos\theta y(t) t^{-\frac{1}{2}}} f(t^{-1} r) \ dr d\theta.$$
$J_{1}$ is clearly bounded.

To bound $J_{2}$ an obvious obstacle is that the function $r^{-\frac{1}{2}}\not\in L^{1}[0,\infty)$. The way out is to integrate by parts, using that $$-i \frac{1}{1+ \frac{1}{2} r^{-\frac{1}{2}} cos\theta y(t) t^{-\frac{1}{2}}}\partial_{r}[e^{i r} e^{i r^{\frac{1}{2}} cos\theta y(t) t^{-\frac{1}{2}}}]= e^{i r} e^{i r^{\frac{1}{2}} cos\theta y(t) t^{-\frac{1}{2}}}$$ and to use that the function $$g(r,\theta):=-i\frac{1}{1+ \frac{1}{2} r^{-\frac{1}{2}} cos\theta y(t) t^{-\frac{1}{2}}}$$ is uniformly bounded, for $r\geq 1$, as follows from the hypotheses of the proposition. Then
$$|J_{2}|\leq \int_{0}^{2\pi} |h(\theta)| |g(1,\theta)| |f(t^{-1})| \ d\theta+\int_{0}^{2\pi} |h(\theta)|\int_{1}^{\infty}\ |\partial_{r} [r^{-\frac{1}{2}} g(r,\theta) f(t^{-1} r)]|\ dr\ d\theta.$$
The first term on the right hand side is obviously bounded. For the second term, it is not difficult to derive that
$$|\partial_{r} [r^{-\frac{1}{2}} g(r,\theta) f(t^{-1} r)]|\lesssim r^{-\frac{3}{2}}+  t^{-1} r^{-\frac{1}{2}}| f^{'}(t^{-1} r)|.$$ Using our assumptions on the function $f$ we find that $$t^{-1} \int_{1}^{\infty} r^{-\frac{1}{2}}| f^{'}(t^{-1} r)|\ dr\lesssim t^{-\frac{1}{2}}.$$

Collecting the estimates above, $|J_{2}|$ is seen to be bounded. This, together with ~\eqref{eq:split} and the fact $J_{1}$ is bounded, implies the desired result.

{\bf{Step 2}}\\
In the second step of the induction we assume that ~\eqref{eq:propa} holds for any $k\leq l$.

Now we estimate $I_{l+1}.$ The idea is to represent $I_{l+1}$ into a linear combination of $I_{l-1}$ and $I_{l}$ by performing integration by parts on certain variables; (with $I_{-1}=0$).

Integrating by parts, and using the identity $-\frac{i}{2} t^{-1} \partial_{\rho} e^{i\rho^2 t}= \rho e^{i\rho^2 t}$, we find that
\begin{align*}
I_{l+1}=& it^{-1}\int_{0}^{\pi} h(\theta)\int_{0}^{2\infty} e^{i\rho^2 t} \partial_{\rho}[e^{i\rho cos\theta\ y(t)} \rho^{l} f(\rho^2)] \ d\rho d\theta \\
=& t^{-1} y(t) K_{1}+ t^{-1}K_{2}
\end{align*}
with $$K_{1}:=-\frac{1}{2}\int_{0}^{2\pi} cos\theta h(\theta)\int_{0}^{\infty} e^{i\rho^2 t} e^{i\rho cos\theta\ y(t)} \rho^{l} f(\rho^2) \ d\rho d\theta$$ and $$K_{2}:=\frac{i}{2}\int_{0}^{2\pi} h(\theta)\int_{0}^{\infty} e^{i\rho^2 t} e^{i\rho cos\theta\ y(t)}\partial_{l} [\rho^{l} f(\rho^2)] \ d\rho d\theta.$$

Notice $K_{1}$ and $K_{2}$ are of the form of $I_{l}$ or $I_{l-1}$ (if $l\geq 1$), after defining appropriate functions $f_{new}(\rho^2)$ and $h_{new}(\theta)$. Hence the induction hypotheses on $I_{k}, k\leq l$ imply that $$|I_{l+1}|\lesssim |t^{-1}y(t)| t^{-\frac{l+1}{2}}+ t^{-\frac{l+2}{2}}\lesssim t^{-\frac{l+2}{2}},$$ which is the desired estimate on $I_{l+1}.$

Thus, ~\eqref{eq:propa} holds for any $l\in \mathbb{Z}^{+}.$
\end{proof}
\section{Proof of Theorem ~\ref{THM:smallness}}\label{Sec:nonlin}
We reformulate this theorem in the form of the following lemma. Recall the constant $\epsilon_{0}(T)$ in Theorem ~\ref{THM:wellposed} and the definition of $F_{\chi_{T}P}$ in ~\eqref{eq:difVecF}.
\begin{lemma}
If $T$ is sufficiently large and $|P_{0}|, \ \|\langle x\rangle^{3}\beta_{0}\|_{2}\leq \epsilon_{0}(T)$ then, for any time $t\geq T,$
\begin{equation}\label{eq:gam0}
|A(\chi_{T}P)|\leq \epsilon(T) (1+t)^{-\frac{1}{2}-\delta};
\end{equation}
\begin{equation}\label{eq:estB0}
|\int_{0}^{t}(K(t-s)-K(t))\ B_{0}(\chi_{T}P)(s)\ ds|\lesssim \epsilon(T)(1+t)^{-\frac{3}{2}} \|\langle x\rangle^{4}\beta_0\|_{2};
\end{equation} and, for $k=1,2,$
\begin{equation}\label{eq:estB22}
|\int_{0}^{t}(K(t-s)-K(t))\ B_{k}(\chi_{T}P)(s)\ ds|\leq \epsilon(T) t^{-1-2\delta},
\end{equation}
where $\epsilon(T)$ is a small constant satisfying $\displaystyle\lim_{T\rightarrow \infty}\epsilon(T)=0.$
\end{lemma}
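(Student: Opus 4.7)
All four estimates exploit a common feature: $\chi_T P$ is supported on $[0,T]$ where, by Theorem~\ref{THM:wellposed}, $|P_s|\leq T^{-2}$, while we evaluate at $t\geq T\gg 1$. Smallness $\epsilon(T)\to 0$ is extracted from the factor $|\int_0^T P_{s_1}\,ds_1|\leq T\cdot T^{-2}=T^{-1}$, from the hypothesis $\|\langle x\rangle^{3}\beta_0\|_2\leq \epsilon_0(T)$, or from evaluation of decaying kernels at arguments $\gtrsim T$.

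\textbf{Estimate \eqref{eq:gam0} on $A(\chi_T P)$.} Decompose $A(\chi_T P)=\Gamma_1+\Gamma_2+\Gamma_3$ as in \eqref{eq:finalForm} with $1-\chi_T$ replaced by $\chi_T$. The inner average $\int_s^t \chi_T(s_1)P_{s_1}\,ds_1=\int_s^T P_{s_1}\,ds_1$ is bounded by $T^{-1}$. On the range $s\in[0,T]$, $t\geq 2T$, estimates \eqref{eq:ktks} and \eqref{eq:asymp} give $|K(t-s)-K(t)|\lesssim t^{-3/2}(1+s)+t^{-3/2}$ and $|\mathrm{Re}\langle W,e^{i\Delta s/2}W\rangle|\lesssim (1+s)^{-3/2}$, so a direct computation produces $|\Gamma_1(\chi_T P)(t)|\lesssim T^{-1/2}t^{-3/2}$. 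The cancellation $\int_0^t K(t-s)\mathrm{Re}\langle W,e^{i\Delta s/2}W\rangle\,ds=O(t^{-3/2})$ established in Subsection~\ref{subsec:Ga2} yields $|\Gamma_2|\lesssim T^{-1}t^{-3/2}$. For $\Gamma_3$, the estimate $|\mathrm{Re}\langle W,(i\Delta)^{-1}[e^{i\Delta(t-s)/2}-e^{i\Delta t/2}]W\rangle|\lesssim s\,t^{-3/2}$ for $s\leq T\ll t$ (from \eqref{eq:asymp}), combined with $|K(t)|\lesssim t^{-1/2}$, yields $|\Gamma_3|\lesssim t^{-2}$. Each bound is dominated by $\epsilon(T)(1+t)^{-1/2-\delta}$ once $T$ is large enough.

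\textbf{Estimate \eqref{eq:estB0} on the $B_0$ contribution.} Here $B_0(\chi_T P)(s)=\nu\mathrm{Re}\langle \nabla_x W^{X_s^{(T)}},e^{i\Delta s/2}\beta_0\rangle$, with $X_s^{(T)}=X_0+\int_0^s\chi_T(s')P_{s'}\,ds'$ uniformly bounded in $s$. In Fourier variables this becomes $c\int\overline{\widehat{\nabla_x W^{X_s^{(T)}}}(k)}\,\widehat{\beta_0}(k)\,e^{-i|k|^2 s/2}\,dk$. Spherical symmetry of $W$ gives $\widehat{\nabla_x W}(0)=0$, so the integrand vanishes linearly at $k=0$, while the weight $\langle x\rangle^{4}\beta_0\in L^2$ ensures that $\widehat{\beta_0}$ is smooth enough to justify stationary phase / integration by parts, producing the improved bound $|B_0(s)|\lesssim (1+s)^{-5/2}\|\langle x\rangle^{4}\beta_0\|_2$, one power better than the naive dispersive estimate $s^{-3/2}\|\beta_0\|_{L^1}$. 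Splitting $[0,t]=[0,t/2]\cup[t/2,t]$ and applying \eqref{eq:ktks} on the first interval and $|K(t-s)|,|K(t)|\lesssim t^{-1/2}$ on the second produces \eqref{eq:estB0}; the $\epsilon(T)$ factor is inherited from $\|\langle x\rangle^{3}\beta_0\|_2\leq\epsilon_0(T)\to 0$.

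\textbf{Estimate \eqref{eq:estB22} on $B_1,B_2$, and main obstacle.} These contributions are at least quadratic in $P$. As in Appendix~\ref{sec:highorderterm}, expand $W^{X_0-X_s^{(T)}}-W=\int_0^s[\chi_T(z)P_z\cdot\nabla_x]W^{X_0-X_z^{(T)}}\,dz$ and iterate once more for $B_2$, so that $B_k(\chi_T P)(s)$ carries at least two factors of $\chi_T P$, each integrated over $[0,T]$ with $|P|\leq T^{-2}$, multiplied by a propagator matrix element decaying like $(1+s)^{-3}$ by Proposition~\ref{prop:propa}. Convolving against $|K(t-s)-K(t)|$ via \eqref{eq:ktks} then yields $\epsilon(T)t^{-1-2\delta}$ with $\epsilon(T)=O(T^{-1})$ or better. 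The most delicate step overall is the improvement from the naive $t^{-1}$ decay to the stated $(1+t)^{-3/2}$ in \eqref{eq:estB0}, which essentially requires the cancellation $\widehat{\nabla_x W}(0)=0$ in tandem with the weighted $L^2$ control of $\beta_0$.
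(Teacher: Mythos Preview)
Your approach matches the paper's closely: the same three-term decomposition of $A(\chi_T P)$ using the $T^{-1}$ budget from $\int_0^T|P_s|\,ds$, the convolution cancellation \eqref{eq:keyObser} for $\Gamma_2$, the expansion of $W^Y-W$ combined with Proposition~\ref{prop:propa} for $B_1,B_2$. Two remarks are worth making.

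First, for $B_0$ the paper does \emph{not} argue via stationary phase; it uses the one-line local decay estimate
\[
|B_0(\chi_T P)(s)|\lesssim \|\langle x\rangle^{-4}e^{-i\Delta s/2}\nabla_x W^{X_s^{(T)}}\|_{2}\,\|\langle x\rangle^{4}\beta_0\|_{2}\lesssim (1+s)^{-2}\|\langle x\rangle^{4}\beta_0\|_{2},
\]
which already suffices after convolving against \eqref{eq:ktks}. Your $(1+s)^{-5/2}$ via vanishing of $\widehat{\nabla_x W}$ at $k=0$ is morally correct but needs care: with only $\langle x\rangle^4\beta_0\in L^2$ you have $\hat\beta_0\in H^4$, and turning that into a clean stationary-phase expansion with the right remainder requires more bookkeeping than you indicate. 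The paper's weighted-$L^2$ route is shorter and avoids this.

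Second, a small gap: in the $A(\chi_T P)$ estimate you only treat $t\geq 2T$, where $t-s\gtrsim t$ for $s\in[0,T]$. The statement asks for all $t\geq T$, and on $[T,2T]$ the difference $t-s$ can be small (indeed zero at $s=t=T$), so your bound $|K(t-s)-K(t)|\lesssim t^{-3/2}(1+s)$ fails there. The paper explicitly splits into the two regimes $t\in[T,2T]$ and $t\geq 2T$; on the former one simply bounds $|K(t-s)|+|K(t)|\lesssim 1$ and uses $(1+s)^{-3/2}$ together with $t\sim T$ to recover the claim. This is easy to fill in, but should be said.
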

\begin{proof}
We start with proving ~\eqref{eq:gam0}.
By its definition in ~\eqref{eq:secQt}, the function $A(\chi_{T}P)$ takes the form
\begin{align*}
A(\chi_{T}P)=&-Z\int_{0}^{t} [K(t-s)-K(t)] Re\langle W, e^{i\frac{\Delta}{2}  s}W\rangle \int_{s}^{t} P_{s_{1}}\  \chi_{T}(s_1)\  ds_{1}\  ds\\
 &+Z\int_{0}^{t} K(t-s) Re\langle W, e^{i\frac{\Delta}{2}  s}W\rangle ds\ \int_{0}^{t} P_{s_{1}} \chi_{T}(s_1)\ ds_{1}\\
 &+2Z K(t) Re\langle W, (i\Delta)^{-1} \int_{0}^{t}[ e^{i\frac{\Delta}{2}(t-s)} - e^{i\frac{\Delta}{2}t} ] \ P_{s}\ \chi_{T}(s)\ ds\ W\rangle.
\end{align*}

Using ~\eqref{eq:FiniInte} we obtain
\begin{equation}\label{eq:estGam0}
\begin{array}{lll}
|A(\chi_{T}P)|&\leq &Z T^{-1}[\int_{0}^{T} |K(t-s)-K(t)| |Re\langle W, e^{i\frac{\Delta}{2}  s}W\rangle| \  ds\\
& &\\
& &+|\int_{0}^{t} K(t-s) Re\langle W, e^{i\frac{\Delta}{2}  s}W\rangle ds|\\
& &\\
& &+ |K(t)| \int_{0}^{T}|Re\langle W, (i\Delta)^{-1}[ e^{i\frac{\Delta}{2}(t-s)} - e^{i\frac{\Delta}{2}t} ]  W\rangle |\ ds].
\end{array}
\end{equation}

As proven in ~\eqref{eq:keyObser}, the second term on the right hand side is of order $t^{-\frac{3}{2}}.$

We now turn to the remaining two terms. For $Re\langle W, (i\Delta)^{-1}[e^{i\frac{\Delta}{2} (t-s)}- e^{i\frac{\Delta}{2}t} ]W\rangle,$ in the last line, we use estimates similar to ~\eqref{eq:ktks} to conclude that
$$
|Re\langle W, (i\Delta)^{-1}[e^{i\frac{\Delta}{2} (t-s)}- e^{i\frac{\Delta}{2}t} ]W\rangle|\lesssim (1+t-s)^{-\frac{1}{2}} (1+t)^{-1} s+(1+t-s)^{-1}.
$$
Putting this and ~\eqref{eq:ktks} back into ~\eqref{eq:estGam0}, we obtain
\begin{align*}
|A(\chi_{T}P)|\leq & T^{-1} [(1+t)^{-1}\int_{0}^{T} (1+t-s)^{-\frac{1}{2}} \frac{s}{(1+s)^{\frac{3}{2}}}\ ds+\int_{0}^{t} (1+t-s)^{-1} (1+s)^{-\frac{3}{2}}\ ds\\
 &+(1+t)^{-\frac{3}{2}} \int_{0}^{T} (1+t-s)^{-\frac{1}{2}} s \ ds+(1+t)^{-\frac{1}{2}} \int_{0}^{T}(1+t-s)^{-1} \ ds]\\
\leq &T^{-\frac{1}{3}}(1+t)^{-\frac{1}{2}-\delta},
\end{align*}
where, to obtain the last line, we consider two regimes, $t\in [T,2T]$ and $t\geq 2T.$

The proof of ~\eqref{eq:gam0} is completed by setting $\epsilon(T)=T^{-\frac{1}{3}}.$


To prove ~\eqref{eq:estB0} we use standard arguments to conclude that $$|B_{0}(\chi_{T}P)|\lesssim \|\langle x\rangle^{-4}e^{-i\Delta t}\nabla_{x} W^{\chi_{T}P}\|_{2}\|\langle x\rangle^{4}\beta_0\|_{2}\lesssim (1+t)^{-2}\|\langle x\rangle^{4}\beta_{0}\|_{2}.$$ This, together with ~\eqref{eq:ktks} and the fact $t\geq T$, implies ~\eqref{eq:estB0}.

To prove ~\eqref{eq:estB22} we only estimate $B_1$, the estimate on $B_2$ being very similar.

Recall the definition of $P_{t}$.
Applying the formula $$W^{X_{0}-X_{s}}-W= \int_{0}^{s} \partial_{s_{1}} W^{X_{0}-X_{s_1}}\ ds_1=
\int_{0}^{s}[P_{s_1}\cdot \partial_x ]W^{X_{0}-X_{s_1}}\ ds_1 $$ we rewrite the expression for $B_1:$
$$B_{1}(P)=2\nu  Re \langle \nabla_{x} W,\  e^{i\frac{\Delta}{2}t}  (-\Delta)^{-1} \int_{0}^{t}[P_s\cdot \partial_x] \int_{0}^{s} [P_{s_{1}}\cdot\partial_x] W^{X_0-X_{s_{1}}}\rangle \ ds_1 ds.$$
Define $$\tilde{X}_{t}:=X_0+\int_{0}^{t}\chi_{T}P_{s}\ ds.$$ Then
\begin{align*}
&B_{1}(\chi_{T} P)\\
=&2\nu  Re \langle \nabla_{x} W,\  e^{i\frac{\Delta}{2}t}  (-\Delta)^{-1} \int_{0}^{t}[P_s\cdot \partial_x] \int_{0}^{s} [P_{s_{1}}\cdot\partial_x] W^{X_0-\tilde{X}_{s_{1}}}\rangle \ ds_1 ds.
\end{align*}

By the estimate on $\chi_{T}P_{t}$ in ~\eqref{eq:FiniInte}, it is easy to see that
\begin{equation}\label{eq:finiteInter}
\int_{0}^{t} \chi_{T}|P_{s}|\ ds=\int_{0}^{T} \chi_{T}|P_{s}|\ ds\leq T^{-1}.
\end{equation}

By a similar argument as in ~\eqref{eq:I4}, we may apply Proposition ~\ref{prop:propa} to find that $$|\langle \nabla_{x} W,\  e^{i\frac{\Delta}{2}t}  (-\Delta)^{-1}  \partial_{x_{k}}\partial_{x_{l}} W^{X_0-\tilde{X}_{s_{1}}}\rangle |\lesssim (1+t)^{-2},$$ for any $s_{1}\in [0,t]$.
This, together with ~\eqref{eq:FiniInte}, implies that
$$
|B_{1}(\chi_{T}P)|\lesssim  (1+t)^{-2}T^{-2}
$$ which, together with ~\eqref{eq:ktks}, implies ~\eqref{eq:estB22}.

\end{proof}
\section{Some Heuristic Ideas Underlying ~\eqref{eq:finalForm}}\label{sub:tacitIdeas}\label{sec:Kideas}
We emphasize that the non-rigorous discussions in this subsection is NOT used in any other parts of the paper.

In this section we present the ideas underlying Eq.~\eqref{eq:finalForm}, which is the key equation in our proof of the most important result, Theorem ~\ref{THM:Contraction}.

To save space we only consider the linear part of ~\eqref{eq:linear}, which corresponds to the following equation
\begin{align}\label{eq:integro}
\dot{q}_{t}=&Z Re\langle W, \  e^{i\frac{\Delta}{2}t} W\rangle \int_{0}^{t} q_{s}\ ds-Z Re \langle W,\ \int_{0}^{t}  e^{i\frac{\Delta}{2}(t-s)} \ q_{s}\ ds\ W\rangle\\
q_0=&1.\nonumber
\end{align}
By repeating the arguments used to prove Theorems ~\ref{THM:Contraction} and ~\ref{THM:smallness} we prove that
the solution $q_t$ belongs to the Banach space $B_{\delta,T}.$

We divide our discussion into two parts: In subsection ~\ref{subsec:ideas}, we present the ideas behind constructing ~\eqref{eq:finalForm}, assuming that $|q_t|\leq const\ t^{-\frac{1}{2}-\delta}$ for some $\delta>0$. In subsection ~\ref{SEC:bestconstants}, we present a heuristic argument to show that $|q_t|\leq const\ t^{-\frac{1}{2}-\delta},$ for some $\delta>0$.
\subsection{Ideas underlying ~\eqref{eq:finalForm}}\label{subsec:ideas}
Recall that ~\eqref{eq:finalForm} is obtained by subtracting from ~\eqref{eq:Durhamel} the product of ~\eqref{eq:integrate} with
the function $K(t)$.

We define a constant $$C_0:=q_0-2Z\int_{0}^{\infty} Re\langle W,\ (i\Delta)^{-1} e^{i\frac{\Delta}{2}  s}W\rangle\ q_{s}\ ds.$$
\begin{lemma}\label{LM:discussion1}
 If the function $q$ satisfies the estimate $|q_{t}|\leq ct^{-\frac{1}{2}-\delta},$ for some $\delta>0$, then
\begin{equation}\label{eq:condition}
C_0=0.
\end{equation}
\end{lemma}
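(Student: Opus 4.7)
The plan is to integrate the linear equation ~\eqref{eq:integro} once in $t$ (mimicking the derivation of ~\eqref{eq:integrate}, but without the $F_P$ contribution) so as to obtain
\begin{align*}
q_t = q_0 &+ Z\int_0^t Re\langle W, e^{i\frac{\Delta}{2}\tau}W\rangle \int_0^\tau q_s\, ds\, d\tau\\
&- 2Z Re\langle W, (i\Delta)^{-1}\int_0^t e^{i\frac{\Delta}{2}(t-s)} q_s\, ds\, W\rangle,
\end{align*}
and then to let $t\to\infty$. Under the hypothesis $|q_t|\le ct^{-1/2-\delta}$ the left-hand side vanishes, so the task reduces to computing the limits of the two terms on the right.

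For the second (double) integral I would integrate by parts in $\tau$, using the antiderivative identity
$$\int_0^\tau Re\langle W, e^{i\frac{\Delta}{2}\sigma}W\rangle\, d\sigma = 2 Re\langle W, (i\Delta)^{-1} e^{i\frac{\Delta}{2}\tau}W\rangle,$$
which follows from $\frac{d}{d\sigma}[2(i\Delta)^{-1} e^{i\frac{\Delta}{2}\sigma}] = e^{i\frac{\Delta}{2}\sigma}$ together with the vanishing $Re\langle W,(i\Delta)^{-1}W\rangle=0$ (both $W$ and $(i\Delta)^{-1}$ being real/imaginary in Fourier space). Setting $R(\tau):=\int_0^\tau q_s\, ds$ and $N(\tau):=2 Re\langle W,(i\Delta)^{-1} e^{i\frac{\Delta}{2}\tau}W\rangle$, this gives
$$Z\int_0^t Re\langle W, e^{i\frac{\Delta}{2}\tau}W\rangle R(\tau)\, d\tau = ZN(t)R(t) - Z\int_0^t N(\tau)\, q_\tau\, d\tau.$$
Lemma ~\ref{LM:propaW} provides $N(\tau)=O(\tau^{-1/2})$ at infinity, while the hypothesis gives $|R(t)|\lesssim t^{1/2-\delta}$ if $\delta<1/2$ or $|R(t)|=O(1)$ if $\delta\geq 1/2$; in either case $N(t)R(t)\to 0$. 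The remaining integral converges absolutely since $|N(\tau)q_\tau|\lesssim \tau^{-1-\delta}$ at infinity and is locally bounded near $0$, so this term tends to $-Z\int_0^\infty N(\tau)\, q_\tau\, d\tau$.

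The third term is a convolution $-Z\int_0^t N(t-s)\, q_s\, ds$, and I claim it is $o(1)$ as $t\to\infty$. Splitting the integral at $s=t/2$: on $[0,t/2]$ one has $|N(t-s)|\lesssim t^{-1/2}$ and $\int_0^{t/2}|q_s|\, ds\lesssim t^{1/2-\delta}$ (or $O(1)$), contributing $O(t^{-\delta})$; on $[t/2,t]$ one has $|q_s|\lesssim t^{-1/2-\delta}$ and $\int_{t/2}^t|N(t-s)|\, ds = O(t^{1/2})$, again contributing $O(t^{-\delta})$. Assembling the three limits then produces
$$0 = q_0 - 2Z\int_0^\infty Re\langle W, (i\Delta)^{-1} e^{i\frac{\Delta}{2}\tau}W\rangle\, q_\tau\, d\tau = C_0,$$
which is the desired identity. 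The main delicate point is that for $\delta\in(0,1/2]$ neither $R$ nor $N$ is individually integrable on $[0,\infty)$; the argument succeeds only because the $\tau^{-1/2}$ decay of $N$ exactly offsets the sublinear growth of $R$ (and analogously balances the two halves of the convolution).
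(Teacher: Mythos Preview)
Your proof is correct, but it takes a different route from the paper's. The paper argues via the Duhamel representation~\eqref{eq:Durhamel}: it shows the convolution identity~\eqref{eq:asymptotics}, namely that $\int_0^t K(t-s)\,Re\langle W,e^{i\frac{\Delta}{2}s}W\rangle\int_0^s q_{s_1}\,ds_1\,ds$ may be replaced by $K(t)$ times the same double integral up to an $O(t^{-1/2-\delta})$ error, and then integrates by parts to obtain $q_t = K(t)\,C_0 + O(t^{-1/2-\delta})$. Since $K(t)\sim t^{-1/2}$ decays strictly slower than $t^{-1/2-\delta}$, the conclusion $C_0=0$ follows. This route needs the asymptotics of $K$ (Proposition~\ref{Prop:kernel}) and the estimate~\eqref{eq:secondStep} on $|K(t-s)-K(t)|$.

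You bypass the kernel $K$ entirely: you work directly with the once-integrated equation (the linear analogue of~\eqref{eq:integrate}), integrate by parts using the antiderivative $N(\tau)$, and let $t\to\infty$, obtaining $q_t=C_0+O(t^{-\delta})$. This is in fact the content of Lemma~\ref{LM:secon}, whose proof the paper omits as ``almost identical''; you then observe that $q_t\to0$ forces $C_0=0$. Your approach is more elementary---it uses only the $O(\tau^{-1/2})$ decay of $N(\tau)$ from Lemma~\ref{LM:propaW} and a standard convolution splitting---whereas the paper's approach yields the sharper asymptotic $q_t\sim K(t)C_0$, which is what motivates passing from~\eqref{eq:Durhamel} to~\eqref{eq:finalForm} in the first place.
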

This lemma will be proven shortly.

For technical reasons, it is hard to prove that $|q_t|\leq C t^{-\frac{1}{2}-\delta},$ even after imposing $C_0=0$ on ~\eqref{eq:Durhamel}. Instead we search for a new equation containing $C_0$. A natural candidate is ~\eqref{eq:integrate}, which is obtained by integrating both sides of ~\eqref{eq:linear} from $0$ to $t$. The key observation is:
\begin{lemma}\label{LM:secon}
If the function $q$ satisfies the estimate $|q_{t}|\leq ct^{-\frac{1}{2}-\delta},$ for some $\delta>0$, then ~\eqref{eq:integrate} can be rewritten as
\begin{equation}
 q_t=C_0+O(t^{-\delta}).
\end{equation}
\end{lemma}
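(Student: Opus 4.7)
My plan is to combine the two non-trivial terms on the right hand side of \eqref{eq:integrate} via the antiderivative identity
\[
\partial_\tau\bigl[2(i\Delta)^{-1}e^{i\Delta\tau/2}\bigr]=e^{i\Delta\tau/2},
\]
together with the fact that $Re\langle W,(i\Delta)^{-1}W\rangle=0$, so that the right hand side splits into a constant matching $C_0$ plus two explicit remainder terms, each of which will then be shown to be $O(t^{-\delta})$ using the decay hypothesis $|q_s|\le c s^{-1/2-\delta}$ and the asymptotics \eqref{eq:asymp}.

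\textbf{Step 1: Fubini plus antiderivative.} Applying Fubini to the first integral in \eqref{eq:integrate},
\[
Z\int_0^t Re\langle W,e^{i\Delta\tau/2}W\rangle\int_0^\tau q_s\,ds\,d\tau
= Z\int_0^t q_s\int_s^t Re\langle W,e^{i\Delta\tau/2}W\rangle\,d\tau\,ds,
\]
and invoking the antiderivative identity, the inner integral equals $2Re\langle W,(i\Delta)^{-1}[e^{i\Delta t/2}-e^{i\Delta s/2}]W\rangle$. Substituting this and pairing the $e^{i\Delta t/2}$-piece with the third term of \eqref{eq:integrate} rewrites \eqref{eq:integrate} as
\[
q_t = q_0 - 2Z\int_0^t q_s Re\langle W,(i\Delta)^{-1}e^{i\Delta s/2}W\rangle\,ds + E_1(t),
\]
where
\[
E_1(t):=2Z\int_0^t q_s\bigl[Re\langle W,(i\Delta)^{-1}e^{i\Delta t/2}W\rangle - Re\langle W,(i\Delta)^{-1}e^{i\Delta(t-s)/2}W\rangle\bigr]ds.
\]
Writing $\int_0^t = \int_0^\infty - \int_t^\infty$ in the remaining single integral and invoking the definition of $C_0$ then yields
\[
q_t - C_0 = 2Z\int_t^\infty q_s Re\langle W,(i\Delta)^{-1}e^{i\Delta s/2}W\rangle\,ds + E_1(t).
\]
The integral defining $C_0$ converges absolutely since $|q_s|\lesssim (1+s)^{-1/2-\delta}$ and, by \eqref{eq:asymp}, $|Re\langle W,(i\Delta)^{-1}e^{i\Delta s/2}W\rangle|\lesssim (1+s)^{-1/2}$, so their product is $\lesssim (1+s)^{-1-\delta}$.

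\textbf{Step 2: Decay of the remainders.} The tail is immediate: the same bounds give $|2Z\int_t^\infty q_s Re\langle W,(i\Delta)^{-1}e^{i\Delta s/2}W\rangle ds|\lesssim \int_t^\infty s^{-1-\delta}ds = O(t^{-\delta})$. For $E_1(t)$ I split the $s$-integral at $s=t/2$. On $[0,t/2]$ the mean value theorem applied to $\tau\mapsto 4\pi^{3/2}\tau^{-1/2}+O(\tau^{-3/2})$ gives
\[
|Re\langle W,(i\Delta)^{-1}(e^{i\Delta t/2}-e^{i\Delta(t-s)/2})W\rangle|\lesssim s\,t^{-3/2},
\]
so this piece is bounded by $\int_0^{t/2} s\cdot t^{-3/2}\cdot s^{-1/2-\delta}ds = O(t^{-\delta})$. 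On $[t/2,t]$ I use the crude bounds $|Re\langle W,(i\Delta)^{-1}e^{i\Delta\tau/2}W\rangle|\lesssim (1+\tau)^{-1/2}$ and $|q_s|\lesssim t^{-1/2-\delta}$, giving a contribution bounded by $t^{-1/2-\delta}\int_{t/2}^t[t^{-1/2}+(t-s)^{-1/2}]ds = O(t^{-\delta})$.

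\textbf{Main obstacle.} The delicate point is the estimate of $E_1(t)$: the individual terms $Re\langle W,(i\Delta)^{-1}e^{i\Delta t/2}W\rangle\int_0^t q_s ds$ and the convolution term are both only $O(t^{-1/2})$ in general, so the required $O(t^{-\delta})$ decay comes entirely from the cancellation between $e^{i\Delta t/2}$ and $e^{i\Delta(t-s)/2}$ in the low-$s$ regime, which effectively converts the slowly-decaying $\tau^{-1/2}$ kernel into $s\,t^{-3/2}$. Making this cancellation quantitative through the asymptotic expansion \eqref{eq:asymp} is the technical heart of the proof.
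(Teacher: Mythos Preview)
Your proof is correct and follows essentially the same route as the paper's. The paper remarks that the argument is ``almost identical'' to that of Lemma~\ref{LM:discussion1}: integrate by parts in the double integral (your Fubini-plus-antiderivative step is exactly this), extend $\int_0^t$ to $\int_0^\infty$ to produce $C_0$, and estimate the difference $V(t)-V(t-s)$ with $V(\tau)=2Re\langle W,(i\Delta)^{-1}e^{i\Delta\tau/2}W\rangle$. The only cosmetic distinction is that the paper controls this difference via the algebraic identity $(t-s)^{-1/2}-t^{-1/2}\le t^{-1}(t-s)^{-1/2}s$ applied to the leading asymptotics, whereas you use the mean value theorem together with $V'(\tau)=Re\langle W,e^{i\Delta\tau/2}W\rangle=O(\tau^{-3/2})$; both yield the same bound $|V(t)-V(t-s)|\lesssim s\,t^{-3/2}$ for $s\le t/2$.
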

The proof of this lemma is almost identical to that of Lemma ~\ref{LM:discussion1}.

By Lemmas ~\ref{LM:discussion1} and ~\ref{LM:secon} it is natural to consider Eq.~\eqref{eq:finalForm}, as we did.

\begin{flushleft}
{\bf{Proof of Lemma ~\ref{LM:discussion1}}}
\end{flushleft}
The key point in proving the lemma is to show that
\begin{equation}\label{eq:asymptotics}
\int_{0}^{t}K(t-s) Re\langle W,\ e^{i\frac{\Delta}{2}  s}W\rangle \int_{0}^{s}q_{s_1} ds_1 ds
=K(t)\int_{0}^{t} Re\langle W,\ e^{i\frac{\Delta}{2}  s}W\rangle \int_{0}^{s}q_{s_1} ds_1 ds+O(t^{-\frac{1}{2}-\delta}).
\end{equation} Suppose this holds. Then by integrating by parts in the variable $s$ and
the assumption that $q_t\leq Ct^{-\frac{1}{2}-\delta},$ we find
\begin{align*}
& \int_{0}^{t}K(t-s) Re\langle W,\ e^{i\frac{\Delta}{2}  s}W\rangle \int_{0}^{s}q_{s_1} ds_1 ds\\
=&-2K(t)\int_{0}^{t} Re\langle W,\ (i\Delta)^{-1} e^{i\frac{\Delta}{2}  s}W\rangle\ q_{s}\ ds+O(t^{-\frac{1}{2}-\delta})\\
=&-2K(t)\int_{0}^{\infty} Re\langle W,\ (i\Delta)^{-1} e^{i\frac{\Delta}{2}  s}W\rangle\ q_{s}\ ds+O(t^{-\frac{1}{2}-\delta})
\end{align*} which, together with the estimate $Re\langle W,\ (i\Delta)^{-1}  e^{i\frac{\Delta}{2}t} W\rangle=O(t^{-\frac{1}{2}})$ proved in ~\eqref{eq:asymp}, obviously implies ~\eqref{eq:condition}.

Now we turn to ~\eqref{eq:asymptotics}.
Define a function $f:[0,\infty)\rightarrow \mathbb{R}$ by $$f(t):=Re\langle W,\  e^{i\frac{\Delta}{2}t} W\rangle \int_{0}^{t} q_{s_1} ds_1.$$ Our assumption on $q_t$ and the estimate $Re\langle W,\  e^{i\frac{\Delta}{2}t} W\rangle=O(t^{-\frac{3}{2}})$ in ~\eqref{eq:asymp} imply that there exists a constant $C_1$ such that
$$|f(t)|\leq C_1 (1+t)^{-1-\delta}.$$ To prove ~\eqref{eq:asymptotics} it is sufficient to prove that
\begin{equation}\label{eq:secondStep}
\int_{0}^{t}|K(t-s)-K(t)| |f(s)|\ ds=O(t^{-\frac{1}{2}-\delta}).
\end{equation}
To see this we use the asymptotics of the function $K$ in ~\eqref{eq:estKt} and obtain that
\begin{align*}
& \int_{0}^{t}|K(t-s)-K(t)| |f(s)|\ ds\\
\lesssim & \int_{0}^{t} [(t-s)^{-\frac{1}{2}}-t^{-\frac{1}{2}}] (1+s)^{-1-\delta}\ ds+\int_{0}^{t}[(1+t-s)^{-\frac{3}{2}}+(1+t)^{-\frac{3}{2}}] (1+s)^{-1-\delta}\ ds.
\end{align*}
The second term on the right hand side is of order $t^{-1-\delta}.$ For the first term, we use the observation that
$$0<(t-s)^{-\frac{1}{2}}-t^{-\frac{1}{2}}= (t-s)^{-\frac{1}{2}} t^{-\frac{1}{2}}\frac{s}{(t-s)^{\frac{1}{2}}+t^{\frac{1}{2}}}\leq t^{-1} (t-s)^{-\frac{1}{2}}s.$$ A direct computation then shows that it is of order $t^{-\frac{1}{2}-\delta}.$

Hence ~\eqref{eq:secondStep} is proved.
\begin{flushright}
 $\square$
\end{flushright}


\subsection{Best Decay Estimate}\label{SEC:bestconstants}
Here we consider the decay estimate for the solution of the linear integro-differential equation ~\eqref{eq:integro}.
Recall that $Z$ is a positive constant, $W:\mathbb{R}^3\rightarrow \mathbb{R}$ is a spherically symmetric and of rapid decay.

We consider two possibilities: The solution $q$ decays faster and slower than $t^{-1},$ respectively. In the first case we have the following result.
\begin{theorem}\label{THM:integrable}
Suppose that the solution $q:\ \mathbb{R}^{+}\rightarrow \mathbb{R}$ of ~\eqref{eq:integro} satisfies the asymptotic form
\begin{equation}\label{eq:integrable}
q_{t}=C t^{-1-\delta}+O(t^{-1-\delta-\epsilon}),\ \text{as}\ t\rightarrow \infty,
\end{equation} for some $\delta\in (0,\frac{1}{2})$ and $\epsilon>0.$ Then
the constant $\delta$ must satisfy the equation
\begin{equation}\label{eq:best1}
\int_{0}^{\infty}e^{-t} t^{-\delta}\ dt=\frac{1}{\sqrt{\pi}(1+2\delta)}\int_{0}^{\infty} e^{-t} t^{-\frac{1}{2}-\delta}\ dt.
\end{equation}
By computer simulation there exists exactly one solution $\delta$, which belongs to the interval $(0.15,0.17)$.
\end{theorem}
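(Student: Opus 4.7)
The plan is to apply the Laplace transform $\tilde f(p) := \int_0^\infty e^{-pt} f(t)\, dt$ to the Duhamel form of Eq.~\eqref{eq:integro}, namely
\begin{equation*}
q_t = K(t) + Z\int_0^t K(t-s)\, V(s)\, Q(s)\, ds,
\end{equation*}
where $V(s) := \mathrm{Re}\langle W, e^{i\Delta s/2} W\rangle$ and $Q(s) := \int_0^s q_{s_1}\, ds_1$. This yields the algebraic identity $\tilde q(p) = \tilde K(p)\bigl(1 + Z\,\widetilde{VQ}(p)\bigr)$. The condition on $\delta$ will emerge from matching the coefficient of the nonanalytic power $p^\delta$ on both sides as $p\to 0^+$.

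The ansatz $q_t = Ct^{-1-\delta} + O(t^{-1-\delta-\epsilon})$ is integrable at infinity, so $A_\infty := \int_0^\infty q_s\, ds$ is finite and $Q(t) = A_\infty - \frac{C}{\delta}\, t^{-\delta} + O(t^{-\delta-\epsilon})$. The classical Abelian theorem then gives $\tilde q(p) = A_\infty + C\,\Gamma(-\delta)\, p^\delta + o(p^\delta)$ as $p\to 0^+$. From Proposition~\ref{Prop:kernel} we get $\tilde K(p) = \frac{1}{4Z\pi^2}\, p^{-1/2} + O(\log p)$, and from Lemma~\ref{LM:propaW}, $V(t) \sim -2\pi^{3/2} t^{-3/2}$, so
\begin{equation*}
V(t)\, Q(t) \sim -2\pi^{3/2} A_\infty\, t^{-3/2} + \frac{2\pi^{3/2} C}{\delta}\, t^{-3/2-\delta},
\end{equation*}
whence the Abelian theorem yields
\begin{equation*}
\widetilde{VQ}(p) = -\frac{1}{Z} + 4\pi^2 A_\infty\, p^{1/2} + \frac{2\pi^{3/2} C}{\delta}\, \Gamma\!\left(-\tfrac12 - \delta\right) p^{1/2+\delta} + \cdots .
\end{equation*}
The value $\widetilde{VQ}(0) = -1/Z$ is obtained by integrating Eq.~\eqref{eq:integro} over $[0,\infty)$ and invoking the identity $\int_0^\infty V(u)\, du = 0$, which follows from the Fourier representation of $V$ together with the vanishing of $\rho^2|\hat W(\rho)|^2$ at $\rho = 0$ (equivalently, $G(0) = 0$ in~\eqref{eq:Gki0}).

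Substituting these expansions into the algebraic identity, the constant $-1/Z$ in $\widetilde{VQ}$ cancels the $1$, and the leading nonanalytic part of $1 + Z\,\widetilde{VQ}(p)$ becomes $\frac{2Z\pi^{3/2} C}{\delta}\,\Gamma(-\tfrac12 - \delta)\, p^{1/2+\delta}$. Multiplying by $\tilde K(p) \sim \frac{1}{4Z\pi^2}\, p^{-1/2}$ produces a $p^\delta$ coefficient of $\frac{C}{2\sqrt\pi\,\delta}\,\Gamma(-\tfrac12 - \delta)$ in $\tilde q(p)$. Equating this to $C\,\Gamma(-\delta)$ and using $\Gamma(1-\delta) = -\delta\,\Gamma(-\delta)$ together with $\Gamma(\tfrac12 - \delta) = -(\tfrac12 + \delta)\,\Gamma(-\tfrac12 - \delta)$ reduces the condition to
\begin{equation*}
\sqrt\pi\,(1+2\delta)\,\Gamma(1-\delta) = \Gamma(\tfrac12 - \delta),
\end{equation*}
which is precisely~\eqref{eq:best1} once the Gamma functions are rewritten as $\int_0^\infty e^{-t}\, t^{-\delta}\, dt$ and $\int_0^\infty e^{-t}\, t^{-1/2-\delta}\, dt$. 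The main technical obstacle is the rigorous Abelian/Tauberian bookkeeping: one must verify that the subleading $\log p$ correction coming from the $C_K\, t^{-1}$ term in $K$, as well as the $A_\infty\, p^{1/2}$ contribution in $\widetilde{VQ}$, enter only at orders $p^{1/2}\log p$, $p^{1/2+\delta}\log p$, and $p^0$, all of which are strictly separated from (and hence cannot contaminate) the $p^\delta$ coefficient that encodes~\eqref{eq:best1}.
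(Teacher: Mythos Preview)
Your argument is correct and reaches the same constraint~\eqref{eq:best1}, but it is organized quite differently from the paper's derivation. The paper works on the Fourier side with the identity
\[
\int_0^\infty e^{ikt}q_t\,dt \;=\; -\,\frac{Z\bigl[\Psi(k)-\Psi(0)\bigr]}{ik+ZG(k+i0)},
\]
expands $\Psi(k)-\Psi(0)$ directly in terms of $G(k+i0)$ and a remainder $Y(k)$, and then extracts the $k^\delta$ coefficient by the contour rotation $\int_0^\infty e^{it}t^{-a}\,dt=i^{1-a}\int_0^\infty e^{-t}t^{-a}\,dt$; the complex phases from $i^{1/2}=\tfrac{1}{\sqrt2}(1+i)$ and from the prefactor $(i-1)$ in the expansion of $G(k+i0)$ then combine to give the real relation~\eqref{eq:best1}. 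Your route instead stays on the positive real axis via the Laplace transform of the Duhamel representation $q=K+ZK\ast(VQ)$, so that the matching becomes a purely real comparison of $p^\delta$ coefficients using Abelian asymptotics of $\tilde q$, $\tilde K$, and $\widetilde{VQ}$. This avoids all complex phases and contour shifts and makes the cancellation $1+Z\widetilde{VQ}(0)=0$ (equivalently $1+Z\Psi(0)=0$) do the same work that the paper's $\Psi(k)-\Psi(0)$ subtraction does. The price is that you must control the full small-$p$ expansion of $\tilde K(p)$ from the large-$t$ asymptotics of $K$ in Proposition~\ref{Prop:kernel}, whereas the paper can read off the denominator $ik+ZG(k+i0)$ exactly. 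Your bookkeeping of the subleading $p^{1/2}\log p$ and $p^{1/2}$ contributions is right: since $\delta<\tfrac12$ these are all $o(p^\delta)$ and cannot pollute the coefficient you are matching. One small clarification: your justification of $\widetilde{VQ}(0)=-1/Z$ via ``$\int_0^\infty V(u)\,du=0$'' is really the statement $\lim_{M\to\infty}\int_0^M V(u)\,du=0$, which follows from $\mathrm{Re}\,\langle W,(i\Delta)^{-1}W\rangle=0$ together with the $M^{-1/2}$ decay in Lemma~\ref{LM:propaW}; this is exactly the content of the lemma preceding the paper's proof.
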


In the second case we have the following result.
\begin{theorem}\label{THM:nonIntegrable}
There does not exists $\delta\in (0,\frac{1}{2})$ such that the solution $q:\ \mathbb{R}^{+}\rightarrow \mathbb{R}$ of Eq.~\eqref{eq:integro} has the asymptotic form
\begin{equation}\label{eq:NonIntegrable}
q_{t}=C t^{-\frac{1}{2}-\delta}+O(t^{-\frac{1}{2}-\delta-\epsilon}),
\end{equation} for some $\epsilon>0.$ Equivalently, there is no $\delta\in (0,\frac{1}{2})$ such that the following equation holds:
\begin{equation}\label{eq:best2}
\frac{1}{\delta (1-2\delta)\sqrt{\pi}} \int_{0}^{\infty} e^{-t} t^{-\delta}\ dt= \int_{0}^{\infty} e^{-t} t^{-\frac{1}{2}-\delta}\ dt.
\end{equation}
\end{theorem}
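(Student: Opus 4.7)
The plan is to reduce the statement to a transcendental identity involving Gamma functions and then dispose of that identity by a convexity argument. Concretely, I will (a) substitute the ansatz $q_t = C t^{-1/2-\delta} + O(t^{-1/2-\delta-\epsilon})$ into Equation ~\eqref{eq:integro} and read off, via a small-$p$ Laplace-transform matching, the transcendental relation ~\eqref{eq:best2}, and (b) show by a convexity argument that this relation admits no root in $(0,1/2)$.

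For step (a), consider the Laplace transform $\hat q(p) = \int_0^\infty e^{-pt} q_t\, dt$. By a Karamata/Abelian argument the tail $q_t \sim Ct^{-1/2-\delta}$ forces $\hat q(p) = C\Gamma(1/2-\delta) p^{\delta-1/2} + (\text{more regular terms})$ as $p\to 0^+$, so $\mathcal L[\dot q](p) = p\hat q(p) - 1$ expands as $-1 + C\Gamma(1/2-\delta) p^{1/2+\delta} + \ldots$, with no $p^\delta$ contribution. On the right-hand side, since $M(t) := \Re\langle W, e^{i\Delta t/2}W\rangle \sim -2\pi^{3/2} t^{-3/2}$ by Lemma ~\ref{LM:propaW} and $Q(t) := \int_0^t q_s\,ds \sim \tfrac{C}{1/2-\delta} t^{1/2-\delta}$, the product obeys $M(t)Q(t) \sim -\tfrac{2\pi^{3/2}C}{1/2-\delta} t^{-1-\delta}$, yielding $\mathcal L[MQ](p) \sim -\tfrac{2\pi^{3/2}C}{1/2-\delta}\Gamma(-\delta)\, p^\delta$. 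For the convolution term one has $\mathcal L[M*q](p) = \hat M(p)\hat q(p)$; from the identity $\int_0^\infty e^{iks} M(s)\,ds = -G(k+i0)$ together with Lemma ~\ref{LM:TaylorExp} one reads off $\hat M(p) \sim 4\pi^2 p^{1/2}$, giving $\mathcal L[M*q](p) \sim 4\pi^2 C\Gamma(1/2-\delta)\, p^\delta$. Requiring the combined coefficient of $p^\delta$ in $\mathcal L[Z MQ - Z M*q]$ to vanish, and using $\Gamma(1-\delta) = -\delta\Gamma(-\delta)$, produces
\begin{equation*}
\Gamma(1-\delta) = \delta(1-2\delta)\sqrt{\pi}\,\Gamma(1/2-\delta),
\end{equation*}
which, upon writing the Gamma functions as $\int_0^\infty e^{-t}t^{-\delta}dt$ and $\int_0^\infty e^{-t}t^{-1/2-\delta}dt$, is precisely Equation ~\eqref{eq:best2}.

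For step (b), set $x := 1/2-\delta \in (0,1/2)$ and recast the identity as $f(x) = g(x)$, where $f(x) := \Gamma(1/2+x)/\Gamma(1+x)$ and $g(x) := (1-2x)\sqrt{\pi}$. Then $f(0) = g(0) = \sqrt{\pi}$. Using $\psi(1/2) - \psi(1) = -2\log 2$, one finds $f'(0) = -2\sqrt{\pi}\log 2$ and $g'(0) = -2\sqrt{\pi}$, so $(f-g)'(0) = 2\sqrt{\pi}(1-\log 2) > 0$. Moreover $\log f$ is strictly convex on $[0,1/2)$, because
\begin{equation*}
(\log f)''(x) = \psi'(1/2+x) - \psi'(1+x) = \sum_{n\ge 0}\Bigl[\tfrac{1}{(n+1/2+x)^2} - \tfrac{1}{(n+1+x)^2}\Bigr] > 0,
\end{equation*}
and since $f > 0$ this makes $f$ itself strictly convex; subtracting the linear function $g$ preserves convexity, so $f-g$ is strictly convex with $(f-g)(0) = 0$ and $(f-g)'(0) > 0$. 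Then $(f-g)'$ is nondecreasing and stays strictly positive, so $f-g$ is strictly positive on $(0,1/2)$. This rules out any root of ~\eqref{eq:best2} in $(0,1/2)$ and, combined with step (a), establishes the theorem.

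The main obstacle is the rigorous justification of the coefficient extraction in step (a): the product $M(t)Q(t)$ does not transform to a convolution, so I must argue via a careful Abelian/Tauberian analysis that the non-analytic singular part of $\mathcal L[MQ]$ at $p=0$ is determined solely by the $t^{-1-\delta}$ tail of the integrand, with neighboring singular powers ($p^{1/2}$ from $\hat M$ itself, $p^{1/2+\delta}$ from cross terms, analytic contributions from the bounded region near $t=0$, and error terms in the asymptotic expansion of $q$) not contaminating the $p^\delta$ coefficient — only once this separation is secured does the Gamma-function algebra and the convexity argument run as described.
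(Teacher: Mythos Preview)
Your step (a) is essentially the paper's own derivation, transposed from the Fourier side ($k\in\mathbb{R}$, with contour deformations $\int_0^\infty e^{it}t^{-\alpha}\,dt = i^{1-\alpha}\int_0^\infty e^{-t}t^{-\alpha}\,dt$) to the Laplace side ($p>0$, with Abelian asymptotics). The paper packages the equation as the consistency relation~\eqref{eq:consistency} and then expands $\Psi(k)-\Psi(0)$ and $\int_0^\infty e^{ikt}q_t\,dt$ to leading singular order; you instead match the $p^\delta$ coefficients of $\mathcal L[\dot q]$, $\mathcal L[MQ]$, and $\hat M\hat q$ directly. Both routes use the same two ingredients (the $t^{-3/2}$ tail of $M$ from Lemma~\ref{LM:propaW} and the $k^{1/2}$ behaviour of $G$ from Lemma~\ref{LM:TaylorExp}) and land on the same identity $\Gamma(1-\delta)=\delta(1-2\delta)\sqrt{\pi}\,\Gamma(1/2-\delta)$. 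The obstacle you flag is exactly the one the paper sidesteps by declaring the whole appendix heuristic; your formulation via Laplace transforms is arguably cleaner because the singular exponents $p^\delta,\,p^{1/2},\,p^{1/2+\delta},\,p^{\delta+\epsilon}$ are all real and distinct for $\delta\in(0,1/2)$, so the separation you need is a standard Abelian statement rather than an oscillatory one.

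Your step (b), however, genuinely goes beyond the paper. The paper's proof stops at the derivation of~\eqref{eq:best2} and simply asserts that it has no root in $(0,1/2)$; for the companion Theorem~\ref{THM:integrable} the paper explicitly invokes computer simulation, and here no argument is given at all. Your substitution $x=1/2-\delta$, rewriting~\eqref{eq:best2} as $\Gamma(1/2+x)/\Gamma(1+x)=(1-2x)\sqrt{\pi}$, and the log-convexity argument (using $\psi'(1/2+x)>\psi'(1+x)$ termwise, then $(f-g)(0)=0$, $(f-g)'(0)=2\sqrt{\pi}(1-\log 2)>0$) is correct and supplies an analytic proof where the paper has none. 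This is a real improvement in the self-contained logic of the appendix.
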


The two theorems will be proven in Sections ~\ref{sec:integrable} and ~\ref{sec:nonIntegrable}.

The general ideas of the proofs are simple. We Fourier-transform both sides of ~\eqref{eq:integro} and look for possible values of $\delta$ consistent with that equation.

To prepare for the proof we begin with deriving a convenient expression for $q_t$. By Fourier transformation we find that
\begin{itemize}
\item[(A)]
\begin{equation}
\int_{0}^{\infty} e^{ikt} \dot{q}_{t}\ dt=-1-ik\int_{0}^{\infty}e^{ikt} q_t\ dt.
\end{equation}
\item[(B)]
\begin{equation}
-\int_{0}^{\infty} e^{ikt} Re \langle W,\ \int_{0}^{t}  e^{i\frac{\Delta}{2}(t-s)}  q_{s}\ ds\ W\rangle dt= G(k+i0)\int_{0}^{\infty} e^{ikt} q_t\ dt
\end{equation} where $G(\cdot+i0):\mathbb{R}\rightarrow \mathbb{C}$ is the function defined in ~\eqref{eq:difGk}.
\end{itemize}

Define a function $\Psi:\ \mathbb{R}\rightarrow \mathbb{C}$ by
\begin{equation}\label{eq:difPsi}
\Psi(k):=\int_{0}^{\infty} e^{ikt} \ Re\langle W, \  e^{i\frac{\Delta}{2}t} W\rangle \int_{0}^{t} q_{s}\ dsdt.
\end{equation}
Then the computations above and ~\eqref{eq:integro} imply that
\begin{equation*}
\int_{0}^{\infty} e^{ikt} q_{t}\ dt=-\frac{1+Z \Psi(k)}{ik+Z G(k+i0)}.
\end{equation*}
To simplify this equation the following observation is useful.
\begin{lemma}
If $|q_{t}|\leq c(1+t)^{-\frac{1}{2}-\epsilon}$ for some positive constants $c$ and $\epsilon$, then we have that
\begin{equation}
1+Z \Psi(0)=0.
\end{equation}
\end{lemma}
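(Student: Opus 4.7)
The plan is to take $k\to 0$ in the identity
\[
F(k):=\int_0^\infty e^{ikt} q_t\,dt = -\frac{1+Z\Psi(k)}{ik+ZG(k+i0)},
\]
(derived from (A), (B) and \eqref{eq:integro}) and exploit the fact that the denominator vanishes at $k=0$, so the numerator must vanish as well.

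First I would verify that $\Psi$ is continuous at $k=0$: the decay bound $|q_t|\le c(1+t)^{-1/2-\epsilon}$ gives $\bigl|\int_0^t q_s\,ds\bigr|\le C(1+t)^{\max(1/2-\epsilon,\,0)}$, and combined with the propagator estimate $|\mathrm{Re}\langle W,e^{i\Delta t/2}W\rangle|=O((1+t)^{-3/2})$ from Lemma~\ref{LM:propaW}, the integrand defining $\Psi(k)$ is bounded by an integrable majorant $C(1+t)^{-1-\epsilon}$, so dominated convergence yields $\Psi(k)\to\Psi(0)$.

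Next I would analyze the denominator and $F(k)$ near $0$. By Lemma~\ref{LM:TaylorExp}, $G(k+i0)=O(|k|^{1/2})$, so $ik+ZG(k+i0)=O(|k|^{1/2})$. To control $F(k)$ I would use the equation \eqref{eq:integro} itself to show $|\dot q_t|\le C(1+t)^{-1-\epsilon}$ (directly, since $\mathrm{Re}\langle W,e^{i\Delta t/2}W\rangle=O(t^{-3/2})$ and the convolution against $q$ is controlled by splitting at $t/2$). In particular $\dot q\in L^1$ with $\int_0^\infty\dot q_t\,dt=q_\infty-q_0=-1$, so integration by parts gives $ikF(k)=-1-\hat{\dot q}(k)$, and splitting $\hat{\dot q}(k)-\hat{\dot q}(0)=\int_0^\infty(e^{ikt}-1)\dot q_t\,dt$ at $t=1/|k|$ yields the modulus of continuity $|\hat{\dot q}(k)+1|=O(|k|^{\min(\epsilon,1)})$. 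Thus $F(k)=O(|k|^{\epsilon-1})$.

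Multiplying the identity by $(ik+ZG(k+i0))$ and letting $k\to 0^+$, the right-hand side tends to $-(1+Z\Psi(0))$ by continuity of $\Psi$, while the left-hand side is dominated by $|ik+ZG(k+i0)|\cdot|F(k)|=O(|k|^{1/2+\epsilon-1})=O(|k|^{\epsilon-1/2})$. In the well-behaved regime $\epsilon>1/2$ this vanishes and we immediately conclude $1+Z\Psi(0)=0$; in the delicate regime $\epsilon\le 1/2$, the same conclusion is reached by a Tauberian contradiction argument: if $1+Z\Psi(0)\ne 0$, the identity would force $F(k)\sim c\,k^{-1/2}$ near zero, but any such inverse-Fourier singularity of order $k^{-1/2}$ corresponds to a long-time tail $q_t\sim c' t^{-1/2}$, incompatible with the hypothesis $|q_t|\le c(1+t)^{-1/2-\epsilon}$.

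The main obstacle is making the final matching rigorous in the regime $\epsilon\le 1/2$: one must either sharpen the pointwise estimate on $F(k)$ (by bootstrapping regularity of $\dot q$ and $\ddot q$ from the equation) or invoke a Tauberian theorem to convert the singular behavior of $F(k)$ at $k=0$ into a forbidden long-time asymptotic of $q_t$. The former is the route consistent with the Fourier-analytic framework already used in this section.
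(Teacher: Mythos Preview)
Your Fourier-analytic route is correct in spirit for $\epsilon>1/2$, but there is a genuine gap in step 3 that also undermines the $\epsilon\le 1/2$ case. You claim that the equation \eqref{eq:integro} yields $|\dot q_t|\le C(1+t)^{-1-\epsilon}$ by ``splitting at $t/2$''. This is not so: the convolution term $\mathrm{Re}\,\langle W,\int_0^t e^{i\Delta(t-s)/2} q_s\,ds\,W\rangle$ satisfies only
\[
\int_0^t (1+t-s)^{-3/2}(1+s)^{-1/2-\epsilon}\,ds \;=\; O\bigl((1+t)^{-1/2-\epsilon}\bigr),
\]
with the dominant contribution coming from $s\in[t/2,t]$, where $\int_{t/2}^t(1+t-s)^{-3/2}\,ds$ converges to a constant rather than providing extra decay. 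Hence for $\epsilon\le 1/2$ you cannot conclude $\dot q\in L^1$, the modulus-of-continuity bound on $\widehat{\dot q}$ collapses, and you are forced entirely onto the Tauberian contradiction---which you correctly flag as the ``main obstacle'' and do not actually prove.

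The paper avoids all of this with a direct, three-line argument in the time domain: integrate \eqref{eq:integro} from $0$ to $T$. Since $q_T\to 0$ and $q_0=1$, one obtains
\[
-1 \;=\; Z\int_0^T \mathrm{Re}\,\langle W,e^{i\Delta t/2}W\rangle\!\int_0^t q_s\,ds\,dt \;-\; Z\int_0^T \mathrm{Re}\,\langle W,\int_0^t e^{i\Delta(t-s)/2} q_s\,ds\,W\rangle\,dt \;+\; o(1).
\]
The first integral tends to $\Psi(0)$ by dominated convergence (your own step 1). For the second, exchange the order of the $t$- and $s$-integrals and use $\mathrm{Re}\,\langle W,(i\Delta)^{-1}W\rangle=0$ together with the estimate $\mathrm{Re}\,\langle W,(i\Delta)^{-1}e^{i\Delta\tau/2}W\rangle=O(\tau^{-1/2})$ from Lemma~\ref{LM:propaW}; one gets a bound of order $\int_0^T (1+T-s)^{-1/2}(1+s)^{-1/2-\epsilon}\,ds = O(T^{-\epsilon})\to 0$. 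This immediately gives $1+Z\Psi(0)=0$ for every $\epsilon>0$, with no Fourier asymptotics and no Tauberian machinery.
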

\begin{proof}
By the definition of $\Psi$ it is enough to prove that $$\int_{0}^{T}Re \langle W,\ \int_{0}^{t}  e^{i\frac{\Delta}{2}(t-s)}  q_{s}\ ds\ W\rangle dt\rightarrow 0,$$ as $T\rightarrow \infty.$ To see this we use simple integration by parts in the variable $t$ to find $$\int_{0}^{T}Re \langle W,\ \int_{0}^{t}  e^{i\frac{\Delta}{2}(t-s)}  q_{s}\ ds\ W\rangle dt=Re\langle W, \ e^{i\frac{\Delta}{2}  T}(i\Delta)^{-1} W\rangle \int_{0}^{T} q_s\ ds.$$ The lemma follows from the observation that $\langle W, \ e^{i\frac{\Delta}{2}  T}(i\Delta)^{-1} W\rangle=O(T^{-\frac{1}{2}})$.
\end{proof}
This result implies that
\begin{equation}\label{eq:consistency}
\int_{0}^{\infty} e^{ikt} q_{t}\ dt=-\frac{Z [\Psi(k)-\Psi(0)]}{ik+Z G(k+i0)}.
\end{equation}

In the following we prove the two theorems, using the equation ~\eqref{eq:consistency}.
\subsubsection{\large{Proof of Theorem ~\ref{THM:integrable}}}\label{sec:integrable}
We assume that $k>0.$ This suffices thanks to that the positive $k$ and the negative $k$ differ just a complex conjugation of ~\eqref{eq:consistency}.

We first analyze the term $\Psi$ in ~\eqref{eq:consistency}.
By direct computation
\begin{equation}\label{eq:PsiInte}
\begin{array}{lll}
\Psi(k)&=&\int_{0}^{\infty} e^{ikt} Re\langle W,  e^{i\frac{\Delta}{2}t} W\rangle \int_{0}^{t} q_s\ ds dt\\
& &\\
&=& \int_{0}^{\infty} e^{ikt} Re\langle W,  e^{i\frac{\Delta}{2}t} W\rangle  dt\ \int_{0}^{\infty} q_s\ ds-\int_{0}^{\infty} e^{ikt} Re\langle W,  e^{i\frac{\Delta}{2}t} W\rangle \int_{t}^{\infty} q_s\ ds dt
\end{array}
\end{equation} Notice $$\int_{0}^{\infty} e^{ikt} Re\langle W,  e^{i\frac{\Delta}{2}t} W\rangle  dt=-G(k+i0)$$ where the function $G$ is defined in ~\eqref{eq:difGk}.

Next, observe that part of the integrand, $Re\langle W,  e^{i\frac{\Delta}{2}t} W\rangle \int_{t}^{\infty} q_s\ ds,$ in the second term is integrable on $[0,\infty).$
\begin{equation}
\begin{array}{lll}
& &-\int_{0}^{\infty} e^{ikt} Re\langle W,  e^{i\frac{\Delta}{2}t} W\rangle \int_{t}^{\infty} q_s\ ds dt\\
& &\\
&=&\int_{0}^{\infty}e^{ikt} \partial_{t} [\int_{t}^{\infty} Re\langle W, e^{i\frac{\Delta}{2}  s}W\rangle \int_{s}^{\infty} q_{s_{1}}\ ds_{1} ds]\ dt\\
& &\\
&=&-  \int_{0}^{\infty} Re\langle W, e^{i\frac{\Delta}{2}  s}W\rangle \int_{s}^{\infty} q_{s_{1}}\ ds_{1} ds-ik\int_{0}^{\infty}e^{ikt} [\int_{t}^{\infty} Re\langle W, e^{i\frac{\Delta}{2}  s}W\rangle \int_{s}^{\infty} q_{s_{1}}\ ds_{1} ds]\ dt
\end{array}
\end{equation}
The second term on the right hand side vanishes when $k=0.$  This, together with the fact $G(0+i0)=0$ and ~\eqref{eq:PsiInte}, implies that $$-  \int_{0}^{\infty} Re\langle W, e^{i\frac{\Delta}{2}  s}W\rangle \int_{s}^{\infty} q_{s_{1}}\ ds_{1} ds=\Psi(0).$$

Collecting these estimates we find that
\begin{equation}\label{eq:asyPsi}
\Psi(k)-\Psi(0)=-G(k+i0)\int_{0}^{\infty} q_s\ ds+Y(k),
\end{equation} where the function $Y$ is defined by
$$Y(k):=-ik\int_{0}^{\infty}e^{ikt} [\int_{t}^{\infty} Re\langle W, e^{i\frac{\Delta}{2}  s}W\rangle \int_{s}^{\infty} q_{s_{1}}\ ds_{1} ds]\ dt.$$

Expressions ~\eqref{eq:asymp} show that
$$-\int_{t}^{\infty} Re\langle W, e^{i\frac{\Delta}{2}  s}W\rangle \int_{s}^{\infty} q_{s_{1}}\ ds_{1} ds
=C\frac{2}{\delta}\pi^{\frac{3}{2}}\frac{1}{\frac{1}{2}+\delta}t^{-\frac{1}{2}-\delta}+t^{-\frac{1}{2}-\delta-\epsilon},\ \text{as}\ t\rightarrow \infty$$ where the constants $C$ and $\epsilon$ are the same as those in Theorem ~\ref{THM:integrable}.

For positive, and small $k$, the function $Y$ in ~\eqref{eq:asyPsi} has the form
\begin{equation*}
Y(k)=C\frac{i }{\frac{1}{2}+\delta}\frac{2}{\delta}\pi^{\frac{3}{2}} k^{\frac{1}{2}+\delta} \int_{0}^{\infty} e^{it} t^{-\frac{1}{2}-\delta}\ dt+correction.
\end{equation*}
By deformation of the integration contour we find that
\begin{equation}\label{eq:contour}
\int_{0}^{\infty} e^{it} t^{-\frac{1}{2}-\delta}\ dt=\int_{0}^{i\infty} e^{it} t^{-\frac{1}{2}-\delta}\ dt=i^{\frac{1}{2}-\delta}\int_{0}^{\infty} e^{-t} t^{-\frac{1}{2}-\delta}\ dt.
\end{equation} Hence
\begin{equation}\label{eq:asymY}
Y(k)=-\frac{C}{\frac{1}{2}+\delta}\frac{2}{\delta} \pi^{\frac{3}{2}}  i^{-\frac{1}{2}-\delta} k^{\frac{1}{2}+\delta} \int_{0}^{\infty} e^{-t} t^{-\frac{1}{2}-\delta}\ dt+correction.
\end{equation}

For the term on the left hand side of ~\eqref{eq:consistency} we use the assumption in Theorem ~\ref{THM:integrable} to arrive at
\begin{equation*}
\begin{array}{lll}
 \int_{0}^{\infty} e^{ikt}q_t\ dt
&=& -\int_{0}^{\infty} e^{ikt} \partial_{t}\int_{t}^{\infty}q_s \ dsdt\\
& &\\
&=&\int_{0}^{\infty} q_s \ ds+ ik \int_{0}^{\infty} e^{ikt} \int_{t}^{\infty}q_s\ dsdt
\end{array}
\end{equation*}
By the assumption on $q$ in Theorem ~\ref{THM:integrable}, we find
\begin{equation}\label{eq:rightHand}
\begin{array}{lll}
\int_{0}^{\infty} e^{ikt}q_t\ dt
&=& \int_{0}^{\infty}q_s \ ds+C\frac{ik}{\delta} \int_{0}^{\infty}e^{ikt} t^{-\delta} \ dt+correction\\
& &\\
&=&\int_{0}^{\infty}q_s \ ds- C\frac{i^{-\delta}k^{\delta}}{\delta}\int_{0}^{\infty} e^{-s} s^{-\delta}\ ds+correction,
\end{array}
\end{equation} where in the last step we deform the contour of integration as in ~\eqref{eq:contour}.

We now return to ~\eqref{eq:consistency}. Observe that the term $\int_{0}^{\infty}q_s\ ds$ appears on both sides. Hence they cancel each other. Next, we compare the terms of order $k^{\delta}$. Recall that $G(k+i0)=(i-1)\pi^2 k^{\frac{1}{2}}+O(k)$ in ~\eqref{eq:Gki0}. This, together with ~\eqref{eq:asyPsi}, ~\eqref{eq:asymY} and ~\eqref{eq:rightHand}, implies that if ~\eqref{eq:consistency} holds then we must have that
$$(i-1) \int_{0}^{\infty}e^{-t} t^{-\delta}\ dt+i^{-\frac{1}{2}}\frac{1}{\sqrt{2\pi}(\frac{1}{2}+\delta)}\int_{0}^{\infty} e^{-t} t^{-\frac{1}{2}-\delta}\ dt
=0.$$

Using that $i^{\frac{1}{2}}=\frac{1}{\sqrt{2}}(1+i)$, we conclude that Theorem ~\ref{THM:integrable} holds.

\subsubsection{\large{Proof of Theorem ~\ref{THM:nonIntegrable}}}\label{sec:nonIntegrable}
We start with Equation ~\eqref{eq:consistency}. It is enough to consider the case $k>0.$

Recall the definition of $\Psi$ in ~\eqref{eq:difPsi}. By direct computation
\begin{equation}\label{eq:PsiNon}
\begin{array}{lll}
\Psi(k)&=&-\int_{0}^{\infty} e^{ikt} \partial_{t} \int_{t}^{\infty} Re\langle W, e^{i\frac{\Delta}{2}  s}W\rangle \int_{0}^{s} q_{s_1}\ ds_1\ dsdt\\
& &\\
&=&\int_{0}^{\infty} Re\langle W, e^{i\frac{\Delta}{2}  s}W\rangle \int_{0}^{s} q_{s_1}\ ds_1\ ds+\int_{0}^{\infty} ik e^{ikt} \int_{t}^{\infty}  Re\langle W, e^{i\frac{\Delta}{2}  s}W\rangle \int_{0}^{s} q_{s_1}\ ds_1\ dsdt
\end{array}
\end{equation}
We observe that the second term on the right hand side vanishes at $k=0$. Hence the first term, which is a constant, must be $\Psi(0).$

We now evaluate the second term. The assumption on $q$ in Theorem ~\ref{THM:nonIntegrable}
and the asymptotic form for $Re\langle W,  e^{i\frac{\Delta}{2}t} W\rangle$ in ~\eqref{eq:asymp} imply that
\begin{equation*}
Re\langle W,  e^{i\frac{\Delta}{2}t} W\rangle \int_{0}^{t}q_s\ ds=-\frac{2C}{\frac{1}{2}-\delta}  \pi^{\frac{3}{2}}  t^{-1-\delta}+O(t^{-1-\delta-\epsilon}),\ \text{as}\ t\rightarrow \infty;
\end{equation*} hence
\begin{equation}
\int_{t}^{\infty}Re\langle W, e^{i\frac{\Delta}{2}  s}W\rangle \int_{0}^{s}q_{s_1}\ ds_{1}ds=-\frac{2C}{\delta (\frac{1}{2}-\delta)}
\pi^{\frac{3}{2}} t^{-\delta}+O(t^{-\delta-\epsilon}).
\end{equation}
Plugging this into ~\eqref{eq:PsiNon} we find that, for $k>0$ small,
\begin{equation}\label{eq:Psik0}
\begin{array}{lll}
\Psi(k)-\Psi(0)
&=&-\frac{2iC}{\delta (\frac{1}{2}-\delta)} \pi^{\frac{3}{2}} k^{\delta} \int_{0}^{\infty} e^{it} t^{-\delta} \ dt+correction\\
& &\\
&=&\frac{2C}{\delta (\frac{1}{2}-\delta)} i^{-\delta} \pi^{\frac{3}{2}} k^{\delta} \int_{0}^{\infty} e^{-t} t^{-\delta} \ dt+correction.
\end{array}
\end{equation}

For the term on the left hand side of ~\eqref{eq:consistency}
\begin{equation}\label{eq:FouSec}
\begin{array}{lll}
\int_{0}^{\infty} e^{ikt} q_t \ dt
&=&  Ck^{-\frac{1}{2}+\delta} \int_{0}^{\infty} e^{it} t^{-\frac{1}{2}-\delta}\ dt+correction\\
& &\\
&=&C i^{\frac{1}{2}-\delta}  k^{-\frac{1}{2}+\delta}\int_{0}^{\infty} e^{-t} t^{-\frac{1}{2}-\delta}\ dt+correction
\end{array}
\end{equation}

Identity ~\eqref{eq:consistency} and ~\eqref{eq:Psik0}, ~\eqref{eq:FouSec}, ~\eqref{eq:Gki0} imply that
\begin{equation*}
\frac{1}{\delta (\frac{1}{2}-\delta)} \int_{0}^{\infty} e^{-t} t^{-\delta}\ dt-2\pi^{\frac{1}{2}} \int_{0}^{\infty} e^{-t} t^{-\frac{1}{2}-\delta}\ dt=0,
\end{equation*} which is Theorem ~\ref{THM:nonIntegrable}.



\end{document}